\newif\ifwithappendix
\newif\ifwithnotes
\newif\ifwithcomments
\newif\ifdraft
\newif\iftechnicalreport
\newenvironment{proofs}{%
  \proof}{\endproof}
\newenvironment{myproof}{\proof}{\strut\qed\endproof}
\newenvironment{myexample}{\example}{\hfill$\blacksquare$\endexample}
\crefname{section}{Sect.}{Sects.}
\Crefname{section}{Sect.}{Sects.}
\crefname{definition}{Def.}{Defs.}
\Crefname{definition}{Def.}{Defs.}
\crefname{theorem}{Thm.}{Thms.}
\Crefname{theorem}{Thm.}{Thms.}
\crefname{lemma}{Lem.}{Lems.}
\Crefname{lemma}{Lem.}{Lems.}
\crefname{corollary}{Cor.}{Cors.}
\Crefname{corollary}{Cor.}{Cors.}
\crefname{algorithm}{Alg.}{Algs.}
\Crefname{algorithm}{Alg.}{Algs.}
\crefname{figure}{Fig.}{Figs.}
\Crefname{figure}{Fig.}{Figs.}
\crefname{example}{Ex.}{Exs.}
\Crefname{example}{Ex.}{Exs.}
\crefname{table}{Tbl.}{Tbls.}
\Crefname{table}{Tbl.}{Tbls.}
\newcommand{\lfpp}[0]{\mu}
\newcommand{\gfpp}[0]{\nu}
\DeclarePairedDelimiter\pars{\lparen}{\rparen}
\DeclarePairedDelimiter\tuple{\langle}{\rangle}
\DeclarePairedDelimiter\abs{\lvert}{\rvert}
\DeclarePairedDelimiter\Set{\{}{\}}
\DeclarePairedDelimiter\bracs[]
\DeclarePairedDelimiter\sem\llbracket\rrbracket
\newcommand{\powerset}[1]{\mathcal P\pars*{#1}}
\newcommand{\defeq}{\vcentcolon=}
\newcommand{\defequiv}{\overset{\mathit{def}}{\iff}}
\newcommand{\assign}[0]{\leftarrow}
\newcommand{\mdpc}[1]{#1}
\newcommand{\mdp}[0]{{\mdpc M}}
\newcommand{\graph}[0]{\mathcal{G}}
\newcommand{\verts}[0]{V}
\newcommand{\edges}[0]{E}
\newcommand{\mdpgraph}[0]{{\mdp}}
\newcommand{\Reach}[0]{\mathrm{Reach}}
\newcommand{\Post}[0]{\mathrm{Post}}
\newcommand{\Pre}[0]{\mathrm{Pre}}
\newcommand{\shortcut}{{\graph_\sd}}
\newcommand{\omdpc}[1]{\mathcal{#1}}
\newcommand{\io}[0]{\mathsf{IO}}
\newcommand{\seqcomp}[0]{\fatsemi}
\newcommand{\sumcomp}[0]{\oplus}
\newcommand{\sdc}[1]{\mathbb{#1}}
\newcommand{\sd}[0]{{\sdc D}}
\newcommand{\Path}[0]{\pi}
\newcommand{\sol}{\mathrm{S}}
\newcommand{\solarg}[1]{\sol_{#1}}
\newcommand{\conf}[1]{\mathcal{#1}}
\newcommand{\Exits}[2]{\mathrm{Exits}_{#1}\pars*{#2}}
\newcommand{\CExits}[3]{\mathrm{Exits}^{#1}_{#2}\pars*{#3}}
\newcommand{\strat}{\sigma}
\newcommand{\strats}{\Sigma}
\newcommand{\nolosestrats}{\strats_{\mathit{NL}}}
\newcommand{\constMDP}[1]{\mathsf{c}_{#1}}
\newcommand{\typemdp}[1]{\mathrm{arity}(#1)}
\newcommand{\en}{\mathsf{I}}
\newcommand{\ex}{\mathsf{O}}
\newcommand{\enlet}{\mathsf{i}}
\newcommand{\exlet}{\mathsf{o}}
\newcommand{\enarg}[1]{\enlet_{#1}}
\newcommand{\exarg}[1]{\exlet_{#1}}
\newcommand{\componentsent}{\mathrm{CPI}}
\newcommand{\solstrat}{\mathrm{Effect}}
\newcommand{\colorpar}[3]{\colorbox{#1}{\parbox{#2}{#3}}}
\newcommand{\marginremark}[3]{\marginpar{\colorpar{#2}{\linewidth}{\color{#1}#3}}}
    \newcommand{\marck}[1]{\textcolor{green!70!black}{\textbf{MvdV}: #1}}
    \newcommand{\mvdv}[1]{\marginremark{white}{green!70!black}{\scriptsize{[MvdV]~ #1}}}
    \newcommand{\kazuki}[1]{\textcolor{blue!70!black}{\textbf{KW}: #1}}
    \newcommand{\kw}[1]{\marginremark{white}{blue!70!black}{\scriptsize{[KW]~ #1}}}
    \newcommand{\sebastian}[1]{\textcolor{purple!70!black}{\textbf{Sebastian}: #1}}
    \newcommand{\sj}[1]{\marginremark{white}{purple!70!black}{\scriptsize{[SJ]~ #1}}}
    \newcommand{\ichiro}[1]{\textcolor{cyan!70!black}{\textbf{Ichiro}: #1}}
    \newcommand{\ih}[1]{\marginremark{white}{cyan!70!black}{\scriptsize{[IH]~ #1}}}
    \newcommand{\marck}[1]{}
    \newcommand{\mvdv}[1]{}
    \newcommand{\kazuki}[1]{}
    \newcommand{\kw}[1]{}
    \newcommand{\sebastian}[1]{}
    \newcommand{\sj}[1]{}
    \newcommand{\ichiro}[1]{}
    \newcommand{\ih}[1]{}
\newcommand{\nat}{{\mathbb N}}
\newcommand{\tr}{\mathrm{tr}}
\newcommand{\midvert}{\;\middle\vert\;}
\newcommand{\buchiobj}[1]{\always\eventually #1}
\newcommand{\buchiverts}{B}
\newcommand{\win}[2]{\mathrm{Win}_{#1}^{#2}}
\newcommand{\always}{\square}
\newcommand{\eventually}{\lozenge}
\newcommand{\mappingfunc}[2]{\mathrm{Conn}^{#1}_{#2}}
\newcommand{\eqcomment}[1]{\shortintertext{\hfill\scriptsize\itshape $(*$ {#1} $*)$ }}
\newcommand{\svdots}{%
  \vbox{\fontsize{\sf@size}{\sf@size pt}\linespread{0.7}\selectfont
    \kern0.5\baselineskip
    \hbox{.}\hbox{.}\hbox{.}%
    \kern0.1\baselineskip
  }%
}
\newcommand{\refinementalgtext}{StratRef}
\newcommand{\canreachtext}{CanReach}
\newcommand{\canreach}{\textsc{\canreachtext}}
\newcommand\scalemath[2]{\scalebox{#1}{\mbox{\ensuremath{\displaystyle #2}}}}
\tikzset{
	omdp/.style={ draw=black, thick, rounded corners=5, },
	omdpA/.style={ fill=red!50, },
	omdpB/.style={ fill=green!50, },
	omdpC/.style={ fill=blue!50, },
	omdpD/.style={ fill=yellow!50, },
    partial ellipse/.style args={#1:#2:#3}{
        insert path={+ (#1:#3) arc (#1:#2:#3)}
    },
	arr/.style={ thick, {Circle[open,fill=white]}-{Stealth} },
	arrr/.style={ thick, {Stealth}-{Circle[open,fill=white]} },
	oarr/.style={ draw=none, {Circle[open,fill=white]}- },
	oarrr/.style={ draw=none, -{Circle[open,fill=white]} },
	Arr/.style={ thick, {-{Stealth[length=1.5mm]}} },
}
\tikzset{invclip/.style={clip,insert path={{[reset cm]
      (-16383.99999pt,-16383.99999pt) rectangle (16383.99999pt,16383.99999pt)
    }}}}
\renewcommand{\orcidID}[1]{}
\title{Compositional Verification of Almost-Sure~Büchi~Objectives~in~MDPs%
\thanks{%
I.H. was supported by ERATO HASUO Metamathematics for Systems Design Project (JST, No.\ JPMJER1603).
I.H. and M.V. were supported by the ASPIRE grant (JST, No.\ JPMJAP2301).
K.W. was supported by ACT-X grant (JST, No.\ JPMJAX23CU).
}%
}
\author{Marck van der Vegt\inst{1}{\tiny\orcidID{0000-0003-2451-5466}} \and Kazuki Watanabe\inst{2}{\tiny\orcidID{0000-0002-4167-3370}} \and\\ Ichiro Hasuo\inst{2}{\tiny\orcidID{0000-0002-8300-4650}} \and Sebastian Junges\inst{1}{\tiny\orcidID{0000-0003-0978-8466}}}
\authorrunning{M. van der Vegt et al.}
\institute{
    Radboud University, The Netherlands\\
    \email{\{marck.vandervegt,sebastian.junges\}@ru.nl}
    \and
    National Institute of Informatics, Japan\\
    \email{\{kazukiwatanabe,hasuo\}@nii.ac.jp}
}
\begin{document}

\maketitle

\begin{abstract}
This paper studies the verification of almost-sure Büchi objectives in MDPs with a known, compositional structure based on string diagrams.
In particular, we ask whether there is a strategy that ensures that a Büchi objective is almost-surely satisfied. 
We first show that proper exit sets --- the sets of exits that can be reached within a component without losing locally --- together with the reachability of a Büchi state are a sufficient and necessary statistic for the compositional verification of almost-sure Büchi objectives.
The number of proper exit sets may grow exponentially in the number of exits.
We define two algorithms:
(1) A straightforward bottom-up algorithm that computes this statistic in a recursive manner to obtain the verification result of the entire string diagram and
(2) a polynomial-time iterative algorithm which avoids computing all proper exit sets by performing iterative strategy refinement.
\end{abstract}

\section{Introduction}
Markov decision processes (MDPs) are a ubiquitous model to describe systems with uncertain action outcomes. Their efficient verification is hindered by typical state space explosion problems, see e.g., \cite{DBLP:journals/corr/abs-2405-13583,DBLP:conf/tacas/HartmannsJQW23}.
To overcome this state space explosion, a promising research direction to overcome scalability concerns is to explicitly capture and utilize the (compositional) structure of the MDP~\cite{ClarkeLM89,HauskrechtMKDB98,NearyVCT22}.

Concretely, we consider sequentially composed MDPs (potentially with loops), that is, we consider MDPs whose state space is partitioned into \emph{components}.
Specifically, we study the setting in which these components and their interaction is specified as string diagrams~\cite{DBLP:conf/cav/WatanabeEAH23}:
That is, components are MDPs with entrance and exit states and their composition is defined by a number of operations that glue these components together to obtain the \emph{monolithic MDP} (see e.g., \cref{fig:example_omdp}).
The key concept in compositional verification is then to verify objectives without constructing the monolithic MDP, but rather, by (iteratively or exhaustively) analyzing the components and combining the result of their analysis.
This has been done for quantitative reachability in~\cite{DBLP:conf/cav/WatanabeEAH23,DBLP:conf/cav/WatanabeVJH24}, but not for almost-sure Büchi objectives.
Although the analysis of Büchi objectives on a monolithic MDP can be done in polynomial time in the number of states of the MDP, the size of the MDPs is often prohibitively large.
Compositional approaches avoid the necessity to handle the monolithic MDP.

The key question this paper answers affirmatively is whether such a compositional approach can work for almost-sure Büchi objectives.
We study MDPs with a set of marked states, called the \emph{Büchi states}, and algorithmically answer whether given an entrance it is possible to \emph{win} the almost-sure Büchi objective, that is, can we visit a Büchi state infinitely often with probability one.

In this paper, we first discuss what information is necessary to sufficiently describe the behavior of the components \emph{exhaustively}.
We call this information the \emph{solution}. %
The solution should include for each entrance the set of exits that can be reached using strategies that we call \emph{no-lose}, that is, a strategy which avoid surely violating the Büchi objective in the component.
We call the set of reachable exits the \emph{proper exit set}.
Intuitively, after playing a no-lose strategy, we can reach a proper exit which is connected to an entrance of another component, after which we play another no-lose strategy.
However, constantly playing no-lose strategies is not sufficient to satisfy the Büchi objective.
In the solution, beside the proper exit set we must therefore also include whether there is a positive probability of reaching a Büchi state.
If for a given entrance, there exists a no-lose strategy that can reach a Büchi state, then reaching this entrance infinitely often implies satisfaction of the Büchi objective.
We call the combination of the proper exit set and the positive reachability of a Büchi state the \emph{effect} of the strategy and it is \emph{the} essential information for our \emph{compositional solution}.

The characterization of the problem in terms of a compositional solution gives rise to a natural (exhaustive) algorithm in the spirit of \cite{DBLP:conf/cav/WatanabeEAH23,DBLP:conf/tacas/WatanabeVHRJ24}:
Consider the abstract syntax tree of the string diagram, first analyze the leafs of this tree exhaustively by computing its solution, and then compose the analysis results.
The exhaustive analysis is exponential in the number of exits, but polynomial in the number of components.
Compared to the quantitative reachability case~\cite{DBLP:conf/cav/WatanabeEAH23,DBLP:conf/tacas/WatanabeVHRJ24}, this is already good news, because in the quantitative case, the computation can be superpolynomial even with a constant number of exits per component.

Our second (iterative) algorithm avoids computing all possible effects by taking the context of each component into account, inspired by a compositional formulation of value iteration in \cite{DBLP:conf/cav/WatanabeVJH24}.
In contrast to existing work, this algorithm can limit the amount of iterations necessary by determining that certain entrances may not be reached by any winning strategy. %
In particular, we start with a strategy that reaches as many entrances as possible, and refine this strategy by pruning entrances from which we cannot avoid losing.
Our iterative algorithm mimics the \emph{classical Büchi algorithm}~\cite{DBLP:conf/lics/AlfaroH00,DBLP:phd/us/Alfaro97,DBLP:conf/csl/ChatterjeeJH03} on the level of components.
We show that the sets of possible effects computed by the exhaustive algorithm exhibit a join semilattice structure, which the iterative algorithm uses to prevent computing all effects, yielding a polynomial time algorithm.

In conclusion,
the contributions of this work are a characterization of the necessary and sufficient information about the behavior of a component in order to prove satisfaction of a Büchi objective globally,
an exponential-time bottom-up algorithm that computes this information on individual components and composes them according to the structure of the string diagram,
and an iterative approach that avoids the exponential influence of the number of exits. 

\section{Preliminaries}\label{sec:prelims}
For $n \in \nat$, $[n]$ denotes the set $\{1,2, \dots, n \}$.
We denote the disjoint union by $\uplus$.
We use $\gfpp$ and $\lfpp$ to denote the greatest and least fixpoint operators, respectively.

A (directed) \emph{graph} $\graph = \tuple{\verts, \edges}$ is a tuple with finitely many vertices $\verts$ and edges $\edges \subseteq \verts \times \verts$.
\emph{The successors} of $x \in \verts$ are $\Post(x) \defeq \{ y \in \verts \mid \tuple{x, y} \in \edges \}$. Similarly, $\Pre(y) \defeq \{ x \in \verts \mid \tuple{x, y} \in \edges \}$ are the \emph{predecessors} of $y \in \verts$.
We lift $\Post$ and $\Pre$ to sets: $\Post(X) \defeq \bigcup_{x \in X} \Post(x)$, $\Pre(X) \defeq \bigcup_{x \in X} \Pre(x)$.

We are interested in \emph{almost-sure acceptance} in \emph{Markov decision processes}~\cite{DBLP:books/daglib/0020348,DBLP:conf/sfm/ForejtKNP11} (MDPs). 
Then, the amplitude of probabilities is irrelevant---a non-zero probability, however small it is, can break almost-sure acceptance. Therefore, we study the following abstraction of MDPs (see, e.g.~\cite{DBLP:conf/soda/ChatterjeeH11}), which preserves all results.

\begin{definition}[MDP graph and strategy]\label{def:mdp_and_strategy}
An \emph{MDP graph} $\mdpgraph \defeq \tuple{\verts_1, \verts_P, \edges}$ is a graph $\tuple{\verts, \edges}$ where $\verts \defeq \verts_1 \uplus \verts_P$. $\verts_1$ are the \emph{player-1 vertices} and $\verts_P$ are the \emph{probabilistic vertices}.
A \emph{strategy} $\strat\colon \verts_1 \to \powerset{\verts_P}$ maps player-1 vertices to successor vertices, such that for all vertices $v \in \verts_1$: $\strat(v) \subseteq \Post(v)$.
\end{definition}
In the following we simply refer to MDP graphs as MDPs.
We denote the pointwise union of strategies $\strat_1, \strat_2$ by $\strat_1 \cup \strat_2$, with $(\strat_1 \cup \strat_2)(v) \defeq \strat_1(v) \cup \strat_2(v)$ for all $v \in \verts_1$.
We note that our strategies are memoryless randomized strategies. These are sufficient for satisfying almost-sure Büchi objectives~\cite{DBLP:conf/lics/AlfaroH00}.
We assume w.l.o.g. that $\verts_1$ and $\verts_P$ vertices \emph{alternate}, i.e., $\verts_1$ vertices are only reachable from $\verts_P$ vertices and vice versa.

\begin{definition}[reachable vertices]\label{def:reachable_vertices}
Let $\Post_\strat(x)$ be equal to $\strat(x)$ if $x \in \verts_1$ and $\Post(x)$ otherwise.
We define the \emph{vertices reachable from $X \subseteq \verts$ under $\strat$} as 
$\Reach_\strat(X) \defeq \bigcup_{n=0}^\infty X_n$, where $X_0 \defeq X$ and $X_{n+1} \defeq X_n \cup \Post_\strat(X_n)$.
\end{definition}
The general notion of Büchi acceptance is over \emph{infinite traces}.  
When one restricts to almost-sure acceptance (as we do), there is an alternative characterization~\cite[Thm.~10.29, Cor.~10.30]{DBLP:books/daglib/0020348}, namely that a vertex $i$ is \emph{almost-sure accepting} iff there exists a strategy $\strat$ such that for every vertex $v$ reachable from $i$ under $\strat$, a Büchi vertex $v'$ is reachable from $v$ under $\strat$.

This characterization justifies the following definition.
\begin{definition}[Büchi objective]\label{def:buchi}
A strategy $\strat$ \emph{satisfies the Büchi objective} $\buchiverts \subseteq \verts_P$ for vertex $i \in \verts$ if $\Reach_\strat(v) \cap \buchiverts \ne \emptyset$ for all $v$ in $\Reach_\strat(i)$.
We denote $i, \strat \models \buchiobj\buchiverts$ or simply $i \models \buchiobj\buchiverts$ if such a strategy exists.  
We denote the \emph{winning region} by $\win{\buchiobj\buchiverts}{} \defeq \{ v \in \verts_1 \mid v \models \buchiobj\buchiverts \}$.
\end{definition}
For ease of presentation, we assume w.l.o.g. that all Büchi vertices are probabilistic vertices.
We briefly recall the \emph{classical Büchi algorithm}~\cite{DBLP:conf/lics/AlfaroH00,DBLP:phd/us/Alfaro97,DBLP:conf/csl/ChatterjeeJH03}.

\begin{definition}[Büchi operator]\label{def:buchi_operator}
Let $F(X, Y) \defeq \{ v \in \verts_1 \mid \exists v' \in \Post(v)\colon\\ \Post(v') \subseteq Y \text{ and } v' \in \buchiverts \vee v' \in \Pre(X) \}$, the \emph{Büchi operator}.
\end{definition}%
We can compute the winning region $\win{\buchiobj\buchiverts}{}$ using the Büchi operator:
\begin{restatable}{lemma}{lembuchi}\label{lem:buchi}
$\nu Y.\, \mu X.\, F(X, Y) = \win{\buchiobj\buchiverts}{}$
\end{restatable}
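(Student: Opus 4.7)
The plan is to prove the two inclusions separately, after first observing that $F$ is monotone in both arguments (more successors of $v'$ being allowed to land in a larger $Y$, and more predecessors $v'$ being available when $X$ grows), so the nested fixpoints are well-defined by Knaster--Tarski. Let $W^\star \defeq \nu Y.\, \mu X.\, F(X,Y)$.

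For the inclusion $W^\star \subseteq \win{\buchiobj\buchiverts}{}$, I would construct an explicit memoryless strategy from the fixpoint structure. Since $W^\star = \mu X.\, F(X, W^\star)$, for every $v \in W^\star \cap \verts_1$ I pick a witness $v' \in \Post(v)$ (existing by definition of $F$) with $\Post(v') \subseteq W^\star$ and with $v' \in \buchiverts$ or $v'$ having some successor in an earlier stage of the inner $\mu$-iteration; define $\strat(v) \defeq \{v'\}$. The first condition ensures $\Reach_\strat(v) \subseteq W^\star$, so the strategy is trapping. To apply the almost-sure characterization recalled before \cref{def:buchi}, I must show that every $u \in \Reach_\strat(v)$ can reach $\buchiverts$ under $\strat$. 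This follows from the least-fixpoint structure of $\mu X.\, F(X,W^\star)$: each $u \in W^\star$ has a rank (the smallest iteration index at which it appears), and the witness $v'$ either is Büchi (rank-$1$ case) or has a probabilistic successor of strictly smaller rank, giving a finite path to $\buchiverts$ under $\strat$ by induction on rank.

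For the inclusion $\win{\buchiobj\buchiverts}{} \subseteq W^\star$, I fix $i \in \win{\buchiobj\buchiverts}{}$ with a winning strategy $\tau$ and let $Y \defeq \Reach_\tau(i) \cap \verts_1$ together with the reachable probabilistic vertices. The goal is to show $Y$ is a post-fixpoint of the outer greatest-fixpoint operator, i.e., $Y \subseteq \mu X.\, F(X, Y)$, which by $\nu$-coinduction yields $Y \subseteq W^\star$ and in particular $i \in W^\star$. To prove $Y \subseteq \mu X.\, F(X, Y)$, for each $u \in Y \cap \verts_1$ let $d(u)$ be the length of the shortest $\tau$-path from $u$ to a Büchi vertex (finite by the almost-sure characterization applied to $\tau$ from $i$). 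I then prove by induction on $d(u)$ that $u$ lies in the $d(u)$-th iterate of $\mu X.\, F(X,Y)$: the shortest-path successor $v' \in \tau(u)$ satisfies $\Post(v') \subseteq Y$ because $Y$ is $\tau$-closed under reachability, and either $v' \in \buchiverts$ (base case) or some $v_1 \in \Post(v')$ has $d(v_1) < d(u)$, placing $v_1$ in an earlier iterate and hence $v' \in \Pre$ of that iterate.

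The main obstacle I anticipate is the bookkeeping in the $\supseteq$ direction: correctly setting up $Y$ so that $\Post(v') \subseteq Y$ holds for every probabilistic $v'$ reached under $\tau$ (since the random environment, not $\tau$, chooses among these), and matching the shortest-path induction on $\tau$ to the iteration of the inner $\mu X$. Once these are aligned, both inclusions follow cleanly from the standard alternation of $\mu$ (finite progress to $\buchiverts$) and $\nu$ (infinite repetition within the trapping set).
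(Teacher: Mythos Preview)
Your plan is correct and would yield a complete, self-contained proof. One minor point: in the $\subseteq$ direction you write that the witness $v'$ ``has a probabilistic successor of strictly smaller rank,'' but by alternation $v'$ is itself probabilistic and its successors are player-1 vertices; the rank-decrease argument is unaffected. Likewise, in the $\supseteq$ direction the phrase ``together with the reachable probabilistic vertices'' is superfluous, since $F$ only ever compares $\Post(v')$ against $Y$ and $\Post(v') \subseteq \verts_1$ by alternation, so taking $Y = \Reach_\tau(i) \cap \verts_1$ suffices.

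However, your approach differs substantially from the paper's. The paper does not prove the lemma from first principles; instead it imports the fixpoint characterization of almost-sure B\"uchi winning from~\cite{DBLP:conf/lics/AlfaroH00} (their Eq.~3, phrased in terms of $\mathrm{Apre}_1$ and $\mathrm{Pre}_1$) and then performs two purely syntactic adaptations: (i)~translating the ``state plus action'' formulation into the bipartite $\verts_1/\verts_P$ graph, and (ii)~shifting the B\"uchi test from the player-1 vertex $v$ to its probabilistic successor $v'$, since in this paper B\"uchi vertices are probabilistic. No new soundness or completeness argument is carried out.

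Your direct rank-and-trap argument is more informative and stands on its own, which is valuable if one wants the paper to be self-contained; the paper's route is shorter and makes explicit the connection to the classical formulation, at the cost of delegating the actual correctness to the cited reference.
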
\noindent
For a fixed $Y$, a vertex $v$ is in the `inner fixpoint' $\mu X.\, F(X, Y)$ iff it can reach a Büchi vertex while staying in $Y$.
Then, a vertex $v$ is in the `outer fixpoint' $\nu Y.\, \mu X.\, F(X, Y)$ iff all vertices reachable from $v$ can reach a Büchi vertex, which is exactly the winning region for Büchi acceptance.
The classical Büchi algorithm computes the above fixpoint using Kleene iteration. %
Initially, we fix $Y$ to $\verts_1$ and compute the least fixpoint of $F$ by iteratively applying $F$ to $\emptyset$, i.e., we compute the sequence $F(\emptyset, Y), F(F(\emptyset, Y), Y), \cdots$.
The obtained fixpoint is the new value of $Y$.
The process is repeated until we obtain a fixpoint of $Y$, in which case $Y$ is equal to the winning region $\win{\buchiobj\buchiverts}{}$. %

\ifwithappendix
All proofs are given in the appendix.
\else
All proofs are given in the appendix of the technical report~\cite{technicalreport}.
\fi

\subsection{Compositional MDPs}\label{sec:compositional_mecs}
We define \emph{rightward open MDP} which have \emph{open ends} via which they compose. 

\begin{definition}[roMDP]
A \emph{rightward open MDP} $\omdpc A = \tuple{\mdpgraph, \io}$ is a pair consisting of an MDP $\mdpgraph$ with \emph{open ends} $\io = \tuple{\en, \ex}$, where
$\en, \ex \subseteq \verts_1$ are pairwise disjoint and totally ordered.
The vertices in $\en$ and $\ex$ are the \emph{entrances} and \emph{exits} respectively.
We require that $\Pre(\en) = \Post(\ex) = \emptyset$. %
\end{definition}
We denote the \emph{arity} of $\omdpc A$ as $\typemdp{\omdpc A}\colon m \to n$ meaning that $\abs I = m$ and $\abs O = n$.
In the following, we will fix arbitrary roMDPs $\omdpc A$ and $\omdpc B$.
We often use superscript to refer to elements of an roMDP $\omdpc A$, e.g., $\en^{\omdpc A}$ denotes the entrances of $\omdpc A$.

\begin{figure}[t]
    \centering
    \input{tikz/example1}
    \caption{$\omdpc C \defeq \tr\bigl(\tr(\omdpc A \seqcomp \omdpc B)\bigr)$ and its operational semantics.}
    \label{fig:example_omdp}
\end{figure}

\begin{definition}[string diagram]
A \emph{string diagram $\sd$ of roMDPs} is a term adhering to $\sd \defeq \mathsf{c}_{\omdpc A} \mid \sd \seqcomp \sd \mid \sd \oplus \sd \mid \tr(\sd)$, where $\constMDP{\omdpc A}$ designates an roMDP $\omdpc A$. 
\end{definition}
In the definition of a string diagram, $\seqcomp$ denotes \emph{sequential composition}, $\sumcomp$ denotes \emph{sum composition} and $\tr$ denotes \emph{trace}.
In the following we fix $\sd$ to be an arbitrary string diagram of roMDPs.
\Cref{fig:example_omdp} demonstrates the graphical intuition behind trace and sequential composition.
In the figure, the clouds represent a large number of vertices and transitions. 
We now define these operations. 

\begin{definition}[$\omdpc A \seqcomp \omdpc B$]\label{def:seqROMDP}
Let $\omdpc A$, $\omdpc B$ be roMDPs such that $\typemdp{\omdpc A}\colon m \to l$ and $\typemdp{\omdpc B}\colon l \to n$.
Their \emph{sequential composition} $\omdpc A \seqcomp \omdpc B$ is the roMDP $\tuple{\mdpgraph, \io}$ where
$\io \defeq \tuple{\en^{\omdpc A}, \ex^{\omdpc B}}$,
$\mdpgraph \defeq \tuple*{(\verts_1^{\omdpc A} \uplus \verts_1^{\omdpc B}) \setminus \ex^{\omdpc A}, \verts_P^{\omdpc A} \uplus \verts_P^{\omdpc B}, \edges}$ 
and $\edges$ satisfies the following: $\tuple{v, v'} \in \edges$ iff either:
(1) $\tuple{v, v'} \in \edges^{\omdpc A} \wedge v' \in \verts^{\omdpc A} \setminus \ex^{\omdpc A}$,
(2) $\tuple{v, \exarg i^{\omdpc A}} \in \edges^{\omdpc A} \wedge v' = \enarg i^{\omdpc B}$ for some $i \in [l]$,
(3) $\tuple{v, v'} \in \edges^{\omdpc B}$.
\end{definition}

\begin{definition}[$\omdpc A \sumcomp \omdpc B$]\label{def:sumROMDP}
\label{def:sumOMDP}
    The \emph{sum} $\omdpc A \oplus \omdpc B$ is the roMDP $\tuple{\mdpgraph, \io}$ where
    $\io \defeq \tuple{\en^{\omdpc A}\uplus \en^{\omdpc B}, \ex^{\omdpc A}\uplus \ex^{\omdpc B}}$,
    $\mdpgraph \defeq \tuple{\verts_1^{\omdpc A} \uplus \verts_1^{\omdpc B}, \verts_P^{\omdpc A} \uplus \verts_P^{\omdpc B}, \edges}$,
    and $\edges$ satisfies
    $\Post(v) = \Post^{\omdpc D}(v)$ for $\omdpc D \in \{\omdpc A, \omdpc B\}$ if $v \in \verts^{\omdpc D}$.
\end{definition}

\begin{definition}[$\tr(\omdpc A)$]\label{def:traceROMDP}
Let $\omdpc A$ be an roMDP such that $\typemdp{\omdpc A}\colon m+1 \to n+1$.
The \emph{trace} $\tr(\omdpc A)$ is the roMDP $\tuple{\mdpgraph, \io}$ where
$\io \defeq \tuple*{\en^{\omdpc A} \setminus \Set*{ \enarg{m+1}^{\omdpc A} }, \ex^{\omdpc A} \setminus \Set*{ \exarg{n+1}^{\omdpc A} }}$,
$\mdpgraph \defeq \tuple*{\verts_1^{\omdpc A}, \verts_P^{\omdpc A} \uplus \{*\}, \edges}$ and
$\edges \defeq \edges^{\omdpc A} \cup \Set*{ \tuple*{\exarg{n+1}^{\omdpc A}, *}, \tuple*{*, \enarg{m+1}^{\omdpc A}} }$.
\end{definition}
The trace operation introduces two edges that create a loop from the last exit to the last entrance. %
The `$*$' vertex is added to maintain vertex alternation.

\begin{definition}[operational semantics $\sem\sd$]\label{def:opsem}
The \emph{operational semantics} $\sem\sd$ is the roMDP inductively defined by \cref{def:seqROMDP,def:sumOMDP,def:traceROMDP}, with the base case $\sem{\constMDP{\omdpc A}}=\omdpc A$.
We assume that every string diagram $\sd$ has matching arities, w.r.t. the definitions of the compositional operators.
\end{definition}
We call $\sem\sd$ the \emph{monolithic roMDP}.
We will now formally state the main problem. %
\begin{mdframed}
\textbf{Main Problem Statement:} %
Given entrance $i \in \en^{\sem\sd}$, does $i \models^{\sem\sd} \buchiobj\buchiverts$?
\end{mdframed}

\section{Compositional Solution}\label{sec:bottom_up}
A natural idea, following~\cite{DBLP:conf/cav/WatanabeEAH23,DBLP:conf/tacas/WatanabeVHRJ24}, is to define a compositional algorithm which takes the string diagram, computes a solution for every leaf of the string diagram, and then composes these solutions into an answer on the complete string diagram. 
This section develops exactly such an approach. 
We first define a (local) solution:
This solution captures the necessary and sufficient information that needs to be extracted from an roMDP.
Next, we define the compositional operations on these solutions so that they can be composed into a final answer. 

We continue by introducing the necessary definitions for our compositional solution, which are \emph{no-lose strategies} and their \emph{effect}.
Intuitively, no-lose strategies represent \emph{local strategies} that can be part of a \emph{globally winning strategy} in $\sem\sd$.
\begin{definition}[no-lose strategy]\label{def:no_lose}
Let $\strat$ be a strategy in $\omdpc A$.
We say that $\strat$ is a \emph{no-lose strategy} from $i \in \verts$ if
for every $v \in \Reach_\strat(i)$ either $\Reach_\strat(v) \cap \ex \ne \emptyset$ or $v,\strat \models \buchiobj\buchiverts$.
Additionally we require that $\strat$ is \emph{$i$-local}, that is, $\strat(v) = \emptyset$ for $v \not\in \Reach_\strat(i)$.
We denote the set of no-lose strategies from $i$ by $\nolosestrats[i]$.
\end{definition}
The intuition of a no-lose strategy is as follows.
When playing such a strategy, we almost-surely either (1)~reach some exit or (2)~satisfy the Büchi condition.
A strategy that is not no-lose is \emph{losing}.
We will use the following characterization:
\begin{restatable}{lemma}{lemnoloseequivalentdef}\label{lem:no_lose_equivalent_def}
Strategy $\strat$ is no-lose from $i \in \verts$ iff for all $v$ in $\Reach_\strat(i)$ we have that $\Reach_\strat(v) \cap (\ex \cup B) \ne \emptyset$.
\end{restatable}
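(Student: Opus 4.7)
The plan is to prove the two directions of the equivalence separately, with the backward direction being the substantive one and leaning on the almost-sure Büchi characterization already recalled in the preliminaries (the one attributed to Theorem 10.29/Corollary 10.30).

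For the forward direction, I would fix $v \in \Reach_\strat(i)$ and split on the two cases from \cref{def:no_lose}. If $\Reach_\strat(v) \cap \ex \neq \emptyset$ we are done immediately. Otherwise $v, \strat \models \buchiobj\buchiverts$, and by the characterization of almost-sure Büchi from $v$, there is at least one Büchi vertex reachable from $v$ under $\strat$, so $\Reach_\strat(v) \cap B \neq \emptyset$. Either way $\Reach_\strat(v) \cap (\ex \cup B) \neq \emptyset$.

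For the backward direction, fix again $v \in \Reach_\strat(i)$. If $\Reach_\strat(v) \cap \ex \neq \emptyset$ we are done, so assume $\Reach_\strat(v) \cap \ex = \emptyset$; I then need to show $v, \strat \models \buchiobj\buchiverts$. The key monotonicity observation is that for every $v' \in \Reach_\strat(v)$ we have $\Reach_\strat(v') \subseteq \Reach_\strat(v)$, so $\Reach_\strat(v') \cap \ex = \emptyset$ as well. Applying the hypothesis at $v'$ (which lies in $\Reach_\strat(i)$ since $v$ does) yields $\Reach_\strat(v') \cap (\ex \cup B) \neq \emptyset$, and combined with the previous sentence this forces $\Reach_\strat(v') \cap B \neq \emptyset$. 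Thus every vertex reachable from $v$ under $\strat$ can reach a Büchi vertex under $\strat$, which is precisely the characterization of $v, \strat \models \buchiobj\buchiverts$ invoked just before \cref{def:buchi}.

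The only real subtlety I expect is a bookkeeping step: one must be careful that $v' \in \Reach_\strat(v) \subseteq \Reach_\strat(i)$ so that the hypothesis is applicable at $v'$, and that $i$-locality of $\strat$ does not cause spurious issues (it does not, since locality only forces $\strat$ to be empty outside $\Reach_\strat(i)$, and all vertices we consider lie inside $\Reach_\strat(i)$). Modulo this, the argument is a clean two-line invocation of the monotonicity of $\Reach_\strat(\cdot)$ followed by the almost-sure Büchi characterization, so no fixpoint reasoning beyond what the preliminaries already supply is needed.
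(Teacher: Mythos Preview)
Your proof is correct and uses essentially the same ingredients as the paper's: unfolding \cref{def:no_lose} and \cref{def:buchi}, together with the monotonicity $\Reach_\strat(v') \subseteq \Reach_\strat(v)$ for $v' \in \Reach_\strat(v)$. The paper packages the argument as a single chain of logical equivalences (factoring the monotonicity step into an auxiliary lemma that collapses a nested $\forall v\,\forall v'$ into a single quantifier), while you argue the two implications separately, but the mathematical content is identical.
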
\noindent
The lemma implies that no-lose strategies can be computed as strategies that satisfy the Büchi objective $\buchiobj{(\ex \cup \buchiverts)}$, using off-the-shelf algorithms (e.g.,~\cite{DBLP:conf/lics/AlfaroH00,DBLP:conf/csl/ChatterjeeJH03}). %

Towards determining whether a no-lose strategy $\strat$ is indeed part of a globally winning strategy, we consider its \emph{proper exit set} and \emph{effect}.
\begin{definition}[proper exit sets and effects]
Let $\strat$ be a no-lose strategy from vertex $v \in \verts$.
We denote the \emph{proper exit set} of $\strat$ as $\Exits\strat v \defeq \Reach_{\strat}(v) \cap \ex$.
We define the \emph{effect} of $\strat$ as
$\solstrat(v, \strat) \defeq \tuple*{\Exits{\strat}{v}, \Reach_{\strat}\pars*{v} \cap \buchiverts \ne \emptyset}.$
\end{definition}
We use $\tuple{T_1, b_1} \sqcup \tuple{T_2, b_2}$ to denote the pointwise join of effects, that is, $\tuple{T_1 \cup T_2, b_1 \vee b_2}$. 
By definition, $\tuple{\emptyset, \bot}$ cannot be the effect of a no-lose strategy $\strat$: it would imply that for all vertices $v$ we have that $\Reach_\strat(v) \cap (\ex \cup \buchiverts) = \emptyset$, contradicting that $\strat$ is no-lose.

\subsection{Local Solution}
We first introduce the solution for a given roMDP and then define the composition of such solutions.
Finally, we prove the compositionality in \cref{thm:semsol_compositional}.
Our solution contains all possible effects of no-lose strategies.

\begin{definition}[local solution]\label{def:semec}
The \emph{local solution} of roMDP $\omdpc A$ is the function $\solarg{\omdpc A}\colon \en^{\omdpc A} \to \powerset{\powerset{\ex^{\omdpc A}} \times \{\top, \bot\}}$ s.t.
$\solarg{\omdpc A}\pars*{\enarg i^{\omdpc A}} \defeq \bigl\{ \solstrat^{\omdpc A}\pars*{\enarg i^{\omdpc A}, \strat} \bigm| \strat \in \nolosestrats\bracs*{\enarg i^{\omdpc A}} \bigr\}$.
\end{definition}
We can replace every entrance and exit in the above definition by their index to get $\solarg{\omdpc A}\colon [m] \to \powerset{\powerset{[n]} \times \{\top, \bot\}}$, with the advantage that it no longer depends on the identity of the vertices.
For brevity, we (sometimes) implicitly convert between entrances/exits and their index, e.g., we denote $i$ for $\enarg i^{\omdpc A}$ and vice versa.
We define $\typemdp{\solarg{\omdpc A}}$ as $\typemdp{\omdpc A}$.
As a special case of local solutions, we can derive whether one can win within an roMDP without leaving it.
\begin{restatable}{lemma}{lemlocalsolutioncorrect}\label{lem:local_solution_correct}
For all $i \in \en^{\omdpc A}$, we have
$\tuple{\emptyset, \top} \in \solarg{\omdpc A}(i) \text{ iff } i \models^{\sem{\omdpc A}} \buchiobj\buchiverts$.
\end{restatable}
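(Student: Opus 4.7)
The plan is to prove both directions separately, with the key observation being that because $\Post(\ex) = \emptyset$ in any roMDP, a strategy satisfying a Büchi objective cannot reach any exit (since exits are player-1 vertices disjoint from $\buchiverts$, and a reached vertex must be able to reach some Büchi vertex).

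For the forward direction, suppose $\tuple{\emptyset, \top} \in \solarg{\omdpc A}(i)$. By \cref{def:semec}, there is a no-lose strategy $\strat \in \nolosestrats[i]$ with $\solstrat^{\omdpc A}(i, \strat) = \tuple{\emptyset, \top}$. That is, $\Exits{\strat}{i} = \Reach_\strat(i) \cap \ex = \emptyset$ and $\Reach_\strat(i) \cap \buchiverts \ne \emptyset$. By \cref{lem:no_lose_equivalent_def}, every $v \in \Reach_\strat(i)$ satisfies $\Reach_\strat(v) \cap (\ex \cup \buchiverts) \ne \emptyset$. Since $\Reach_\strat(v) \subseteq \Reach_\strat(i)$ and $\Reach_\strat(i) \cap \ex = \emptyset$, it follows that $\Reach_\strat(v) \cap \buchiverts \ne \emptyset$ for every $v \in \Reach_\strat(i)$. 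By \cref{def:buchi}, this is precisely $i, \strat \models \buchiobj\buchiverts$, so $i \models^{\sem{\omdpc A}} \buchiobj\buchiverts$ (recalling that $\sem{\constMDP{\omdpc A}} = \omdpc A$).

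For the backward direction, suppose $i \models^{\sem{\omdpc A}} \buchiobj\buchiverts$ with witness strategy $\strat$. I first argue $\Reach_\strat(i) \cap \ex = \emptyset$: for any exit $e \in \ex$ we have $\Post(e) = \emptyset$, so $\Reach_\strat(e) = \{e\}$; since exits are player-1 vertices and $\buchiverts \subseteq \verts_P$, we have $e \notin \buchiverts$, so $\Reach_\strat(e) \cap \buchiverts = \emptyset$. Thus if $e$ were in $\Reach_\strat(i)$ the Büchi condition would be violated. Next, define an $i$-local strategy $\strat'$ by $\strat'(v) \defeq \strat(v)$ if $v \in \Reach_\strat(i)$ and $\strat'(v) \defeq \emptyset$ otherwise. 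A routine induction shows $\Reach_{\strat'}(i) = \Reach_\strat(i)$ and $\Reach_{\strat'}(v) = \Reach_\strat(v)$ for each $v \in \Reach_\strat(i)$. Hence $\strat'$ also witnesses $\buchiobj\buchiverts$ from $i$, so in particular $\Reach_{\strat'}(v) \cap \buchiverts \ne \emptyset$ for every $v \in \Reach_{\strat'}(i)$, which by \cref{lem:no_lose_equivalent_def} makes $\strat'$ no-lose from $i$. Finally, $\Exits{\strat'}{i} = \emptyset$ and $\Reach_{\strat'}(i) \cap \buchiverts \ne \emptyset$ (taking $v = i$ in the Büchi condition), so $\solstrat^{\omdpc A}(i, \strat') = \tuple{\emptyset, \top} \in \solarg{\omdpc A}(i)$.

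The only mild subtlety is the $i$-locality requirement in \cref{def:no_lose}, which forces us to trim the witness strategy $\strat$ outside $\Reach_\strat(i)$ and verify that this trimming preserves both the reachable set from $i$ and the Büchi property; this is straightforward because the trimmed vertices are unreachable. Aside from this bookkeeping, the proof is essentially a direct unfolding of the definitions combined with the fact that exits are sinks, so no fixpoint reasoning is required.
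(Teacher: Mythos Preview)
Your proof is correct and follows essentially the same approach as the paper's own proof: both directions unfold the definitions, use \cref{lem:no_lose_equivalent_def}, and exploit that exits have no successors (and are disjoint from $\buchiverts$) to conclude $\Reach_\strat(i)\cap\ex=\emptyset$ from the Büchi property. The one place you are more careful than the paper is the $i$-locality requirement of \cref{def:no_lose}: the paper's backward direction simply takes the Büchi witness $\strat$ and asserts $\solstrat^{\omdpc A}(i,\strat)=\tuple{\emptyset,\top}$ without explicitly trimming it, whereas you construct the $i$-local $\strat'$ and verify reachability is preserved---this extra bookkeeping is harmless and arguably makes the argument cleaner.
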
\noindent
See \cref{fig:local_solution} for a graphical intuition of local solutions.

\subsection{Compositionality of the Solution}
The information in a local solution suffices to compositionally compute the solution of a string diagram.
We first state the theorem and then clarify the semantics of the operators $\seqcomp$, $\sumcomp$, and $\tr$ on the solutions in the following definitions.
\begin{theorem}[solution is compositional]\label{thm:semsol_compositional}
\begin{align*}
\solarg{\omdpc A \seqcomp \omdpc B} = \solarg{\omdpc A} \seqcomp \solarg{\omdpc B}, \qquad
\solarg{\omdpc A \sumcomp \omdpc B} = \solarg{\omdpc A} \sumcomp \solarg{\omdpc B}, \qquad \text{and} \qquad
\solarg{\tr(\omdpc A)} = \tr(\solarg{\omdpc A}).
\end{align*}
\end{theorem}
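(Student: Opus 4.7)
The plan is to prove the three equalities separately, in order of increasing difficulty, each by a mutual containment argument. The general strategy is: given a no-lose strategy in the composed roMDP, restrict or project it to obtain no-lose strategies on the component(s) whose composed effect matches; and conversely, given a composable family of component-level no-lose strategies, stitch them into a single no-lose strategy in the composed roMDP realising the prescribed composed effect. Throughout, I use \cref{lem:no_lose_equivalent_def} as the workhorse to verify no-loseness, since it turns a semantic condition into a purely reachability-based one.

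For the sum case $\solarg{\omdpc A \sumcomp \omdpc B} = \solarg{\omdpc A} \sumcomp \solarg{\omdpc B}$, I would exploit the fact that $\omdpc A$ and $\omdpc B$ share no vertices and have no cross-transitions: by the $i$-locality clause of \cref{def:no_lose}, a no-lose strategy from an entrance of $\omdpc D\in\{\omdpc A,\omdpc B\}$ only assigns successors in $\verts^{\omdpc D}$, so its reachable set and effect coincide with those computed inside $\omdpc D$. For sequential composition $\omdpc A \seqcomp \omdpc B$, I would observe that entrances of the composition coincide with $\en^{\omdpc A}$ and exits with $\ex^{\omdpc B}$. For $\supseteq$, given $\tuple{T_A,b_A}$ realised by $\strat_A$ from $\enarg i^{\omdpc A}$ and, for each $j\in T_A$, $\tuple{T_B^j,b_B^j}$ realised by $\strat_B^j$ from $\enarg j^{\omdpc B}$, I define $\strat$ on the composition as $\strat_A$ on $\omdpc A$-vertices and $\strat_B^j$ on the $\omdpc B$-part reached through $\enarg j^{\omdpc B}$; \cref{lem:no_lose_equivalent_def} then gives no-loseness since any reachable vertex either reaches an exit of $\omdpc A$ (continuing into a $\strat_B^j$) or reaches a Büchi/exit vertex in one of the pieces, and the effect is $\tuple{\bigsqcup_{j\in T_A}T_B^j,\;b_A\vee\bigvee_{j\in T_A}b_B^j}$. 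For $\subseteq$, I restrict a no-lose $\strat$ from $\enarg i^{\omdpc A}$ to the $\omdpc A$-fragment (treating exits as terminals) and to each reachable $\omdpc B$-entrance, and show each restriction is no-lose in its component.

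The hard part will be the trace case $\solarg{\tr(\omdpc A)} = \tr(\solarg{\omdpc A})$, because the loop from $\exarg{n+1}^{\omdpc A}$ back through $*$ to $\enarg{m+1}^{\omdpc A}$ can be traversed arbitrarily often, so a single no-lose strategy in $\tr(\omdpc A)$ may induce several mutually dependent $\omdpc A$-local no-lose behaviours at entrances $i$ and $\enarg{m+1}^{\omdpc A}$. My plan is to interpret the right-hand side operator $\tr$ on solutions as a fixpoint in the finite lattice $\powerset{\powerset{[n]}\times\{\top,\bot\}}$: starting from $\solarg{\omdpc A}(i)$, repeatedly extend it by closing under the rule ``whenever a current effect has $n+1$ in its exit set, replace that $n+1$ by any effect from $\solarg{\omdpc A}(\enarg{m+1}^{\omdpc A})$ and take the join''. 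Finiteness guarantees termination. For $\subseteq$, I would decompose any no-lose strategy $\strat$ of $\tr(\omdpc A)$ by cutting the execution at each visit to $*$ into an initial $\omdpc A$-local no-lose strategy from $i$ and a (reused) $\omdpc A$-local no-lose strategy from $\enarg{m+1}^{\omdpc A}$, and check that the composed effect matches $\solstrat^{\tr(\omdpc A)}(i,\strat)$, noting in particular that the exit $n+1$ disappears because it is no longer an exit of $\tr(\omdpc A)$. For $\supseteq$, given a witness family of $\omdpc A$-local no-lose strategies from $i$ and $\enarg{m+1}^{\omdpc A}$, I paste them along the $*$-loop into a strategy $\strat$ of $\tr(\omdpc A)$ and verify via \cref{lem:no_lose_equivalent_def} that every vertex reachable under $\strat$ can reach a vertex in $\ex^{\tr(\omdpc A)}\cup\buchiverts$, distinguishing whether that vertex lies in the piece emanating from $i$ or in one of the pieces emanating from $\enarg{m+1}^{\omdpc A}$ spawned by previous $\exarg{n+1}^{\omdpc A}$ exits.
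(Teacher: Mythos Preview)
Your outlines for $\sumcomp$ and $\seqcomp$ match the paper's approach; the paper packages the sequential $\supseteq$ direction into \cref{lem:sequential_composition}, taking the pointwise union $\strat\cup\strat_1\cup\dots\cup\strat_N$ so that overlaps among the $\strat_B^j$ (a $\omdpc B$-vertex reachable from several entrances) are handled automatically, a point you leave implicit.

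For trace, your cut-at-$*$ and paste mechanics are again what the paper does, but the fixpoint framing is a detour that misses the paper's central simplification. The operator $\tr(\sol)$ is \emph{not} an iterated closure: it combines exactly one effect $\tuple{T_1,b_1}\in\sol(i)$ with exactly one $\tuple{T_2,b_2}\in\sol(m{+}1)$, subject to the side condition $\tuple{T_2,b_2}\ne\tuple{\{n{+}1\},\bot}$ ruling out the degenerate losing loop, and removes $n{+}1$ once. One step suffices precisely because strategies are memoryless and local: the paper's trace-reach lemma shows $\Reach^{\tr(\omdpc A)}_{\strat_1\cup\strat_2\cup\strat'}(i)=\Reach^{\omdpc A}_{\strat_1}(i)\cup\Reach^{\omdpc A}_{\strat_2}(m{+}1)$, so traversing the $*$-loop arbitrarily often never enlarges the reachable set beyond what the single strategy $\strat_2$ from entrance $m{+}1$ already yields. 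Your fixpoint would in fact collapse to this one-step form (because $\sol(m{+}1)$ is join-closed by \cref{lem:strategy_union}, so any chain of effects from $m{+}1$ is subsumed by their join, which is again in $\sol(m{+}1)$), but you do not state or use this, and the iteration obscures the point. In short, the ``several mutually dependent $\omdpc A$-local behaviours'' you anticipate do not arise --- memorylessness means there is exactly one behaviour at $m{+}1$, and that is the paper's whole trick for trace.
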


\subsubsection{Sequential Composition.}
As stated by \cref{thm:semsol_compositional}, we want the definition of $\solarg{\omdpc A} \seqcomp \solarg{\omdpc B}$ to be equal to $\solarg{\omdpc A \seqcomp \omdpc B}$.
We can combine the effects of $\solarg{\omdpc A}$ and $\solarg{\omdpc B}$ to obtain an effect in $\solarg{\omdpc A \seqcomp \omdpc B}$ by considering the underlying no-lose strategies.

We explain the intuition using \cref{fig:sequential_composition_solution}.
First, we choose a no-lose strategy $\strat$ in $\omdpc A$ for some entrance $i$, with the goal of constructing a no-lose strategy in $\omdpc A \seqcomp \omdpc B$.
Let $\tuple{\{x_1, \dots, x_N\}, b}$ be the effect of $\strat$ such that $N > 0$ (the case where $N=0$ is trivial, as it implies that the strategy is also a no-lose strategy in $\omdpc A \seqcomp \omdpc B$).
The strategy $\strat$ is not defined on the vertices of $\omdpc B$ and is therefore losing in $\omdpc A \seqcomp \omdpc B$.
Thus, to construct a no-lose strategy in $\omdpc A \seqcomp \omdpc B$, we need to ensure that the continuation in $\omdpc B$ is also no-lose.
If there exist no-lose strategies $\strat_1, \dots, \strat_N$ in $\omdpc B$ such that $\strat_k$ is no-lose from $x_k$, we can compute their union with $\strat$ to obtain a no-lose strategy in $\omdpc A \seqcomp \omdpc B$.

\begin{figure}[t]
    \begin{minipage}{.5\textwidth}
        \centering
        \begin{tikzpicture}[
    edge/.style={
        Arr,
        every node/.append style={ pos=0.5, },
    },
    s/.style={draw, circle, minimum size=0.4cm, inner sep=0},
    s1/.style={draw, circle, minimum size=0.2cm, inner sep=0},
    m/.style={insert path={[fill=black]circle[radius=1pt]}},
]

\newcommand{\mytransitions}[1]{
    \node[s] (en1) at (0.2,1.8){$\scalemath{0.7}{\enarg 1}$};
    \node[s] (en2) at (0.2,1.0){$\scalemath{0.7}{\enarg 2}$};
    \node[s] (en3) at (0.2,0.2){$\scalemath{0.7}{\enarg 3}$};
    \node[s] (ex1) at (1.8,1.8){$\scalemath{0.7}{\exarg 1}$};
    \node[s] (ex2) at (1.8,1.0){$\scalemath{0.7}{\exarg 2}$};
    \node[s] (ex3) at (1.8,0.2){$\scalemath{0.7}{\exarg 3}$};
}

\begin{scope}
    \draw[omdp] (0,0) rectangle ++(2,2);

    \mytransitions A
    
    \node (Alabel) at (1,2.25) {$\omdpc A$};

    \node[s1] (s1) at (1, 1.8) {};
    \node[s1] (s2) at (0.5, 0.5) {};
    \node[s1] (s3) at (1, 1) {};
            
    \draw[thick]
        (en2) -- ++(0.35,-0.15) [m]
            edge[edge] (s2)
        (s2) -- ++(0.75,0.25) [m]
            edge[edge, bend left=30] (s2)
            edge[edge] (ex3)
            edge[edge] (ex2);
            
    \draw[thick]
        (en1) -- ++(0.35,0) [m]
            edge[edge] (s1)
        (s1) -- ++(0.35,0) [m]
            edge[edge, bend right] (s1)
            edge[edge] (ex1)
        (s1) -- ++(0.35,-0.15) [m]
            edge[edge] (ex2)
            ;
            
    \draw[thick]
        (en3) -- ++(0.75,0) [m]
            edge[edge] (ex3);

    \draw[thick]
        (s3) -- ++(0.35, 0) [m]
            edge[edge, bend left] (s3);

    \draw[thick]
        (en2) -- ++(1, 0.35) [m]
            edge[edge] (s3)
            edge[edge] (ex1);

    \path 
        (en3) -- ++(0.75,0) [m]
        [draw] -- ++(0.25,0.15) node[above=-2pt]{$\buchiverts$};
\end{scope}

\begin{scope}[xshift=3cm]
    \draw[omdp] (0,0) rectangle ++(2,2);
    \node (Blabel) at (1,2.25) {$\solarg{\omdpc A}$};
    
    \mytransitions A
    
    \draw[thick] 
        (en2) -- (1,1) 
              edge[edge] (ex2)
              edge[edge] (ex3) node[inner sep=0, outer sep=0]{\contour{white}{$\strat_3$}}
        (en1) edge[edge] node{\contour{white}{$\strat_1$}} (ex1)
              edge[edge] node{\contour{white}{$\strat_2$}} (ex2)
        (en3) edge[edge] node{\contour{white}{$\strat_4^+$}} (ex3);
\end{scope}

\draw[double distance=1.5pt, -Implies] (2.25,1) -- node[above]{\tiny solution} (2.75,1);

\end{tikzpicture}
        \caption{Computing the local solution.}
        \label{fig:local_solution}
    \end{minipage}%
    \begin{minipage}{.5\textwidth}
        \centering
        \subfloat{
        \begin{tikzpicture}[]

\contourlength{0.5pt}

\begin{scope}
    \foreach \x in {0,...,5} {
        \draw[Arr] ($ (2,1.8) + \x*(0,-0.3) $) -- ++(0.5, 0);
    };
    
    \begin{scope}
        \clip (0,0) rectangle (2,2);
        \filldraw[dashed, fill=red!60, opacity=0.9, thick] (0, 1.8)  -- (2,1.9) -- ++(0, -0.5)[rounded corners=10pt] -- (1.5,1)[sharp corners] -- (2,0.7) -- ++(0,-0.5)[rounded corners=10pt] -- (0.8,0.8)[sharp corners] -- cycle;
    \end{scope}

    \draw[omdp] (0,0) rectangle ++(2,2);
    \node (Alabel) at (1,2.25) {$\omdpc A$};

    \draw[Arr, overlay] (-0.25, 1.8) node[left]{$\strat$} -- ++(0.25, 0);
\end{scope}

\begin{scope}[xshift=2.5cm]
    \foreach \x in {0,...,5} {
        \draw[Arr] ($ (2,1.8) + \x*(0,-0.3) $) -- ++(0.25, 0);
    };
    
    \foreach \x / \mycolor [count=\n] in {0/green, 1/blue, 4/orange, 5/purple} {
        \begin{scope}
            \clip (0,0) rectangle (2,2);
            \filldraw[dashed, fill=\mycolor!60, opacity=0.9, thick] ($ (0,1.8) - \x*(0,0.3) $) -- ++(2,0.1) -- ++(0,-0.3) -- cycle;
        \end{scope}
        \node at ($ (-0.75, 1.8) - \x*(0, 0.3) $) {\contour{white}{$x_\n$}};
        \node[overlay] at ($ ( 2.40, 1.8) - \x*(0, 0.3) $) {$Y_\n$};
        \node at ($ ( 0.50, 1.8) - \x*(0, 0.3) $) {\contour{white}{$\strat_\n$}};
    };
    
    \node (Blabel) at (1,2.25) {$\omdpc B$};
    \draw[omdp] (0,0) rectangle ++(2,2);
\end{scope}

\path (Alabel) -- node{$\seqcomp$} (Blabel);
\end{tikzpicture}
        }
        \caption{$\solarg{\omdpc A} \seqcomp \solarg{\omdpc B}$.}
        \label{fig:sequential_composition_solution}
    \end{minipage}
\end{figure}

We formalize the above intuition using the following \emph{decomposition lemma}.
\begin{restatable}[sequential decomposition]{lemma}{lemsequentialcomposition}\label{lem:sequential_composition}
Let $\strat$ be a no-lose strategy in $\omdpc A$ and let $\tuple{\{x_1, \dots, x_N\}, b} \defeq \solstrat^{\omdpc A}(i, \strat)$.
For $1 \le k \le N$, let $\strat_k$ be a no-lose strategy from $x_k$ in $\omdpc B$.
We have that
\begin{align*}
\solstrat^{\omdpc A \seqcomp \omdpc B}(i, \strat \cup \strat_1 \cup \dots \cup \strat_N) = \tuple{\emptyset, b} \sqcup \bigsqcup_{1 \le k \le N} \solstrat^{\omdpc B}(x_k, \strat_k).
\end{align*}
\end{restatable}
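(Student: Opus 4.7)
The plan is to set $\strat^* \defeq \strat \cup \strat_1 \cup \dots \cup \strat_N$ and reduce the lemma to a single decomposition identity,
\[
\Reach_{\strat^*}(i) \;=\; \bigl(\Reach_\strat(i) \setminus \ex^{\omdpc A}\bigr) \;\cup\; \bigcup_{k=1}^{N} \Reach_{\strat_k}(x_k),
\]
where each proper exit $x_k = \exarg{j_k}^{\omdpc A}$ is identified with the entrance $\enarg{j_k}^{\omdpc B}$ it becomes in $\omdpc A \seqcomp \omdpc B$ (per \cref{def:seqROMDP}). The ``$\supseteq$'' direction uses that sequential composition preserves all edges of $\strat$ within $\omdpc A$ and only reroutes those into $\ex^{\omdpc A}$ to the corresponding entrance of $\omdpc B$, so every vertex of $\Reach_\strat(i) \setminus \ex^{\omdpc A}$ together with each $x_k$ is reachable under $\strat^*$; locality of $\strat_k$ then pulls $\Reach_{\strat_k}(x_k)$ along. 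For ``$\subseteq$'', I proceed by induction on path length under $\strat^*$: as long as the path stays in $\verts^{\omdpc A} \setminus \ex^{\omdpc A}$ it follows $\strat$; the first visit to $\verts^{\omdpc B}$ must enter via some $x_k$ (the only transitions from $\omdpc A$ into $\omdpc B$ land on entrances in the image of proper exits of $\strat$), and from then on only $\strat_k$ is consulted because the strategies have disjoint supports.

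With the decomposition in hand, the claim is almost immediate. Intersecting both sides with $\ex^{\omdpc B}$ (the exits of $\omdpc A \seqcomp \omdpc B$) yields $\Exits{\strat^*}{i} = \bigcup_k \Exits{\strat_k}{x_k}$, and intersecting with $\buchiverts$ gives the Büchi flag as $b$ or some $\Reach_{\strat_k}(x_k) \cap \buchiverts^{\omdpc B}$ being nonempty; together these are exactly the two coordinates of $\tuple{\emptyset, b} \sqcup \bigsqcup_k \solstrat^{\omdpc B}(x_k, \strat_k)$. The only remaining step is to verify that $\strat^*$ is no-lose, so that $\solstrat^{\omdpc A \seqcomp \omdpc B}(i, \strat^*)$ is even defined. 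I do this via \cref{lem:no_lose_equivalent_def}: any $v \in \Reach_{\strat^*}(i)$ in the $\omdpc B$-part inherits its witness in $\ex^{\omdpc B} \cup \buchiverts^{\omdpc B}$ from the corresponding $\strat_k$, while any $v$ in the $\omdpc A$-part gets one from no-lose-ness of $\strat$ either as a Büchi vertex of $\omdpc A$ or as some proper exit $x_k$, which in turn yields a witness in $\omdpc B$ via $\strat_k$.

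The principal obstacle is the decomposition identity itself, which requires careful book-keeping of three moving parts at once: the vertex removal and edge-rerouting in \cref{def:seqROMDP}, the disjoint-support locality of $\strat, \strat_1, \ldots, \strat_N$, and the alternation of player-1 and probabilistic vertices. Once those are aligned, the derivation of the two effect coordinates and the two-step no-lose argument (which mirrors the very construction of $\strat^*$) reduce to unfolding the definitions.
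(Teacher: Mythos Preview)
Your plan mirrors the paper's proof almost exactly: both reduce the lemma to the reach-decomposition identity
\[
\Reach_{\strat^*}^{\omdpc A \seqcomp \omdpc B}(i)\;=\;\Reach_\strat^{\omdpc A}(i)\cup\bigcup_{k=1}^{N}\Reach_{\strat_k}^{\omdpc B}(x_k)
\]
(your version removes $\ex^{\omdpc A}$, which is merely a more honest rendering of the vertex deletion in \cref{def:seqROMDP}), and then read off both coordinates of the effect and the no-lose property of $\strat^*$ from it in the same way.

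There is one inaccuracy you should fix. In the $\subseteq$-direction you write that once the path enters $\omdpc B$ via some $x_k$, ``from then on only $\strat_k$ is consulted because the strategies have disjoint supports.'' The strategies $\strat_1,\dots,\strat_N$ live in the \emph{same} roMDP $\omdpc B$ and are not required to have disjoint supports; a path can use a step of $\strat_k$ and then a step of $\strat_j$ at a vertex both reach. The correct argument (which the paper uses) is the locality one: if $v\in\verts^{\omdpc B}$ is reached from some predecessor $v'$ via $v\in\strat^*(v')=\bigcup_j\strat_j(v')$, pick any $j$ with $v\in\strat_j(v')$; since $\strat_j$ is $x_j$-local, $\strat_j(v')\neq\emptyset$ forces $v'\in\Reach_{\strat_j}(x_j)$, hence $v\in\Reach_{\strat_j}(x_j)$. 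This already places $v$ in $\bigcup_k\Reach_{\strat_k}(x_k)$, which is all you need. With that correction your proof goes through and coincides with the paper's.
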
\noindent
Based on \cref{lem:sequential_composition}, we define the sequential composition of solutions.

\begin{definition}[$\solarg 1 \seqcomp \solarg 2$]\label{def:semec_composition}
Let $\solarg 1$ and $\solarg 2$ be solutions such that $\typemdp{\solarg 1}\colon m \to l$ and $\typemdp{\solarg 2}\colon l \to n$.
We define $\solarg 1 \seqcomp \solarg 2$ for entrance $i$ as
\begin{align*}
    (\solarg 1 \seqcomp \solarg 2)(i) \defeq \left\{
        \tuple{\emptyset, b} \sqcup \bigsqcup_{1 \le k \le N} Y_k \midvert
            {\def\arraystretch{1.2}
            \begin{array}{cc}
                \tuple{\{ x_1, \dots, x_N\}, b} \in \solarg 1(i),\\
                Y_1 \in \solarg 2(x_1), \dots, Y_N \in \solarg 2(x_N)
            \end{array}}
    \right\}.
\end{align*}
\end{definition}
We conclude by outlining a proof of compositionality.
\begin{restatable}[compositionality of $\seqcomp$]{lemma}{lemcompofseq}\label{lem:comp_of_seq}
$\solarg{\omdpc A \seqcomp \omdpc B} = \solarg{\omdpc A} \seqcomp \solarg{\omdpc B}.$
\end{restatable}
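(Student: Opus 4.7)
The plan is to prove both set inclusions between $\solarg{\omdpc A \seqcomp \omdpc B}(i)$ and $(\solarg{\omdpc A} \seqcomp \solarg{\omdpc B})(i)$ for every entrance $i$, noting that $\en^{\omdpc A \seqcomp \omdpc B} = \en^{\omdpc A}$ by \cref{def:seqROMDP}.

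The direction $\supseteq$ should follow almost directly from \cref{lem:sequential_composition}. By \cref{def:semec_composition}, any element of $(\solarg{\omdpc A} \seqcomp \solarg{\omdpc B})(i)$ has the form $\tuple{\emptyset, b} \sqcup \bigsqcup_{k=1}^N Y_k$ for some $\tuple{\{x_1, \dots, x_N\}, b} \in \solarg{\omdpc A}(i)$ and $Y_k \in \solarg{\omdpc B}(x_k)$. By \cref{def:semec}, these effects are witnessed by no-lose strategies $\strat$ in $\omdpc A$ from $i$ and $\strat_k$ in $\omdpc B$ from $x_k$, and \cref{lem:sequential_composition} then exhibits $\strat \cup \strat_1 \cup \dots \cup \strat_N$ as a no-lose strategy in $\omdpc A \seqcomp \omdpc B$ from $i$ whose effect equals the combined expression, so this effect lies in $\solarg{\omdpc A \seqcomp \omdpc B}(i)$.

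For the direction $\subseteq$, given an effect $e = \solstrat^{\omdpc A \seqcomp \omdpc B}(i, \strat')$ witnessed by some no-lose $\strat'$, I plan to decompose $\strat'$ into component strategies. I define $\strat$ on $\verts_1^{\omdpc A}$ by $\strat(v) \defeq \strat'(v)$ for $v \in \verts_1^{\omdpc A} \setminus \ex^{\omdpc A}$ and $\strat(v) \defeq \emptyset$ on $\ex^{\omdpc A}$ (vacuous, since $\Post^{\omdpc A}(\ex^{\omdpc A}) = \emptyset$). Set $\{x_1, \dots, x_N\} \defeq \Reach_\strat(i) \cap \ex^{\omdpc A}$ (identifying each $x_k$ with its corresponding entrance of $\omdpc B$), and for each $k$ define $\strat_k$ on $\verts_1^{\omdpc B}$ to agree with $\strat'$ on $\Reach_{\strat'}(x_k) \cap \verts_1^{\omdpc B}$ and to be empty elsewhere. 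Three claims then require verification: (i) $\strat$ is no-lose from $i$ in $\omdpc A$; (ii) each $\strat_k$ is no-lose from $x_k$ in $\omdpc B$; and (iii) $\strat \cup \strat_1 \cup \dots \cup \strat_N$ agrees with $\strat'$ on $\Reach_{\strat'}(i)$, hence has the same effect $e$. Combined with \cref{lem:sequential_composition}, this last point will express $e$ in the form prescribed by \cref{def:semec_composition}.

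All three claims will hinge on the structural asymmetry of sequential composition from \cref{def:seqROMDP}: there are no edges from $\verts^{\omdpc B}$ back to $\verts^{\omdpc A}$, and every transition from $\omdpc A$ into $\omdpc B$ passes through the identification $\exarg j^{\omdpc A} \equiv \enarg j^{\omdpc B}$. Consequently any $\strat'$-path from a vertex $v \in \verts^{\omdpc A}$ reaching $\ex^{\omdpc A \seqcomp \omdpc B} \cup \buchiverts$ must first traverse a vertex of $\ex^{\omdpc A} \cup \buchiverts$ while still inside $\verts^{\omdpc A}$; the corresponding prefix is a $\strat$-path, yielding (i) via \cref{lem:no_lose_equivalent_def}. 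A symmetric argument using that $\strat'$-paths starting in $\verts^{\omdpc B}$ remain in $\verts^{\omdpc B}$ delivers (ii). For (iii), every $\strat'$-reachable vertex in $\verts_1^{\omdpc B}$ lies in $\Reach_{\strat'}(x_k)$ for the index $k$ marking the exit through which its witnessing path first crosses into $\omdpc B$. The main obstacle I anticipate is precisely this (iii): multiple $x_k$ may share $\strat'$-reachable vertices in $\omdpc B$, so the $\strat_k$ are not disjoint restrictions, and one must carefully verify that their pointwise union together with $\strat$ exactly reproduces $\strat'$ on the reachable region. The one-way flow of sequential composition is the key lever that makes this reconstruction clean.
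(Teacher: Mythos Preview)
Your proposal is correct and follows essentially the same approach as the paper: the $\supseteq$ direction is obtained directly from \cref{lem:sequential_composition}, and for $\subseteq$ the paper likewise restricts a given no-lose strategy $\strat'$ to $\verts^{\omdpc A}$ to obtain $\strat$, defines each $\strat_k$ as the restriction of $\strat'$ to $\Reach_{\strat'}^{\omdpc B}(x_k)$, verifies that these are no-lose (using the one-way structure of $\seqcomp$), and then invokes \cref{lem:sequential_composition} on $\strat' = \strat \cup \strat_1 \cup \dots \cup \strat_N$. Your anticipated obstacle regarding overlap among the $\strat_k$ is a non-issue precisely because they all agree with $\strat'$ on their common domain, so the pointwise union recovers $\strat'$ on the reachable region (and $i$-locality of $\strat'$ makes this an exact equality).
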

\begin{proofs}
The $\supseteq$ direction follows from \cref{lem:sequential_composition}.
For the $\subseteq$ direction we show that we can always decompose a no-lose strategy in $\omdpc A \seqcomp \omdpc B$ into a no-lose strategy $\strat$ in $\omdpc A$ and no-lose strategies in $\omdpc B$ such that we can again apply \cref{lem:sequential_composition}.
\end{proofs}

\subsubsection{Sum composition.}
The sum of two solutions is simply their disjoint union. %
\begin{definition}[$\solarg1 \sumcomp \solarg2$]
Let $\solarg1$ and $\solarg2$ be solutions such that $\typemdp{\solarg1}\colon m_1 \to n_1$ and $\typemdp{\solarg2}\colon m_2 \to n_2$.
We define $(\solarg1 \sumcomp \solarg2)(i)$ as
$\solarg1(i)$ if $1 \le i \le m_1$ and $\solarg2(i-m_1)$ otherwise.
\end{definition}
The following statement follows straightforwardly.
\begin{restatable}[compositionality of $\sumcomp$]{lemma}{lemcompofsum}\label{lem:comp_of_sum}
$\solarg{\omdpc A \sumcomp \omdpc B} = \solarg{\omdpc A} \sumcomp \solarg{\omdpc B}.$
\end{restatable}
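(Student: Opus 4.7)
The plan is to observe that the sum composition $\omdpc A \sumcomp \omdpc B$ has no edges between the two components (by \cref{def:sumROMDP}), so from any entrance $i$ the reachable part of the state space lies entirely within the component that $i$ belongs to. Consequently, both the set of no-lose strategies and the effect function decompose along the partition of entrances, and the claim reduces to this straightforward identification.

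Concretely, I would fix an entrance $i$ of $\omdpc A \sumcomp \omdpc B$ and argue the two cases symmetrically; assume without loss of generality that $i \in \en^{\omdpc A}$ (the other case is analogous with the index shift from \cref{def:sumROMDP}). The goal is to exhibit an effect-preserving bijection between $\nolosestrats^{\omdpc A \sumcomp \omdpc B}[i]$ and $\nolosestrats^{\omdpc A}[i]$. Given a strategy $\strat$ in $\omdpc A \sumcomp \omdpc B$, observe that $\Reach_\strat(i) \subseteq \verts^{\omdpc A}$ because $\Post^{\omdpc A \sumcomp \omdpc B}(v) = \Post^{\omdpc A}(v)$ for every $v \in \verts^{\omdpc A}$ (by the definition of $\oplus$). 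Combined with the $i$-locality requirement of \cref{def:no_lose}, which forces $\strat(v) = \emptyset$ outside $\Reach_\strat(i)$, this means $\strat$ is supported on $\verts^{\omdpc A}$ and can be viewed directly as a strategy in $\omdpc A$, and vice versa: any $i$-local strategy in $\omdpc A$ extends uniquely to an $i$-local strategy in $\omdpc A \sumcomp \omdpc B$ by assigning $\emptyset$ to every vertex of $\omdpc B$.

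Under this bijection the no-lose conditions match: for every reachable $v$, $\Reach_\strat(v) \cap (\ex \cup \buchiverts)$ is computed only over $\omdpc A$-vertices, so by \cref{lem:no_lose_equivalent_def} the strategy is no-lose in $\omdpc A \sumcomp \omdpc B$ iff its restriction is no-lose in $\omdpc A$. Likewise, $\Exits{\strat}{i}$ and $\Reach_\strat(i) \cap \buchiverts$ coincide on both sides (after identifying $\ex^{\omdpc A}$ with its image in $\ex^{\omdpc A \sumcomp \omdpc B} = \ex^{\omdpc A} \uplus \ex^{\omdpc B}$), so the effects agree. Therefore
\[
\solarg{\omdpc A \sumcomp \omdpc B}(i) = \bigl\{\solstrat^{\omdpc A}(i, \strat) \bigm| \strat \in \nolosestrats^{\omdpc A}[i]\bigr\} = \solarg{\omdpc A}(i),
\]
which by definition equals $(\solarg{\omdpc A} \sumcomp \solarg{\omdpc B})(i)$.

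There is no real obstacle here; the only bookkeeping to be careful about is the re-indexing of entrances and exits across the $\uplus$ and the convention that strategies must be $i$-local so that the bijection is clean. Once these identifications are made explicit, both set equalities are immediate from \cref{def:sumROMDP,def:semec} and the definition of $\sumcomp$ on solutions.
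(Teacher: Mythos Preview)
Your proof is correct and in fact more detailed than the paper's own treatment: the paper simply asserts that \cref{lem:comp_of_sum} ``follows straightforwardly'' and provides no proof. Your argument via the disjointness of the two components in $\omdpc A \sumcomp \omdpc B$, together with $i$-locality to obtain a clean bijection of no-lose strategies, is exactly the natural reasoning the paper leaves implicit.
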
%
\subsubsection{Trace.}
We explain the intuition of the trace of a solution using \cref{fig:trace_cases}. %

First, we pick a no-lose strategy $\strat_1$ in $\omdpc A$ and let $\tuple{T_1, b_1}$ be its effect.
Depending on the values of $T_1$ and $b_1$, we can distinguish three cases.
\begin{figure}[t]%
    \begin{minipage}{0.33\textwidth}%
    \subfloat[]{%
    \centering%
    \begin{tikzpicture}

\begin{scope}
\clip (0,0) rectangle ++(2,2);
\fill[fill=red!60, draw=black, dashed, thick, opacity=0.9] (0, 1.8) -- ++(2,0.1) -- ++(0,-0.8)[rounded corners=10pt] -- (0.8, 0.8)[sharp corners] -- cycle; %
\end{scope}

\draw[omdp] (0,0) rectangle ++(2,2);

\draw[Arr] (-0.25, 1.8) node[left]{$\strat_1$} -- ++(0.25, 0);
\draw[Arr] (2, 1.8) -- ++(0.25, 0);
\draw[Arr] (2, 1.5) -- ++(0.25, 0);
\draw[Arr] (2, 1.2) -- ++(0.25, 0);
\draw[Arr] (2, 0.9) -- ++(0.25, 0);
\draw[Arr] (2, 0.6) -- ++(0.25, 0);
\draw[Arr, rounded corners] (2, 0.25) -| ++(0.25, -0.5) -| ++(-2.5, 0.5) -- ++(0.25, 0);

\draw[thick, decoration={brace}, decorate] (2.3, 1.8) -- (2.3, 1.2) node[pos=0.5,right]{$T_1$};

\end{tikzpicture}%
    \label{fig:trace1}
    }%
    \end{minipage}%
    \begin{minipage}{0.33\textwidth}%
    \subfloat[]{%
    \centering%
    \begin{tikzpicture}

\begin{scope}
\clip (0,0) rectangle ++(2,2);
\fill[fill=orange!60, draw=black, dashed, thick] (0, 0.25)[rounded corners=10pt] -- (1, 0.5)[sharp corners] -- (2, 0.4) -- ++(0,-0.3) -- cycle;
\fill[fill=red!60, draw=black, dashed, thick, opacity=0.9] (0, 1.8) -- ++(2,-1.05) -- ++(0,-0.65) -- cycle;
\end{scope}

\draw[omdp] (0,0) rectangle ++(2,2);

\draw[Arr] (-0.25, 1.8) node[left]{$\strat_1$} -- ++(0.25, 0);
\draw[Arr] (2, 1.8) -- ++(0.25, 0);
\draw[Arr] (2, 1.5) -- ++(0.25, 0);
\draw[Arr] (2, 1.2) -- ++(0.25, 0);
\draw[Arr] (2, 0.9) -- ++(0.25, 0);
\draw[Arr] (2, 0.6) -- ++(0.25, 0);
\draw[Arr, rounded corners] (2, 0.25) -| (2.25, -0.25) -| (-0.25, 0.25) -- (0, 0.25) node[left, xshift=-4pt]{$\strat_2$};

\filldraw (0.5, 0.3) circle (1pt) node[above] {$\buchiverts$};

\draw[overlay, thick, decoration={brace}, decorate] (2.3, 0.25) -- ++(0, -0.6) node[pos=0.5,right]{$\substack{T_1\\T_2}$};
\draw[thick, decoration={brace}, decorate] (2.3, 0.6) -- ++(0, -0.3) node[pos=0.5,right]{$T_1$};
\node[overlay] at (1,-0.2) {\contour{white}{$\strat'$}};

\end{tikzpicture}%
    \label{fig:trace3}
    }%
    \end{minipage}%
    \begin{minipage}{0.33\textwidth}%
    \subfloat[]{%
    \centering%
    \begin{tikzpicture}

\begin{scope}
\clip (0,0) rectangle ++(2,2);
\fill[fill=orange!60, draw=black, dashed, thick] (0, 0.25) -- ++(2,1.4) -- ++(0,-0.6)[rounded corners=10pt] -- (1, 0.5)[sharp corners] -- (2, 0.4) -- ++(0,-0.3) -- cycle;
\fill[fill=red!60, draw=black, dashed, thick, opacity=0.9] (0, 1.8) -- ++(2,-1.05) -- ++(0,-0.65) -- cycle;
\end{scope}

\draw[omdp] (0,0) rectangle ++(2,2);

\draw[Arr] (-0.25, 1.8) node[left]{$\strat_1$} -- ++(0.25, 0);
\draw[Arr] (2, 1.8) -- ++(0.25, 0);
\draw[Arr] (2, 1.5) -- ++(0.25, 0);
\draw[Arr] (2, 1.2) -- ++(0.25, 0);
\draw[Arr] (2, 0.9) -- ++(0.25, 0);
\draw[Arr] (2, 0.6) -- ++(0.25, 0);
\draw[Arr, rounded corners] (2, 0.25) -| (2.25, -0.25) -| (-0.25, 0.25) -- (0, 0.25) node[left, xshift=-4pt]{$\strat_2$};

\draw[overlay, thick, decoration={brace}, decorate] (2.3, 0.25) -- ++(0, -0.6) node[pos=0.5,right]{$\substack{T_1\\T_2}$};
\draw[thick, decoration={brace}, decorate] (2.3, 1.5) -- ++(0, -0.3) node[pos=0.5,right]{$T_2$};
\draw[thick, decoration={brace}, decorate] (2.3, 0.6) -- ++(0, -0.3) node[pos=0.5,right]{$T_1$};
\node[overlay] at (1,-0.2) {\contour{white}{$\strat'$}};

\end{tikzpicture}%
    \label{fig:trace2}
    }%
    \end{minipage}%
    \caption{Three ways of obtaining an effect in $\tr(\sol)$.}%
    \label{fig:trace_cases}%
\end{figure}
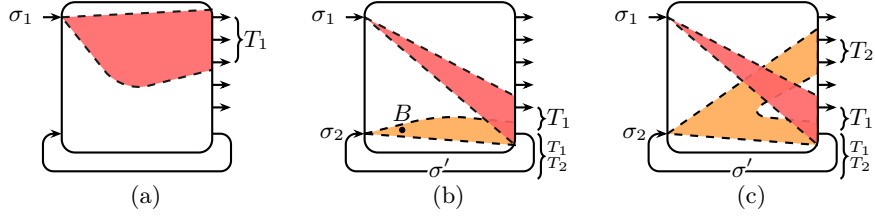%
\begin{enumerate}%
    \item[\subref{fig:trace1}:] If $\exarg{n+1}$ (the \emph{trace exit}) is not in $T_1$, then $\strat_1$ is also a no-lose strategy in $\tr(\omdpc A)$ with the same effect.
\end{enumerate}
    Otherwise, for the remaining two cases, we need a no-lose strategy $\strat_2$ from entrance $m+1$.
    Let $\tuple{T_2, b_2}$ be the effect of $\strat_2$.
\begin{enumerate}%
    \item[\subref{fig:trace3}:] If $T_2 = \Set*{ \exarg{n+1} }$, then $b_2$ must be true.
    Otherwise it would imply that $\strat_2$ forms a \emph{losing cycle} in $\tr(\omdpc A)$.
    \item[\subref{fig:trace2}:] If $T_2 \ne \Set*{ \exarg{n+1} }$, then the union of $\strat_1$ and $\strat_2$ forms a no-lose strategy in $\tr(\omdpc A)$.
    If additionally we have that $\exarg{n+1} \in T_2$, then, intuitively, the formed cycle ensures that we eventually reach an exit.
\end{enumerate}
We formalize the above intuition using the following \emph{decomposition lemma}.
\begin{restatable}[trace decomposition]{lemma}{lemtracedecomposition}\label{lem:trace_decomposition}
Let
\begin{itemize}
    \item $\omdpc A$ be an roMDP such that $\typemdp{\omdpc A}\colon m+1 \to n+1$, and $i \in [m]$,
    \item $\strat_1$, $\strat_2$ be no-lose strategies in $\omdpc A$ for entrances $i$ and $m+1$, respectively,
    \item $\tuple{T_1, b_1} \defeq \solstrat^{\omdpc A}\pars*{i, \strat_1}$ such that $\exarg{n+1} \in T_1$,
    \item $\tuple{T_2, b_2} \defeq \solstrat^{\omdpc A}\pars*{m+1, \strat_2}$ such that $\tuple{T_2, b_2} \ne \tuple{\{\exarg{n+1}\}, \bot}$,
    \item $\strat' \defeq \Set*{ \exarg{n+1} \mapsto \{*\} } \cup \Set*{ v \mapsto \emptyset \mid v \in \verts^{\tr(\omdpc A)}, v \ne \exarg{n+1} }$.
\end{itemize}
Then,
$\solstrat^{\tr(\omdpc A)}\pars*{i, \strat_1 \cup \strat_2 \cup \strat'} = \tuple*{(T_1 \cup T_2) \setminus \Set*{ \exarg{n+1} }, b_1 \vee b_2}.$
\end{restatable}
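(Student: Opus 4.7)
The plan is to analyze the reachable set of the combined strategy $\strat \defeq \strat_1 \cup \strat_2 \cup \strat'$ from $i$ in $\tr(\omdpc A)$, read off the effect directly from this characterization, and separately verify that $\strat$ is no-lose so that $\solstrat^{\tr(\omdpc A)}(i, \strat)$ is well-defined.

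First, I would establish that $\Reach_\strat(i) = \Reach_{\strat_1}(i) \cup \{*\} \cup \Reach_{\strat_2}(\enarg{m+1})$. The inclusion $\Reach_{\strat_1}(i) \subseteq \Reach_\strat(i)$ holds because $\strat$ extends $\strat_1$ and the edges internal to $\omdpc A$ are preserved by $\tr$; the inclusion of $\{*\}$ and $\Reach_{\strat_2}(\enarg{m+1})$ uses the assumption $\exarg{n+1} \in T_1$ together with the two new edges $\tuple{\exarg{n+1}, *}$ and $\tuple{*, \enarg{m+1}}$ introduced by $\tr$, plus the $(m+1)$-locality of $\strat_2$. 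The reverse inclusion follows because $\strat$ assigns $\emptyset$ to every player-1 vertex outside these three sets, and the underlying transitions of $\tr(\omdpc A)$ cannot escape this union from within it.

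From this characterization the effect is immediate: intersecting with $\ex^{\tr(\omdpc A)} = \ex^{\omdpc A} \setminus \{\exarg{n+1}\}$ yields $(T_1 \cup T_2) \setminus \{\exarg{n+1}\}$, and intersecting with $\buchiverts$ yields $b_1 \vee b_2$ (the fresh vertex $*$ is not in $\buchiverts$).

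Finally, I would verify the no-lose property via \cref{lem:no_lose_equivalent_def}: for every $v \in \Reach_\strat(i)$, $\Reach_\strat(v) \cap (\ex^{\tr(\omdpc A)} \cup \buchiverts) \ne \emptyset$. If $v \in \Reach_{\strat_1}(i)$, then $\strat_1$ being no-lose gives $\Reach_{\strat_1}(v) \cap (\ex^{\omdpc A} \cup \buchiverts) \ne \emptyset$; either the witness already lies in $\ex^{\tr(\omdpc A)} \cup \buchiverts$, or the only reachable witness is $\exarg{n+1}$ with no Büchi vertex en route, and we follow the loop through $*$ to $\enarg{m+1}$ and reduce to the next case. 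The cases $v = *$ and $v \in \Reach_{\strat_2}(\enarg{m+1})$ are symmetric using $\strat_2$'s no-lose property. The main obstacle, and the sole use of the hypothesis $\tuple{T_2, b_2} \ne \tuple{\{\exarg{n+1}\}, \bot}$, is ruling out a losing cycle through the trace loop: this hypothesis guarantees that $\Reach_{\strat_2}(\enarg{m+1})$ meets $\ex^{\tr(\omdpc A)} \cup \buchiverts$, which is exactly what closes every chain of implications that would otherwise circle indefinitely between $\exarg{n+1}$, $*$, and $\enarg{m+1}$.
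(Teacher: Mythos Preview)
Your proposal is correct and follows essentially the same approach as the paper: both establish the reach decomposition $\Reach^{\tr(\omdpc A)}_{\strat}(i) = \Reach^{\omdpc A}_{\strat_1}(i) \cup \Reach^{\omdpc A}_{\strat_2}(m+1)$ (the paper factors this out as a separate lemma, and you additionally track the harmless vertex $*$), then read off the exit set and Büchi bit directly, and verify the no-lose property by the same case analysis hinging on the hypothesis $\tuple{T_2, b_2} \ne \tuple{\{\exarg{n+1}\}, \bot}$ to break the potential losing cycle through the trace loop. The only difference is presentation order (you compute the effect before checking no-lose; the paper does the reverse), which is immaterial.
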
\noindent
Based on \cref{lem:trace_decomposition} we define the trace of a solution.

\begin{definition}[$\tr(\sol)$]
Let $\sol$ be a solution such that $\typemdp{\sol}\colon m+1 \to n+1$.
We define $\tr(\sol)$ with $\typemdp{\tr(\sol)}\colon m \to n$ as %
\begin{align*}
    &\big(\tr(\sol)\big)(i) \defeq \big\{ \tuple{T, b} \in \sol(i) \mid n+1 \not\in T \big\} \cup {}\\
    &\bigcup_{\substack{\tuple{T_1, b_1} \in \sol(i)\\\text{s.t.}\\n+1 \in T_1}}
    \left\{ 
        \tuple*{(T_1 \cup T_2) \setminus \{n+1\}, b_1 \vee b_2}
        \,\middle|\,
        \begin{array}{ll}
        \tuple{T_2, b_2} \in \sol(m+1) \text{ s.t. }\\
        \tuple{T_2, b_2} \ne \tuple{\{n+1\}, \bot}
        \end{array}
    \right\}.
\end{align*} 
\end{definition}
We conclude by outlining a proof of compositionality.
\begin{restatable}[compositionality of $\tr$]{lemma}{lemcompoftr}\label{lem:comp_of_tr}
$\solarg{\tr(\omdpc A)} = \tr(\solarg{\omdpc A}).$
\end{restatable}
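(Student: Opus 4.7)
}
I will prove both inclusions $\solarg{\tr(\omdpc A)} \subseteq \tr(\solarg{\omdpc A})$ and $\solarg{\tr(\omdpc A)} \supseteq \tr(\solarg{\omdpc A})$ separately, mirroring the structure of the proof of \cref{lem:comp_of_seq}. The $\supseteq$ direction is the straightforward one: given an effect $\tuple{T, b}$ in $\pars*{\tr(\solarg{\omdpc A})}(i)$, the definition of $\tr(\sol)$ yields two cases. If $\tuple{T, b} \in \solarg{\omdpc A}(i)$ with $n+1 \not\in T$, pick a realizing no-lose strategy $\strat_1$; since $\exarg{n+1}$ is unreachable from $i$ under $\strat_1$ in $\omdpc A$, the loop in $\tr(\omdpc A)$ is never triggered, so $\strat_1$ (viewed in $\tr(\omdpc A)$) has the same reachable set from $i$ and is still no-lose with effect $\tuple{T, b}$. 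Otherwise, $\tuple{T, b} = \tuple{(T_1 \cup T_2)\setminus\{n+1\}, b_1\vee b_2}$ with witnesses $\strat_1$ (from $i$) and $\strat_2$ (from $m+1$) satisfying the hypotheses of \cref{lem:trace_decomposition}, and the lemma directly constructs the required no-lose strategy in $\tr(\omdpc A)$.

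For the $\subseteq$ direction, let $\strat$ be a no-lose strategy in $\tr(\omdpc A)$ from $i$ with effect $\tuple{T, b}$. I will split on whether $\exarg{n+1}$ is reachable from $i$ under $\strat$ in $\tr(\omdpc A)$. If not, then the loop is never used, so the restriction of $\strat$ to $\omdpc A$ is an $i$-local no-lose strategy in $\omdpc A$ with the same effect $\tuple{T, b}$ and $n+1 \notin T$, placing $\tuple{T, b}$ in the first set of the union defining $\tr(\solarg{\omdpc A})(i)$. If $\exarg{n+1}$ is reachable, then so is $\enarg{m+1}$ (its only predecessor in $\tr(\omdpc A)$ is $*$, which is only reached from $\exarg{n+1}$). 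I then define $\strat_1$ and $\strat_2$ as the $i$-local and $(m+1)$-local restrictions of $\strat$ to $\omdpc A$ (stopping at $\exarg{n+1}$, which has no $\omdpc A$-successors), together with $\strat'$ fixing $\exarg{n+1}\mapsto\{*\}$. By construction $\strat_1\cup\strat_2\cup\strat'$ agrees with $\strat$ on all vertices reachable from $i$ in $\tr(\omdpc A)$, so the two strategies have identical effect; \cref{lem:trace_decomposition} then tells us that this effect equals $\tuple{(T_1\cup T_2)\setminus\{\exarg{n+1}\}, b_1\vee b_2}$ where $\tuple{T_k, b_k}$ is the effect of $\strat_k$.

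To close the argument, two verifications are needed. First, that $\strat_1$ and $\strat_2$ are no-lose in $\omdpc A$: for any vertex $v$ reachable from the relevant entrance under $\strat_k$ in $\omdpc A$, the no-lose property of $\strat$ in $\tr(\omdpc A)$ yields a path to $\ex^{\tr(\omdpc A)}\cup\buchiverts$; either this path avoids $\exarg{n+1}$ (in which case it is already a path in $\omdpc A$ witnessing $v$ reaches $\ex^{\omdpc A}\cup\buchiverts$), or it hits $\exarg{n+1}\in\ex^{\omdpc A}$ first, which also witnesses the condition. Second, that $\tuple{T_2,b_2}\neq \tuple{\{\exarg{n+1}\},\bot\!}$: if not, then from $\enarg{m+1}$ under $\strat$ in $\tr(\omdpc A)$, one would only be able to loop back through $*\to\enarg{m+1}$ without visiting $\ex^{\tr(\omdpc A)}\cup\buchiverts$, contradicting no-loseness of $\strat$ at the reachable vertex $\enarg{m+1}$. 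Thus the hypotheses of \cref{lem:trace_decomposition} are met, and $\tuple{T, b}$ lies in the second set of the union, as required.

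The main obstacle I anticipate is the careful bookkeeping in the $\subseteq$ direction, specifically showing that restricting a single global strategy $\strat$ on $\tr(\omdpc A)$ yields two separate local no-lose strategies $\strat_1,\strat_2$ on $\omdpc A$ whose effects interact exactly as prescribed by \cref{lem:trace_decomposition}. The subtlety comes from the interplay between reachability in $\tr(\omdpc A)$ (which can traverse the $\exarg{n+1}\!\to\!*\!\to\!\enarg{m+1}$ loop) and reachability in $\omdpc A$ (which stops at $\exarg{n+1}$), combined with ensuring that the forbidden effect $\tuple{\{\exarg{n+1}\},\bot\!}$ cannot arise on the $m+1$-side without contradicting global no-loseness.
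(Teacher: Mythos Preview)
Your proposal is correct and follows essentially the same approach as the paper's own proof: both directions are split along whether the trace exit $\exarg{n+1}$ is reached, the $\supseteq$ direction invokes \cref{lem:trace_decomposition} directly, and the $\subseteq$ direction decomposes a no-lose strategy in $\tr(\omdpc A)$ into $i$-local and $(m+1)$-local restrictions on $\omdpc A$ before re-applying \cref{lem:trace_decomposition}, with the same two verifications (no-loseness of the restrictions, and exclusion of the forbidden effect $\tuple{\{\exarg{n+1}\},\bot}$). The only step you leave implicit that the paper isolates as a separate lemma is the reachability decomposition $\Reach^{\tr(\omdpc A)}_{\strat_1\cup\strat_2\cup\strat'}(i)=\Reach^{\omdpc A}_{\strat_1}(i)\cup\Reach^{\omdpc A}_{\strat_2}(m+1)$, which underlies your claim that $\strat_1\cup\strat_2\cup\strat'$ agrees with $\strat$ on the reachable set.
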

\begin{proofs}
The $\supseteq$ direction of the proof follows from \cref{lem:trace_decomposition}.
For the $\subseteq$ direction of the proof, we take a no-lose strategy in $\tr(\omdpc A)$ and show that we can decompose it into $\strat_1$, $\strat_2$, $\strat'$, and again apply \cref{lem:trace_decomposition}.
We show that taking the loop of the traced roMDP multiple times does not change the reachable states.
\end{proofs}

\subsection{One-shot Bottom-up Algorithm}\label{sec:bottomupalg}
\Cref{thm:semsol_compositional} gives rise to a natural, bottom-up computation.
\ifwithappendix
We provide pseudocode for the bottom-up algorithm in Appendix~\ref{sec:one_shot_bottom_up_algorithm}.
\else
We provide pseudocode for the bottom-up algorithm in Appendix~B.3 of the technical report~\cite{technicalreport}.
\fi
Its correctness follows directly from \cref{thm:semsol_compositional,lem:local_solution_correct}.
Because our solution is compositional, we can make use of \emph{solution sharing}.

\begin{corollary}[solution sharing]\label{cor:solution_sharing}
Let $\omdpc A$, $\omdpc B$ and $\omdpc C$ be roMDPs.
If $\solarg{\omdpc A} = \solarg{\omdpc B}$ then
\begin{align*}
\begin{array}{cc}
\solarg{\omdpc A \seqcomp \omdpc C} = \solarg{\omdpc B \seqcomp \omdpc C},\\
\solarg{\omdpc C \seqcomp \omdpc A} = \solarg{\omdpc C \seqcomp \omdpc B},
\end{array}\qquad
\begin{array}{cc}
\solarg{\omdpc A \sumcomp \omdpc C} = \solarg{\omdpc B \sumcomp \omdpc C},\\
\solarg{\omdpc C \sumcomp \omdpc A} = \solarg{\omdpc C \sumcomp \omdpc B},
\end{array}\qquad
\solarg{\tr(\omdpc A)} = \solarg{\tr(\omdpc B)}.
\end{align*}
\end{corollary}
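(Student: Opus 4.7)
The plan is to derive the corollary as an immediate consequence of \cref{thm:semsol_compositional}, using the hypothesis $\solarg{\omdpc A} = \solarg{\omdpc B}$ to rewrite inside the compositional operations on solutions. No decomposition arguments are needed beyond what is already proven in \cref{lem:comp_of_seq,lem:comp_of_sum,lem:comp_of_tr}; the compositionality theorem turns the question about the monolithic solutions of composed roMDPs into a purely syntactic equality between expressions built from local solutions.

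Concretely, for the sequential cases I would write
\[
\solarg{\omdpc A \seqcomp \omdpc C}
= \solarg{\omdpc A} \seqcomp \solarg{\omdpc C}
= \solarg{\omdpc B} \seqcomp \solarg{\omdpc C}
= \solarg{\omdpc B \seqcomp \omdpc C},
\]
where the first and last equalities are \cref{thm:semsol_compositional} and the middle step substitutes $\solarg{\omdpc A}$ for $\solarg{\omdpc B}$ via the hypothesis. The symmetric case $\solarg{\omdpc C \seqcomp \omdpc A} = \solarg{\omdpc C \seqcomp \omdpc B}$ is handled identically. The sum cases are exactly the same pattern with $\seqcomp$ replaced by $\sumcomp$, and the trace case is
\[
\solarg{\tr(\omdpc A)}
= \tr(\solarg{\omdpc A})
= \tr(\solarg{\omdpc B})
= \solarg{\tr(\omdpc B)}.
\]

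There is essentially no obstacle here: the only thing to verify is that the operations $\seqcomp$, $\sumcomp$, $\tr$ on solutions, as given in \cref{def:semec_composition} and the corresponding definitions, are well-defined functions of their argument solutions — which is immediate from those definitions since they only quantify over effects recorded in the solutions. The arity conditions implicitly needed (for instance, that $\typemdp{\solarg{\omdpc A}} = \typemdp{\solarg{\omdpc B}}$) follow from $\solarg{\omdpc A} = \solarg{\omdpc B}$ together with the convention that $\typemdp{\solarg{\omdpc A}} = \typemdp{\omdpc A}$. Hence the corollary is a short three-step equality chain in each case, with \cref{thm:semsol_compositional} doing all the real work.
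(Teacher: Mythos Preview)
Your proposal is correct and matches the paper's approach: the paper presents \cref{cor:solution_sharing} as an immediate consequence of compositionality (\cref{thm:semsol_compositional}) without giving any further argument, and your three-step equality chains spell out exactly that derivation.
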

Thus, an efficient implementation of the bottom-up algorithm only needs to compute the solution of each unique subtree of $\sd$ (up to solution equality).
If we limit the number of exits each roMDP can have to a constant, the maximum number of effects that a solution can have is also a constant, yielding a \emph{polynomial-time} algorithm.
However, in general, the running time of the algorithm grows exponentially in the number of exits.

\section{Strategy Refinement Algorithm}\label{sec:refinement}
The bottom-up algorithm above computes the set of all effects, which can grow exponentially in the number of exits.
Below, we present an algorithm for solving almost-sure Büchi by iteratively reasoning about at most one effect per entrance.

First, we introduce the \emph{shortcut graph}, adapting the shortcut MDP from~\cite{DBLP:conf/tacas/WatanabeVHRJ24}.
This graph contains all entrances of $\sd$, connected by transitions which represent the effect of some local strategy (the \emph{shortcuts}).
We show that an entrance in this shortcut graph is winning iff it is winning in $\sem\sd$.

Next, we show that for each entrance the set of all effects forms a \emph{join semilattice}.
This allows us to define the \emph{maximum effect} given some subset of `allowed exits' (akin to the role of $Y$ in the classical Büchi operator).
We define a specialized \emph{refinement Büchi operator} that uses maximum effects to prevent constructing the entire shortcut graph, yielding a polynomial time algorithm.

\subsection{Shortcut Graph}\label{sec:shortcut_graph}
We need some additional definitions to reason about string diagrams.
\begin{definition}%
The set of \emph{component entrances} is inductively defined by the following:
$\componentsent(\omdpc A) \defeq \Set*{ \en^{\omdpc A} }$,
$\componentsent(\omdpc A * \omdpc B) \defeq \componentsent(\omdpc A) \uplus \componentsent(\omdpc B)$ for
$* \in \{\seqcomp, \sumcomp\}$, and
$\componentsent(\tr(\omdpc A)) \defeq \componentsent(\omdpc A)$.
\end{definition}
We define a \emph{connection mapping} $\mappingfunc{\sd}{\omdpc A}\colon \powerset{\ex^{\omdpc A}} \to \powerset{\componentsent(\sd)}$ for each roMDP $\omdpc A$, which maps a set of (local) exits $X \subseteq \ex^{\omdpc A}$ to the connected component entrances in $\sd$ (exits that are not connected to any entrance do not map to any entrance). %
A precise definition of $\mappingfunc{\sd}{\omdpc A}$ straightforwardly follows by induction on the construction of $\sem\sd$.\sj{Is this in the appendix? At least for your thesis, I would recommend having this written out :) Marck: Currently no, but I agree that I want it in my thesis.}
\begin{definition}[shortcut graph]\label{def:shortcut_graph}
The \emph{shortcut graph} $\shortcut$ is a graph $\tuple{\verts_1, \verts_P, \edges}$ where
$\verts_1 \defeq \componentsent(\sd)$,
$\verts_P \defeq \bigcup_{i \in \componentsent(\sd)} \Set*i \times \sol_{\omdpc A}(i)$ and $\edges$ satisfies the following equations for all $\tuple*{\enarg i^{\omdpc A}, \tuple{X, b}} \in \verts_P$: 
\begin{gather*}
\Pre\pars*{\tuple*{\enarg i^{\omdpc A}, \tuple{X, b}}} = \Set*{\enarg i^{\omdpc A}}, \quad
\Post\pars*{\tuple*{\enarg i^{\omdpc A}, \tuple{X, b}}} = \begin{cases}
    \mappingfunc{\sd}{\omdpc A}(X) & \text{if } X \ne \emptyset\\
    \Set*{\enarg i^{\omdpc A}} & \text{if } X = \emptyset
\end{cases}%
\end{gather*}
The Büchi vertices (of $\shortcut$) are $\buchiverts_\sd \defeq \{ \tuple{i, \tuple{X, b}} \in \verts_P \mid b=\top \}$.
\end{definition}
Intuitively, the shortcut graph transforms each leaf roMDP $\omdpc A$ of the string diagram into a graph representing the effects present in $\solarg{\omdpc A}$.
These `leaf graphs' are then combined using the usual operational semantics of the string diagram.
Note that the size of the shortcut graph, like the number of effects in the solution of an roMDP, is exponential in the number of exits.

\begin{figure}[t]
    \begin{minipage}{.32\textwidth}%
        \centering%
        \subfloat[][]{%
            \resizebox{\textwidth}{!}{\begin{tikzpicture}[
    edge/.style={
        Arr,
        every node/.append style={ pos=0.5, },
    },
    s/.style={draw, circle, minimum size=0.4cm, inner sep=0},
]

\contourlength{0.5pt}

\newcommand{\mytransitions}[1]{
    \node[s] (en1) at (0.2,1.8){$\scalemath{0.7}{\enarg 1}$};
    \node[s] (en2) at (0.2,1.0){$\scalemath{0.7}{\enarg 2}$};
    \node[s] (en3) at (0.2,0.2){$\scalemath{0.7}{\enarg 3}$};
    \node[s] (ex1) at (1.8,1.8){$\scalemath{0.7}{\exarg 1}$};
    \node[s] (ex2) at (1.8,1.0){$\scalemath{0.7}{\exarg 2}$};
    \node[s] (ex3) at (1.8,0.2){$\scalemath{0.7}{\exarg 3}$};
}

\begin{scope}
    \draw[omdp] (0,0) rectangle ++(2,2);
    \mytransitions A
    
    \node at (1,2.25) {$\omdpc A$};
    \draw[Arr] (-0.25, 1.8) -- ++(0.25, 0);
    \draw[Arr] (2, 1.8) -- ++(0.5, 0);
    \draw[Arr] (2, 1) -- ++(0.5, 0);
    \draw[Arr] (2, 0.2) -- ++(0.5, 0);

    \draw[thick] 
        (en2) -- (1,1) 
              edge[edge] (ex2)
              edge[edge] (ex3) node[inner sep=0, outer sep=0]{\contour{white}{$\strat_3$}}
        (en1) edge[edge] node{\contour{white}{$\strat_1$}} (ex1)
              edge[edge] node{\contour{white}{$\strat_2$}} (ex2)
        (en3) edge[edge] node{\contour{white}{$\strat_4^+$}} (ex3);
        
\end{scope}

\begin{scope}[xshift=2.5cm]
    \draw[omdp] (0,0) rectangle ++(2,2);
    \mytransitions B
    
    \node at (1,2.25) {$\omdpc B$};
    \draw[Arr] (2, 1.8) -- ++(0.25, 0);
    \draw[Arr, rounded corners] (2, 0.2) -| ++(0.25, -0.5) -| ++(-5,0.5) -- ++(0.25,0);
    \draw[Arr, rounded corners] (2, 1) -| ++(0.5, -1.5) -| ++(-5.5, 1.5) -- ++(0.5,0);
    
    \draw[edge]
        (en1) edge node{\contour{white}{$\strat_5$}} (ex1)
        (en2) edge node{\contour{white}{$\strat_6$}} (ex1)
              edge node{\contour{white}{$\strat_7$}} (ex2)
              edge node{\contour{white}{$\strat_8$}} (ex3)
        (en3) edge node{\contour{white}{$\strat_9$}} (ex3);
\end{scope}
\end{tikzpicture}}%
            \label{fig:shortcut1}
        }%
    \end{minipage}\hfill%
    \begin{minipage}{.32\textwidth}%
        \centering%
        \subfloat[][]{%
            \resizebox{\textwidth}{!}{\begin{tikzpicture}[
    edge/.style={
        Arr,
        every node/.append style={ pos=0.5, },
    },
    s/.style={draw, circle, minimum size=0.4cm, inner sep=0},
]

\contourlength{0.5pt}

\newcommand{\mytransitions}[1]{
    \node[s] (en1) at (0.2,1.8){$\scalemath{0.7}{\enarg 1}$};
    \node[s] (en2) at (0.2,1.0){$\scalemath{0.7}{\enarg 2}$};
    \node[s] (en3) at (0.2,0.2){$\scalemath{0.7}{\enarg 3}$};
    \node[s] (ex1) at (1.8,1.8){$\scalemath{0.7}{\exarg 1}$};
    \node[s] (ex2) at (1.8,1.0){$\scalemath{0.7}{\exarg 2}$};
    \node[s] (ex3) at (1.8,0.2){$\scalemath{0.7}{\exarg 3}$};
}

\begin{scope}
    \draw[omdp] (0,0) rectangle ++(2,2);
    \mytransitions A
    
    \node at (1,2.25) {$\omdpc A$};
    \draw[Arr] (-0.25, 1.8) -- ++(0.25, 0);
    \draw[Arr] (2, 1.8) -- ++(0.5, 0);
    \draw[Arr] (2, 1) -- ++(0.5, 0);
    \draw[Arr] (2, 0.2) -- ++(0.5, 0);

    \draw [thick]
        (en2) -- (1,1)
              edge[edge] (ex2)
              edge[edge] (ex3)
        (en2) edge[edge] node{\contour{white}{$\strat_3$}} (ex2)
        (en3) edge[edge] node{\contour{white}{$\strat_4^+$}} (ex3)
        (en1) -- (1,1.5)
            edge[edge] (ex1)
            edge[edge] (ex2) node[inner sep=0, outer sep=0, minimum size=0, pos=1]{\contour{white}{$\strat_{1\cup2}$}};
\end{scope}

\begin{scope}[xshift=2.5cm]
    \draw[omdp] (0,0) rectangle ++(2,2);
    \mytransitions B
    
    \node at (1,2.25) {$\omdpc B$};
    \draw[Arr] (2, 1.8) -- ++(0.25, 0);
    \draw[Arr, rounded corners] (2, 0.2) -| ++(0.25, -0.5) -| ++(-5,0.5) -- ++(0.25,0);
    \draw[Arr, rounded corners] (2, 1) -| ++(0.5, -1.5) -| ++(-5.5, 1.5) -- ++(0.5,0);
    
    \draw[thick]
        (en1) edge[edge] node{\contour{white}{$\strat_5$}} (ex1)
        (en3) edge[edge] node{\contour{white}{$\strat_9$}} (ex3)
        (en2) -- (1,1)
            edge[edge] (ex1)
            edge[edge] (ex2)
            edge[edge] (ex3) node{\contour{white}{$\strat_{6\cup7\cup8}$}};
\end{scope}
\end{tikzpicture}}%
            \label{fig:shortcut2}
        }%
    \end{minipage}\hfill%
    \begin{minipage}{.32\textwidth}
        \centering%
        \subfloat[][]{%
            \resizebox{\textwidth}{!}{\begin{tikzpicture}[
    edge/.style={
        Arr,
        every node/.append style={ pos=0.5, },
    },
    s/.style={draw, circle, minimum size=0.4cm, inner sep=0},
    l/.style={fill=red!20},
]

\contourlength{0.5pt}

\begin{scope}
    \draw[omdp] (0,0) rectangle ++(2,2);
    \node[s] (en1) at (0.2,1.8){$\enarg 1$};
    \node[s] (en2) at (0.2,1.0){$\enarg 2$};
    \node[s] (en3) at (0.2,0.2){$\enarg 3$};
    \node[s,l] (ex1) at (1.8,1.8){$\exarg 1$};
    \node[s] (ex2) at (1.8,1.0){$\exarg 2$};
    \node[s] (ex3) at (1.8,0.2){$\exarg 3$};
    
    \node at (1,2.25) {$\omdpc A$};
    \draw[Arr] (-0.25, 1.8) -- ++(0.25, 0);
    \draw[Arr] (2, 1.8) -- ++(0.5, 0);
    \draw[Arr] (2, 1) -- ++(0.5, 0);
    \draw[Arr] (2, 0.2) -- ++(0.5, 0);

    \draw
        (en2) -- (1,1)
              edge[edge] (ex2)
              edge[edge] (ex3)
        (en1) edge[edge] node{\contour{white}{$\strat_2$}} (ex2)
        (en2) edge[edge] node{\contour{white}{$\strat_3$}} (ex2)
        (en3) edge[edge] node{\contour{white}{$\strat_4^+$}} (ex3);
        
\end{scope}

\begin{scope}[xshift=2.5cm]
    \draw[omdp] (0,0) rectangle ++(2,2);
    \node[s,l] (en1) at (0.2,1.8){$\enarg 1$};
    \node[s] (en2) at (0.2,1.0){$\enarg 2$};
    \node[s] (en3) at (0.2,0.2){$\enarg 3$};
    \node[s,l] (ex1) at (1.8,1.8){$\exarg 1$};
    \node[s] (ex2) at (1.8,1.0){$\exarg 2$};
    \node[s] (ex3) at (1.8,0.2){$\exarg 3$};
    
    \node at (1,2.25) {$\omdpc B$};
    \draw[Arr] (2, 1.8) -- ++(0.25, 0);
    \draw[Arr, rounded corners] (2, 0.2) -| ++(0.25, -0.5) -| ++(-5,0.5) -- ++(0.25,0);
    \draw[Arr, rounded corners] (2, 1) -| ++(0.5, -1.5) -| ++(-5.5, 1.5) -- ++(0.5,0);
    
    \draw[thick]
        (en2) -- (1,1) 
            edge[edge] node[pos=0]{\contour{white}{$\strat_{7\cup8}$}} (ex2)
            edge[edge] (ex3)
        (en3) edge[edge] node{\contour{white}{$\strat_9$}} (ex3);
\end{scope}
\end{tikzpicture}}%
            \label{fig:shortcut3}
        }%
        \end{minipage}%
    \caption{Example refinement algorithm execution.}
    \label{fig:shortcut}
\end{figure}
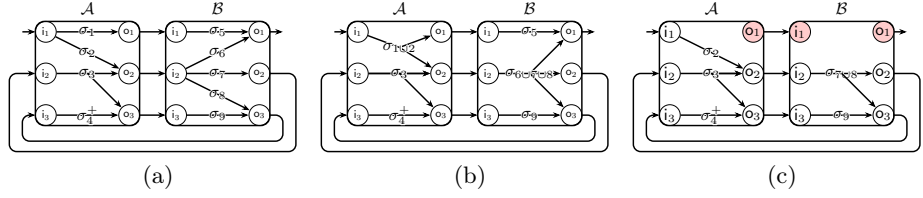
\begin{myexample}\label{ex:shortcut}
    \cref{fig:shortcut1} depicts the shortcut graph of $\omdpc C$ in \cref{fig:example_omdp}.
    Each arrow represents a no-lose strategy $\strat$ that can be played.
    The vertices reached by the arrow depict the proper exit set of $\strat$.
    In the figure, we have omitted no-lose strategies that can be obtained by taking the union of two no-lose strategies.
    Beside a strategy we write a $+$ to indicate that a Büchi state is reachable.
    We see that $\solstrat^{\omdpc A}(\enarg2, \strat_3) = \tuple{\{\exarg2, \exarg3\}, \bot}$ and $\solstrat^{\omdpc A}(\enarg3, \strat_4) = \tuple{\{\exarg3\}, \top}$.
    To satisfy the Büchi condition, we must take the transition from $\enarg3^{\omdpc A}$ to $\exarg3^{\omdpc A}$ infinitely often.
\end{myexample}
The shortcut graph preserves the winning region of the entrances:
\begin{restatable}{theorem}{lemshortcutgraph}\label{lem:shortcut_graph}
For all $i \in \componentsent(\sd)$ we have $i \models^{\sem\sd} \buchiobj\buchiverts \text{ iff } i \models^\shortcut \buchiobj\buchiverts_\sd.$
\end{restatable}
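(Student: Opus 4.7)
The plan is to prove both implications by translating strategies between $\sem\sd$ and $\shortcut$ and then invoking the almost-sure Büchi characterization from Sect.~2 (a vertex is almost-sure accepting iff some strategy ensures that every reachable vertex can still reach a Büchi vertex). The two directions mirror each other, and I would carry them out uniformly for every $i \in \componentsent(\sd)$.

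For the forward direction, fix a winning $\strat$ in $\sem\sd$ from $i$. For every component entrance $j$ of a leaf roMDP $\omdpc A$ that is $\sem\sd$-reachable from $i$ under $\strat$, restrict $\strat$ to its ambient leaf component to obtain a $j$-local strategy $\strat_j$. I would show $\strat_j$ is no-lose from $j$: for each $v \in \Reach_{\strat_j}(j)$, either $\Reach_{\strat_j}(v)$ contains an exit of $\omdpc A$, or it is entirely contained in $\omdpc A$, in which case it coincides with $\Reach_\strat(v)$ and the global winning of $\strat$ forces $v, \strat_j \models \buchiobj\buchiverts$. Thus $\strat_j$ has a well-defined effect $\tuple{X_j, b_j} \in \solarg{\omdpc A}(j)$, and I define the shortcut strategy by $\strat'(j) := \{\tuple{j, \tuple{X_j, b_j}}\}$. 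An induction on the construction of $\sd$ (unfolding $\mappingfunc{\sd}{\omdpc A}$) shows that the entrances $\shortcut$-reachable from $i$ under $\strat'$ coincide with the entrances $\sem\sd$-reachable from $i$ under $\strat$. If some $\shortcut$-reachable vertex could not reach a Büchi shortcut vertex, all its $\shortcut$-reachable entrances would have effects with $b = \bot$, meaning no Büchi state of $\sem\sd$ is $\sem\sd$-reachable from the corresponding vertex under $\strat$, contradicting winning.

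For the converse, fix a winning $\strat'$ in $\shortcut$ from $i$. For every entrance $j$ $\shortcut$-reachable from $i$ under $\strat'$ and every $\tuple{j, \tuple{X, b}} \in \strat'(j)$, pick a no-lose witness $\strat_{j, \tuple{X, b}}$ realizing the effect $\tuple{X, b}$, which exists by definition of $\solarg{\omdpc A}(j)$. Let $\strat$ be the pointwise union of all chosen witnesses. To verify $\strat$ is winning from $i$, I show every $v \in \Reach_\strat(i)$ can reach a Büchi state under $\strat$: an $\sem\sd$-run from $i$ to $v$ visits a sequence of entrances $j_0 = i, \ldots, j_\ell$ and finally enters the component of $j_\ell$ along some active witness $\strat_{j_\ell, \tuple{X, b}}$. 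Being no-lose, from $v$ one can either reach a Büchi state of that component (done) or reach an exit whose $\mappingfunc{\sd}{\omdpc A}$-image is some next entrance $j_{\ell+1} \in \Post(\tuple{j_\ell, \tuple{X, b}})$ in $\shortcut$. Iterating, and using that winning $\strat'$ guarantees that every $\shortcut$-reachable vertex can reach some Büchi vertex $\tuple{k, \tuple{X', \top}}$, we eventually land in a component whose active witness has $b = \top$, and by definition of $\solstrat$ this witness connects the current entrance to a Büchi state of $\sem\sd$.

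The main obstacle is the union-of-strategies step in the backward direction: a shortcut strategy may enable multiple effects at a single entrance, and distinct entrances of one leaf component may share reachable vertices, so $\strat$ may choose at $v$ actions outside the support of any single witness. One has to argue this only enlarges $\Reach_\strat(v)$, and that to exhibit a Büchi-reaching path from $v$ it suffices to inherit the guarantees of any one witness whose $\sem\sd$-run reached $v$. A secondary concern is making $\mappingfunc{\sd}{\omdpc A}$ precise enough (by induction on $\sd$) for the correspondence between $\sem\sd$-reachability and $\shortcut$-reachability to be exact, including the handling of the trace operator and its auxiliary $*$ vertex.
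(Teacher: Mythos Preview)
Your proposal is correct and follows the same overall architecture as the paper's proof: translate winning strategies in both directions and invoke the reachability characterization of almost-sure Büchi acceptance. The forward direction ($\sem\sd \Rightarrow \shortcut$) matches the paper's completeness lemma almost exactly---restrict the global strategy to each reachable entrance, argue the restrictions are no-lose, read off their effects, and verify the resulting shortcut strategy is winning.

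The one notable technical difference is in the backward direction ($\shortcut \Rightarrow \sem\sd$). You build the global strategy as the pointwise union of all local witness strategies and must then argue that the enlarged reachable sets still permit reaching a Büchi vertex from every reachable state---precisely the obstacle you flag. The paper sidesteps this entirely: it constructs a strategy $\strat_G$ \emph{with memory} (remembering the entrance through which the current component was entered, so as to know which local witness to play), observes that $\strat_G$ is winning, and then appeals to the sufficiency of memoryless strategies for almost-sure Büchi to conclude. Your union approach is more direct and constructive but demands the extra reachability-invariant bookkeeping; the paper's memory-then-flatten trick is cleaner but nonconstructive. Both are valid, and your identification of the subtlety shows you understand where the work lies.
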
\noindent%
\ifwithappendix
The proof is given in \cref{app:shortcut}.
\else
The proof is given in the appendix of the technical report~\cite{technicalreport}.
\fi

\subsection{Strategy Refinement Algorithm}\label{sec:refinement_algorithm}
In our novel strategy refinement algorithm, we exploit the fact that the effects of a given entrance are a \emph{join semilattice}.
In short, the reason is that no-lose strategies are closed under union and $\solarg{\omdpc A}(i)$ is finite. %
\begin{restatable}[no-lose strategy union]{lemma}{lemstrategyunion}\label{lem:strategy_union}
Let $\strat_1, \strat_2$ be no-lose strategies from entrance $i$.
We have that $\solstrat(i, \strat_1 \cup \strat_2) = \solstrat(i, \strat_1) \sqcup \solstrat(i, \strat_2)$.
\end{restatable}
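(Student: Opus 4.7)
The plan is to reduce the statement to the reachability identity
\[
\Reach_{\strat_1 \cup \strat_2}(i) \;=\; \Reach_{\strat_1}(i) \cup \Reach_{\strat_2}(i),
\]
which will immediately yield both coordinates of the effect by intersecting with $\ex$ and $\buchiverts$ respectively. Along the way I must also check that $\strat_1 \cup \strat_2$ is itself no-lose from $i$, since $\solstrat(i, \strat_1 \cup \strat_2)$ is only defined in that case.

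First I would prove the identity. The $\supseteq$ inclusion is immediate from $\Post_{\strat_j}(u) \subseteq \Post_{\strat_1 \cup \strat_2}(u)$ for every vertex $u$, which follows from the pointwise definition $(\strat_1 \cup \strat_2)(u) = \strat_1(u) \cup \strat_2(u)$ on player-1 vertices and from equality on probabilistic vertices. For $\subseteq$, I would show by induction on the fixpoint construction of $\Reach$ (\cref{def:reachable_vertices}) that the right-hand side is closed under $\Post_{\strat_1 \cup \strat_2}$. The only non-trivial case is a player-1 vertex $u \in \Reach_{\strat_1}(i)$, say: here $\strat_1(u) \subseteq \Reach_{\strat_1}(i)$ by closure of reachability, while $\strat_2(u)$ is either empty --- by the $i$-locality clause of \cref{def:no_lose}, in case $u \notin \Reach_{\strat_2}(i)$ --- or otherwise already contained in $\Reach_{\strat_2}(i)$. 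Probabilistic vertices are trivial because $\Post_{\strat_1 \cup \strat_2}(u) = \Post(u)$.

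Next I would verify that $\strat_1 \cup \strat_2$ is no-lose from $i$ using \cref{lem:no_lose_equivalent_def}. Any $v \in \Reach_{\strat_1 \cup \strat_2}(i)$ lies in $\Reach_{\strat_j}(i)$ for some $j$ by the identity, and the no-lose property of $\strat_j$ provides a witness in $\Reach_{\strat_j}(v) \cap (\ex \cup \buchiverts)$. The easy inclusion $\Reach_{\strat_j}(v) \subseteq \Reach_{\strat_1 \cup \strat_2}(v)$ then transports this witness to the union, establishing no-loseness of $\strat_1 \cup \strat_2$ from $i$.

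Finally, the effect equation falls out: intersecting the identity with $\ex$ yields $\Exits{\strat_1 \cup \strat_2}{i} = \Exits{\strat_1}{i} \cup \Exits{\strat_2}{i}$, while intersecting with $\buchiverts$ yields the disjunction of the two Büchi-reachability flags --- which are exactly the two coordinates of $\solstrat(i, \strat_1) \sqcup \solstrat(i, \strat_2)$ by the definition of $\sqcup$. I expect the only delicate step, and hence the main obstacle, to be the $\subseteq$ direction of the reachability identity: without the $i$-locality clause in the definition of a no-lose strategy, the union strategy could in principle activate a player-1 choice at a vertex reachable under one component that leads outside both original reachable sets, and locality is precisely what rules this out.
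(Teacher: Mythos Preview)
Your proposal is correct and follows essentially the same approach as the paper: the paper also factors the argument through the reachability identity $\Reach_{\strat_1 \cup \strat_2}(i) = \Reach_{\strat_1}(i) \cup \Reach_{\strat_2}(i)$ (proving $\subseteq$ via $i$-locality, exactly as you identify), and then derives the no-lose property and both coordinates of the effect from it. Your induction argument for the $\subseteq$ direction is in fact more explicit than the paper's, which simply asserts the inclusion ``follows from the fact that $\strat_1$ and $\strat_2$ are $i$-local.''
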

\begin{lemma}%
\label{lem:lattice}
$\tuple{\solarg{\omdpc A}(i), \sqsubseteq}$ is a \emph{join semilattice} where $\sqsubseteq$ is the lexicographical order, that is, $\tuple{T_1, b_1} \sqsubseteq \tuple{T_2, b_2} \text{ iff } T_1 \subset T_2 \vee [T_1 \subseteq T_2 \wedge b_1 \le b_2]$.
\end{lemma}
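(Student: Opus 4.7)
The plan is to verify (i) that $\sqsubseteq$ is a partial order on $\solarg{\omdpc A}(i)$ and (ii) that every pair of elements admits a least upper bound inside $\solarg{\omdpc A}(i)$.

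The partial-order part is routine. Reflexivity follows from the second disjunct with $T \subseteq T$ and $b \le b$. For antisymmetry, either disjunct entails subset inclusion on the exit sets, so $\tuple{T_1, b_1} \sqsubseteq \tuple{T_2, b_2}$ together with the reverse relation forces $T_1 = T_2$; the strict-subset disjuncts then both fail, so both sides invoke the second disjunct, yielding $b_1 \le b_2$ and $b_2 \le b_1$. Transitivity splits into four cases on which disjunct is used in each input: whenever either input uses the strict-subset disjunct, composition of $\subsetneq$ with $\subseteq$ yields $\subsetneq$ in the conclusion; otherwise both inputs use the equality disjunct, and the conclusion has $T_1 = T_2 = T_3$ with $b_1 \le b_2 \le b_3$.

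For the join, fix $e_1 = \tuple{T_1, b_1}, e_2 = \tuple{T_2, b_2} \in \solarg{\omdpc A}(i)$ with witnessing no-lose strategies $\strat_1, \strat_2$. By \cref{lem:strategy_union}, $\solstrat(i, \strat_1 \cup \strat_2) = e_1 \sqcup e_2 = \tuple{T_1 \cup T_2, b_1 \vee b_2}$, so $e_1 \sqcup e_2 \in \solarg{\omdpc A}(i)$ and is a $\sqsubseteq$-upper bound of $e_1, e_2$. Any upper bound $\tuple{T, b}$ satisfies $T \supseteq T_1 \cup T_2$ (as either disjunct forces $T_j \subseteq T$). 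Restricting to the non-empty finite subfamily of upper bounds of the form $\tuple{T_1 \cup T_2, b}$, the totality of $\le$ on $\{\bot, \top\}$ yields a unique $\sqsubseteq$-minimum $\tuple{T_1 \cup T_2, b^*}$. This is the sought LUB: it lies below every upper bound with $T = T_1 \cup T_2$ by choice of $b^*$ and the second disjunct, and below every upper bound with $T \supsetneq T_1 \cup T_2$ by the first disjunct.

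The main subtlety to flag is that the pointwise $\sqcup$ is only guaranteed to be an upper bound in general, not necessarily the LUB under $\sqsubseteq$: if $\solarg{\omdpc A}(i)$ happens to contain an element with exit set exactly $T_1 \cup T_2$ and strictly smaller $b$-value, then that element is the actual join. Finiteness of $\solarg{\omdpc A}(i)$ as a subset of $\powerset{\ex^{\omdpc A}} \times \{\bot, \top\}$, combined with closure under $\sqcup$ (\cref{lem:strategy_union}), is precisely what makes the minimum $b^*$ well-defined.
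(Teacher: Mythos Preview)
Your proof is correct and uses the same ingredients the paper gestures at in the sentence preceding the lemma (closure under union via \cref{lem:strategy_union} together with finiteness of $\solarg{\omdpc A}(i)$); the paper provides no further argument. You are in fact more careful than the paper: you rightly flag that under the lexicographic order the pointwise $\sqcup$ is only an upper bound, not automatically the least one, and your repair via the minimal $b^*$ among upper bounds with exit set $T_1 \cup T_2$ (non-empty because $e_1 \sqcup e_2$ lies in it) is exactly what is needed to close the gap.
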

Using the join semilattice structure of effects, we now define \emph{maximum effects}.
\begin{definition}[maximum effect]\label{def:maximum_effect}
The \emph{maximum effect for $i$ restricted to}\sj{Can we give the notation a more descriptive thing, maybe use max here?, or $|E$? Marck: TODO, maybe for thesis?} $E \subseteq [n]$ is $\solarg{\omdpc A}(i, E) \defeq \max \{ \tuple{E', b} \in \solarg{\omdpc A}(i) \mid E' \subseteq E \}$ where $\max\emptyset \defeq \tuple{\emptyset, \bot}$.%
\end{definition}
Note that the maximum effect restricted to $E$ is equal to the join of all effects that have a proper exit set that is a subset of $E$, and therefore always exists.
Using this notion of maximum effect, we define the \emph{refinement Büchi operator} which is specialized for shortcut graphs.

\begin{definition}[refinement Büchi operator]
Let $X, Y \subseteq \verts_1^{\graph_\sd}$, we define
\begin{align*}
F'(X, Y) \defeq \left\{\enarg i^{\omdpc A} \in \componentsent(\sd) \middle|
\begin{array}{ll}
\text{let}  &\tuple{T, b} \defeq \solarg{\omdpc A}(i, Y)\\
\text{s.t.} &\tuple{T, b} \ne \tuple{\emptyset, \bot} \text{ and } b \vee \mappingfunc{\sd}{\omdpc A}(T) \cap X \ne \emptyset
\end{array}
\right\}.
\end{align*}
\end{definition}%
This new operator closely follows the classic Büchi operator defined in \cref{sec:prelims}.
We will now show that the new refinement Büchi operator matches the classic Büchi operator when applied to the shortcut graph.

\begin{restatable}[refinement]{theorem}{thmrefinement}\label{thm:refinement}
On $\shortcut$, $F'(X, Y) = F(X, Y)$ for all $X, Y \subseteq V_1$. %
\end{restatable}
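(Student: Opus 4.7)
The plan is to fix an arbitrary vertex $v = \enarg{i}^{\omdpc A} \in V_1^{\shortcut} = \componentsent(\sd)$ and unfold both $v \in F(X,Y)$ and $v \in F'(X,Y)$ into equivalent statements about effects in $\solarg{\omdpc A}(i)$. The key tool is the join semilattice structure of effects (\cref{lem:strategy_union,lem:lattice}), which lets us collapse an existentially quantified effect-witness on the $F$-side into the single maximum-effect witness on the $F'$-side.

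First I would unfold $F$ using \cref{def:shortcut_graph}: the successors of $v$ in $\shortcut$ are exactly the probabilistic vertices $v' = \tuple{v, \tuple{T',b'}}$ with $\tuple{T',b'} \in \solarg{\omdpc A}(i)$. For $T' \ne \emptyset$ we have $\Post(v') = \mappingfunc{\sd}{\omdpc A}(T')$, and $v' \in \buchiverts_\sd$ iff $b' = \top$. Reading $\solarg{\omdpc A}(i, Y)$ as the maximum effect $\tuple{T,b}$ whose associated $\shortcut$-successor lands in $Y$ (i.e., $\mappingfunc{\sd}{\omdpc A}(T) \subseteq Y$, with $v \in Y$ additionally required in the degenerate self-loop case $T=\emptyset$), both operators then range over the same restricted universe of effects.

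For the $\supseteq$ direction, given an $F$-witness $\tuple{T',b'}$, \cref{lem:lattice} yields that the maximum effect $\tuple{T,b} \defeq \solarg{\omdpc A}(i,Y)$ dominates it: $T' \subseteq T$, $b' \le b$, and $\tuple{T,b} \ne \tuple{\emptyset,\bot}$. By monotonicity of $\mappingfunc{\sd}{\omdpc A}$, the $F'$-disjunction $b \lor \mappingfunc{\sd}{\omdpc A}(T) \cap X \ne \emptyset$ inherits from the witness's own disjunction. For the $\subseteq$ direction, I take $\tuple{T,b} \defeq \solarg{\omdpc A}(i,Y)$ itself: by \cref{lem:strategy_union} this max is realized by a single no-lose strategy, so $\tuple{v,\tuple{T,b}}$ is indeed a probabilistic vertex of $\shortcut$; its post-image sits in $Y$ by construction, and the $F'$-disjunction provides either a Büchi flag ($b=\top$, hence $\tuple{v,\tuple{T,b}} \in \buchiverts_\sd$) or an $X$-predecessor, matching the two disjuncts of $F$.

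The main subtlety, and what I view as the principal obstacle, is the degenerate $T=\emptyset$ case: here $b$ must be $\top$ (since $\tuple{\emptyset,\bot}$ is never a valid effect), and $\Post(v') = \{v\}$ rather than $\emptyset$, so the $F$-side constraint $\Post(v') \subseteq Y$ forces $v \in Y$. Making the restricted-maximum definition consistent with this (excluding $\tuple{\emptyset, \top}$ from the candidate set precisely when $v \notin Y$) is exactly what glues the two operators together on self-loops. A secondary bookkeeping concern is exits of $\omdpc A$ that $\mappingfunc{\sd}{\omdpc A}$ leaves unconnected (e.g., global exits of $\sem\sd$); these behave as dead-end sinks and must be treated identically on both sides of the equation.
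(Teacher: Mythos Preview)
Your approach is essentially the paper's: unfold $F$ on $\shortcut$ via \cref{def:shortcut_graph}, then use the join-semilattice structure of $\solarg{\omdpc A}(i)$ to replace the existential over effects by the single maximum effect $\solarg{\omdpc A}(i,Y)$. The paper presents this as one equational chain (``an effect satisfies the previous condition iff the maximum effect does''), whereas you split it into the two inclusions; the content is the same, and your $\supseteq$/$\subseteq$ arguments are exactly what justifies that step.

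One remark: the $T=\emptyset$ self-loop subtlety and the type mismatch (the paper feeds $Y\subseteq\componentsent(\sd)$ into $\solarg{\omdpc A}(i,\cdot)$, which is defined for exit sets) that you flag as the principal obstacle are in fact glossed over in the paper's proof --- its first rewriting step simply writes $\mappingfunc{\sd}{\omdpc A}(T)\subseteq Y$ for $\Post(v')\subseteq Y$ without treating the $T=\emptyset$ case separately. So your care here is warranted and goes slightly beyond what the paper spells out, but it does not change the argument's shape.
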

\begin{myproof}
Starting with the definition of $F$ (\cref{def:buchi_operator}), we expand its definition in a shortcut graph and work towards the definition of $F'$.
\begin{align*}
&F(X, Y) \defeq \Set*{ v \in \verts_1 \midvert \exists v' \in \Post(v)\colon \Post(v') \subseteq Y \text{ and } v' \in \buchiverts \vee v' \in \Pre(X) }\\
\eqcomment{expand shortcut graph definition}
&{}= \Bigl\{ \enarg i^{\omdpc A} \in \verts_1 \mid \exists \tuple{T, b} \in \solarg{\omdpc A}(i)\colon{}
\mappingfunc{\sd}{\omdpc A}(T) \subseteq Y \text{ and } b \vee \mappingfunc{\sd}{\omdpc A}(T) \cap X \ne \emptyset \Bigr\}\\
\eqcomment{an effect satisfies the previous condition iff the maximum effect does}
&{}=
\left\{
\enarg i^{\omdpc A} \in \verts_1 \middle|
\begin{array}{ll}
\text{let} &\tuple{T,b} \defeq \max\Set*{ \tuple{T', b'} \in \solarg{\omdpc A}(i) \mid T \subseteq Y} \\
\text{s.t.} &\tuple{T,b} \ne \tuple{\emptyset, \bot} \text{ and } b \vee \mappingfunc{\sd}{\omdpc A}(T) \cap X \ne \emptyset
\end{array}
\right\}
\\
\eqcomment{use \cref{def:maximum_effect} of maximum effect}
&{}=
\left\{\enarg i^{\omdpc A} \in \verts_1 \middle|
\begin{array}{ll}
\text{let} &\tuple{T, b} \defeq \solarg{\omdpc A}(i, Y)\\
\text{s.t.} &\tuple{T,b} \ne \tuple{\emptyset, \bot} \text{ and } b \vee \mappingfunc{\sd}{\omdpc A}(T) \cap X \ne \emptyset
\end{array}
\right\}
= F'(X, Y)
\end{align*}
\end{myproof}
As a consequence of \cref{thm:refinement}, $F$ can be replaced by $F'$ in the fixpoint computation of \cref{lem:buchi}.
Our new strategy refinement algorithm is then simply the Kleene iteration of $F'$.
Crucially, this algorithm does \emph{not} require an explicit representation of the shortcut graph.

\begin{myexample}\label{ex:refinement_alg}
    We again consider \cref{fig:shortcut}
    and follow the Kleene iteration steps as described below \cref{lem:buchi}, but using the refinement Büchi operator $F'$ instead.
    Initially, $Y$ is equal to $\verts_1$. 
    The maximum effects considered by $F'$ for each entrance are depicted in~\subref{fig:shortcut2}.
    The next value of $Y$ (after computing the `inner fixpoint', i.e., $\lfpp X.F'(X,Y)$) is equal to all vertices that can reach a Büchi vertex by playing the depicted effects.
    Entrances that cannot reach a Büchi vertex by playing the maximum effects must be losing; in this example, this is only $\enarg 1^{\omdpc B}$.
    Therefore, $\enarg1^{\omdpc B}$ is not in the updated value of $Y$.
    In the next iteration of computing the inner fixpoint, we use the updated value for $Y$, see \subref{fig:shortcut3}.
    For $\enarg1^{\omdpc A}$, we compute the maximum effect not reaching $\exarg1$, which is $\tuple*{\Set*{\exarg 2}, \bot}$.
    Similarly, for $\enarg2^{\omdpc B}$, we compute the maximum effect not reaching $\exarg1$, which is $\tuple*{\Set*{\exarg2, \exarg3}, \bot}$.
    Finally, all entrances in $Y$ can reach a Büchi vertex and $Y$ is the greatest fixpoint of $F'$.
    Consequently, $Y$ is equal to the winning region of the shortcut graph.
\end{myexample}
In the classical Büchi algorithm, the number of times $F$ is computed (and therefore $F'$ in the refinement algorithm) is polynomial in the size of the MDP~{\cite{DBLP:conf/csl/ChatterjeeJH03}}.
Thus, the strategy refinement algorithm runs in polynomial time if we can compute maximum effects in polynomial time, which we show below. %
We also outline how intermediate results of the refinement algorithm can be cached.

\subsubsection{Computation of Maximum Effect.}\label{sec:biggest_effect_computation}
We briefly sketch how to efficiently compute the maximum effect for entrance $i$ restricted to some exit set $E \subseteq \ex^{\omdpc A}$ in roMDP $\omdpc A$.
First, we compute the winning region $W$ of the almost-sure Büchi objective $\buchiobj{(E \cup \buchiverts)}$, which can be done in polynomial time~\cite{DBLP:conf/csl/ChatterjeeJH03}.
These are exactly the vertices that have a no-lose strategy that does not reach exits $\ex^{\omdpc A} \setminus E$.
If $W$ includes $i$, then the strategy $\strat_W$ that randomizes over vertices that remain in $W$ reaches the biggest set of vertices and is no-lose.
Hence, $\solarg{\omdpc A}(i, E) = \solstrat\pars*{i, \strat_W}$.

\subsubsection{Caching.}\label{sec:refinement_caching}
The main computation of the refinement algorithm is that of maximum effects, which we try to cache. 
As an initial cache, we simply store the maximum effects and the inputs that were used to obtain it.
For example, if at some point during the refinement algorithm we compute that $\solarg{\omdpc A_1}(i, E) = E'$, then for some other occurrence of $\omdpc A$ in the string diagram, say, $\omdpc A_2$, we use the cache to obtain that $\solarg{\omdpc A_2}(i, E) = E'$.
Here we assume that $\buchiverts$ is defined equally for repeated occurrences of a component (otherwise, the cache may not be applied as is).

We can improve our caching by observing that the maximum effect is \emph{monotone} for a fixed entrance $i$:
\begin{restatable}{lemma}{lembiggestsolutionmonotone}%
\label{lem:biggest_solution_monotone}
If $E \subseteq E'$ then $\solarg{\omdpc A}(i, E) \sqsubseteq \solarg{\omdpc A}(i, E')$.
\end{restatable}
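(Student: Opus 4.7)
\begin{proofs}
The plan is to reduce the statement to a basic monotonicity property of the max/join operation on the join semilattice $\tuple{\solarg{\omdpc A}(i), \sqsubseteq}$ established in \cref{lem:lattice}. Concretely, define $S_E \defeq \{\tuple{T,b} \in \solarg{\omdpc A}(i) \mid T \subseteq E\}$. The first observation is immediate: if $E \subseteq E'$, then any effect $\tuple{T,b}$ with $T \subseteq E$ also satisfies $T \subseteq E'$, so $S_E \subseteq S_{E'}$.

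Next, recall from \cref{def:maximum_effect} that $\solarg{\omdpc A}(i, E) = \max S_E$ with the convention $\max \emptyset \defeq \tuple{\emptyset, \bot}$. By \cref{lem:lattice}, $\tuple{\emptyset, \bot}$ is the least element of the lattice (it is below every effect w.r.t.\ $\sqsubseteq$), so in the degenerate case where $S_E = \emptyset$ we trivially have $\solarg{\omdpc A}(i, E) = \tuple{\emptyset, \bot} \sqsubseteq \solarg{\omdpc A}(i, E')$.

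In the nontrivial case $S_E \ne \emptyset$, I would use that the join semilattice structure on $\solarg{\omdpc A}(i)$ (witnessed concretely by \cref{lem:strategy_union}: unions of no-lose strategies realize joins of effects) ensures the maximum of any finite non-empty subset is the join of its elements. Since $S_E \subseteq S_{E'}$ is a finite inclusion, the join over the larger set dominates the join over the smaller: $\bigsqcup S_E \sqsubseteq \bigsqcup S_{E'}$, i.e., $\solarg{\omdpc A}(i, E) \sqsubseteq \solarg{\omdpc A}(i, E')$.

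I do not expect any obstacle: the only subtlety is handling the empty-set convention for $\max$, which is taken care of by the fact that $\tuple{\emptyset, \bot}$ is the bottom of the lattice. The argument is otherwise a one-line monotonicity of join over set inclusion.
\end{proofs}
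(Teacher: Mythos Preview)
Your proof is correct and follows essentially the same approach as the paper: both reduce the claim to the monotonicity of $\max$ (equivalently, join) over set inclusion in the join semilattice $\tuple{\solarg{\omdpc A}(i), \sqsubseteq}$ from \cref{lem:lattice}. Your version is simply more explicit about the empty case; one tiny wording quibble is that $\tuple{\emptyset,\bot}$ is not actually an element of $\solarg{\omdpc A}(i)$ (it cannot be an effect of a no-lose strategy), so it is a lower bound rather than a least element of that lattice, but this does not affect the argument.
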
\noindent%
We can use this monotonicity to cache the maximum effect of some exit set that is `in between' two previous computations, formalized by the following lemma:%
\begin{restatable}{lemma}{lemsolutionreuse}\label{lem:solution_reuse}
Let $\tuple{Y', b} \defeq \solarg{\omdpc A}(i, Y)$.
If $Y' \subseteq Y'' \subseteq Y$ then $\solarg{\omdpc A}(i, Y'') = \tuple{Y', b}$.
\end{restatable}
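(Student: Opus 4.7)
I would prove the two inequalities $\solarg{\omdpc A}(i, Y'') \sqsubseteq \tuple{Y', b}$ and $\tuple{Y', b} \sqsubseteq \solarg{\omdpc A}(i, Y'')$ separately, and then conclude by antisymmetry of the partial order $\sqsubseteq$ (routine to check from its lexicographic definition).

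For the upper bound $\solarg{\omdpc A}(i, Y'') \sqsubseteq \tuple{Y', b}$, I would invoke monotonicity (Lemma~\ref{lem:biggest_solution_monotone}) directly: since $Y'' \subseteq Y$ by hypothesis, we immediately get $\solarg{\omdpc A}(i, Y'') \sqsubseteq \solarg{\omdpc A}(i, Y) = \tuple{Y', b}$.

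For the lower bound $\tuple{Y', b} \sqsubseteq \solarg{\omdpc A}(i, Y'')$, I would argue that $\tuple{Y', b}$ itself appears as a candidate in the set whose maximum defines $\solarg{\omdpc A}(i, Y'')$. In the main case, $\tuple{Y', b}$ is the actual maximum of the nonempty set $\{\tuple{E', b'} \in \solarg{\omdpc A}(i) \mid E' \subseteq Y\}$, so by the join-semilattice structure (Lemma~\ref{lem:lattice}) it belongs to $\solarg{\omdpc A}(i)$. Combined with the hypothesis $Y' \subseteq Y''$, this places $\tuple{Y', b}$ inside $\{\tuple{E', b'} \in \solarg{\omdpc A}(i) \mid E' \subseteq Y''\}$, whence $\tuple{Y', b} \sqsubseteq \solarg{\omdpc A}(i, Y'')$ by \cref{def:maximum_effect}.

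The only real obstacle is the edge case where the candidate set for $Y$ is empty and $\tuple{Y', b} = \tuple{\emptyset, \bot}$ arises through the convention $\max\emptyset \defeq \tuple{\emptyset, \bot}$, so $\tuple{Y', b}$ need not be in $\solarg{\omdpc A}(i)$. I would handle this case separately by observing that the candidate set for $Y''$ is a subset of the candidate set for $Y$ and therefore also empty, so the same convention yields $\solarg{\omdpc A}(i, Y'') = \tuple{\emptyset, \bot} = \tuple{Y', b}$ on the nose.
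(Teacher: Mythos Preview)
Your proof is correct and follows essentially the same sandwich argument as the paper: bound $\solarg{\omdpc A}(i, Y'')$ above by monotonicity (the candidate set for $Y''$ sits inside that for $Y$) and below by observing that $\tuple{Y', b}$ itself lies in the candidate set for $Y''$. The paper phrases this as the abstract order-theoretic fact ``if $\max X = n$, $Y \subseteq X$, and $n \in Y$, then $\max Y = n$'' and leaves the instantiation implicit; your version spells out the instantiation and is actually more careful than the paper in treating the $\max\emptyset$ edge case separately.
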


\section{Related Work}
Compositional model checking has been an active area of research for several decades, originating with the seminal work of Clarke et al.~\cite{ClarkeLM89}.
In this context, we focus on two closely related lines of inquiry: compositional probabilistic model checking and compositional algorithms for two-player infinite games.
Other forms of compositions, such as parallel compositions, have also been studied~\cite{FengKP10,FengKP11}.

\paragraph{Sequential Compositional Probabilistic Model Checking.}
Compositional probabilistic model checking using string diagrams of MDPs has been developed in a series of works~\cite{DBLP:conf/cav/WatanabeVJH24,DBLP:conf/cav/WatanabeEAH23,DBLP:conf/tacas/WatanabeVHRJ24}. 
These studies focus on quantitative properties such as reachability probabilities and rewards.
Likewise, model checking of hierarchical MDPs~\cite{HauskrechtMKDB98} is also closely related to our approach.
In particular, recent works~\cite{JungesS22,NearyVCT22} incorporate parameter synthesis techniques into hierarchical model checking algorithms.
None of the works above considers repeated reachability. 
Sequentially composed MDPs have also been studied in the context of learning-based approaches, such as in~\cite{JothimuruganBBA21,DBLP:conf/ifaamas/DelgrangeAL0N025}.

\paragraph{Compositional Algorithms for Two-Player Infinite Games.}
The works~\cite{WatanabeEAH21,WatanabeEAH25} introduce string diagrammatic frameworks for parity games and mean-payoff games, respectively.
The compositional algorithm proposed in~\cite{WatanabeEAH25}, applicable to both types of games, relies on enumerating positional strategies for each subsystem.
In contrast,~\cite{DBLP:conf/calco/000325} presents a compositional algorithm for string diagrams of parity games that avoids such enumeration, in the spirit of~\cite{DBLP:conf/tacas/WatanabeVHRJ24}. 
While its worst-case complexity is exponential in the number of exits, our strategy refinement algorithm operates in polynomial time.

\section{Conclusion}\label{sec:conclusion}
In this paper, we have presented two approaches for the verification of almost-sure Büchi objectives in sequentially composed MDPs, expressed using string diagrams. 
The first approach is a bottom-up algorithm which computes a \emph{compositional solution} for each leaf of the string diagram and combines them to obtain a solution of the entire string diagram.
The second approach is an iterative algorithm that closely resembles the classical Büchi algorithm, but reasons locally wherever possible. 
Natural directions for future work are extensions to parity objectives and to quantitative variants of the problem.

\clearpage
\bibliographystyle{plain} 
\bibliography{refs}

\ifwithappendix
\clearpage
\appendix

\clearpage
\section{Omitted Contents in \cref{sec:prelims}}
We sketch a derivation of \cref{lem:buchi} for completeness sake.
Recall \cref{lem:buchi}.
\lembuchi*
\begin{proofs}
We start with the characterization in~\cite[Eq. 3]{DBLP:conf/lics/AlfaroH00} and give a brief explanation.
\begin{equation}\label{eq:alfaro_buchi}
\begin{split}
\nu Y.\, \mu X.\, &F(X,Y) \text{ where }\\
&F(X, Y) \defeq ((\verts \setminus \buchiverts) \cap \mathrm{Apre_1}(Y, X)) \cup (\buchiverts \cap \mathrm{Pre_1}(Y))
\end{split}
\end{equation}
In \cref{eq:alfaro_buchi}, $\mathrm{Apre_1}(Y, X)$ denotes the states $s$ such that there exists an action $a$ such that playing $a$ in $s$ ensures that all successors are in $Y$ and one of the successors is in $X$.
$\mathrm{Pre_1}(Y)$ denotes the set of states $s$ such that there exists an action $a$, such that playing $a$ in $s$ ensures that all successors are in $Y$.lose
To summarize: \cref{eq:alfaro_buchi} states that a state $s$ is in $F(X,Y)$ iff there is an action $a$, such that playing $a$ in $s$ ensures that all successors are in $Y$, additionally requiring that there is a successor in $X$ if $s$ is not in $\buchiverts$.

There are two main differences in the setting of our paper and~\cite{DBLP:conf/lics/AlfaroH00}:
\begin{enumerate}
    \item\label{item:diff_verts} In our setting, the vertices are partitioned into $\verts_1$ and $\verts_P$ vertices, while~\cite{DBLP:conf/lics/AlfaroH00} has only player-1 vertices with a transition function with actions. The fact that $\verts_1$ and $\verts_P$ vertices alternate is similar however.
    \item\label{item:diff_buchi} In our setting, the Büchi vertices are probabilistic vertices while in~\cite{DBLP:conf/lics/AlfaroH00} the Büchi vertices are player-1 vertices. 
\end{enumerate}
First, to resolve difference~\ref{item:diff_verts}, we obtain:
\begin{equation*}
F(X,Y) \defeq \Set*{ v \in \verts_1 \midvert \exists v'\in \Post(v)\colon \Post(v') \subseteq Y \text{ and } v \in \buchiverts \vee v' \in \Pre(X) }
\end{equation*}
Next, to resolve difference~\ref{item:diff_buchi}, we check whether $v' \in \buchiverts$ instead of $v$, obtaining:
\begin{equation*}
F(X,Y) \defeq \Set*{ v \in \verts_1 \midvert \exists v'\in \Post(v)\colon \Post(v') \subseteq Y \text{ and } v' \in \buchiverts \vee v' \in \Pre(X) }
\end{equation*}
\end{proofs}

\section{Omitted Contents in \cref{sec:bottom_up}}
Recall \cref{lem:no_lose_equivalent_def}.
\lemnoloseequivalentdef*
\begin{myproof}
Follows from the expansion of the definitions for no-lose strategies and Büchi acceptance (\cref{def:no_lose,def:buchi}) and applying \cref{lem:tr_equivalence} below.
\small
\begin{align*}
&\forall v \in \Reach_\strat(i)\colon [ \Reach_\strat(v) \cap \ex \ne \emptyset ] \vee [ v \models \buchiobj\buchiverts ]\\ {}\Leftrightarrow{}
&\forall v \in \Reach_\strat(i)\colon [ \Reach_\strat(v) \cap \ex \ne \emptyset ] \vee [ \forall v' \in \Reach_\strat(v)\colon \Reach_\strat(v') \cap \buchiverts \ne \emptyset ]\\ {}\Leftrightarrow{}
&\forall v \in \Reach_\strat(i)\colon \forall v' \in \Reach_\strat(v)\colon \bigl[ \Reach_\strat(v) \cap \ex \ne \emptyset \vee \Reach_\strat(v') \cap \buchiverts \ne \emptyset \bigr]\\
\eqcomment{Apply \cref{lem:tr_equivalence} with $v' \in P \defequiv \Reach_\strat(v) \cap \ex \ne \emptyset \vee \Reach_\strat(v') \cap \buchiverts \ne \emptyset$}
{}\Leftrightarrow{}
&\forall v \in \Reach_\strat(i)\colon [ \Reach_\strat(v) \cap \ex \ne \emptyset \vee \Reach_\strat(v) \cap \buchiverts \ne \emptyset] &&\\ {}\Leftrightarrow{}
&\forall v \in \Reach_\strat(i)\colon [ \Reach_\strat(v) \cap (\ex \cup \buchiverts) \ne \emptyset ]
\end{align*}
\end{myproof}

\begin{lemma}\label{lem:tr_equivalence}
Let $\omdpc A$ be an roMDP, $i \in \verts$ a vertex, $\strat$ a strategy, and $P \subseteq \verts$ some proposition over the vertices, then: 
\begin{align*}
\bigl[\forall v \in \Reach_\strat(i)\colon \forall v' \in \Reach_\strat(v)\colon v' \in P \bigr] \text{ iff }
\bigl[\forall v \in \Reach_\strat(i)\colon v \in P\bigr]
\end{align*}
\end{lemma}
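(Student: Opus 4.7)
The plan is to prove both directions using two elementary properties of the reachability operator $\Reach_\strat$: \emph{reflexivity} (every vertex reaches itself, since $X_0 = X$ in \cref{def:reachable_vertices}) and \emph{transitivity} (if $v \in \Reach_\strat(i)$ and $v' \in \Reach_\strat(v)$, then $v' \in \Reach_\strat(i)$).

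For the forward direction, I would assume the left-hand side and, given an arbitrary $v \in \Reach_\strat(i)$, instantiate the inner universal quantifier with $v' = v$ itself; reflexivity of $\Reach_\strat$ then immediately yields $v \in P$.

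For the backward direction, I would assume the right-hand side and take arbitrary $v \in \Reach_\strat(i)$ and $v' \in \Reach_\strat(v)$. By transitivity of $\Reach_\strat$, $v' \in \Reach_\strat(i)$, and applying the right-hand side to $v'$ gives $v' \in P$. The transitivity claim itself is easily verified by induction on the least $n$ such that $v' \in X_n$ (in the construction of $\Reach_\strat(v)$), using the fact that each $X_{n+1}$ is built from $X_n$ via $\Post_\strat$, which also produces successors inside $\Reach_\strat(i)$ whenever applied to elements of $\Reach_\strat(i)$.

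There is essentially no obstacle here; the statement is a purely set-theoretic consequence of the definition of $\Reach_\strat$ as a transitive, reflexive closure of $\Post_\strat$, and both properties are immediate from the fixed-point-style definition in \cref{def:reachable_vertices}. The only care needed is to spell out transitivity as a short lemma (or inline induction) rather than invoking it tacitly.
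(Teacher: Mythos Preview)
Your proposal is correct and essentially identical to the paper's proof: the paper also proves $(\Rightarrow)$ by instantiating both quantifiers with the same vertex (using $x \in \Reach_\strat(x)$), and proves $(\Leftarrow)$ via transitivity of reachability, which it packages as a separate monotonicity lemma (\cref{lem:reach_monotone}, proved using finite paths rather than your suggested induction on the $X_n$). The only cosmetic difference is that the paper factors transitivity out into its own lemma rather than inlining it.
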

\begin{myproof}
We prove both directions separately.
\begin{description}
\item[($\Rightarrow$):]
Given $x \in \Reach_\strat(i)$ we show that $x \in P$.
If we replace $v$ and $v'$ by $x$ on the left-hand side of the equation, we obtain the required result, where $x \in \Reach_\strat(x)$ holds by the definition of $\Reach$.

\item[($\Leftarrow$):]
Given $v \in \Reach_\strat(i)$ and $v' \in \Reach_\strat(v)$ we show that $v' \in P$.
By \cref{lem:reach_monotone} we obtain $\Reach_\strat(v') \subseteq \Reach_\strat(v) \subseteq \Reach_\strat(i)$, thus, $v' \in \Reach_\strat(i)$ and therefore by applying our assumption that $\forall v \in \Reach_{\strat}(i)$, we obtain $v' \in P$.
\end{description}
\end{myproof}
Recall \cref{lem:local_solution_correct}.
\lemlocalsolutioncorrect*
\begin{myproof}
We prove both directions separately.
\begin{description}
    \item[($\Rightarrow$):]
    $\tuple{\emptyset, \top} \in \solarg{\omdpc A}(i)$ implies (by definition) that there exists a no-lose strategy $\strat$ such that $\CExits{\omdpc A}\strat{i} = \emptyset$ and $\Reach^{\omdpc A}_\strat(i) \cap B \ne \emptyset$.
    $\CExits{\omdpc A}\strat{i} = \emptyset$ implies that for all vertices $v \in \Reach^{\omdpc A}_\strat(i)$ we have $\Reach^{\omdpc A}_\strat(v) \cap (B \cup \emptyset) \ne \emptyset$.
    Hence, $i \models^{\sem{\omdpc A}} \buchiobj\buchiverts$.%
    
    \item[($\Leftarrow$):]
    $i \models^{\sem{\omdpc A}} \buchiobj\buchiverts$ implies (by definition) that there exists a strategy $\strat$ such that for all vertices $v \in \Reach^{\omdpc A}_\strat(i)$ we have that $\Reach^{\omdpc A}_\strat(v) \cap B \ne \emptyset$.
    As $\ex^{\omdpc A} \cap \buchiverts = \emptyset$, we have that $\Reach^{\omdpc A}_\strat(i) \cap \ex^{\omdpc A} = \emptyset$ (as no vertex in $B$ can be reached from $\ex^{\omdpc A}$).
    Hence, $\solstrat^{\omdpc A}(i, \strat) = \tuple{\emptyset, \top}$ so $\tuple{\emptyset, \top} \in \solarg{\omdpc A}(i)$.%
\end{description}
\end{myproof}

\begin{lemma}[$\Reach$ monotonicity]\label{lem:reach_monotone}
Let $\strat$ be some strategy and $v, v' \in \verts$ vertices.
Then: 
$v' \in \Reach_\strat(v) \text{ iff } \Reach_\strat(v') \subseteq \Reach_\strat(v)$.
\end{lemma}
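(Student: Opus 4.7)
The plan is to prove the two directions separately; both are standard manipulations of the inductive definition of $\Reach_\strat$, with no real obstacle beyond bookkeeping.

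For the backward direction ($\Leftarrow$), I would use that $v' \in \Reach_\strat(v')$ holds trivially from the base case $X_0 \defeq \{v'\}$ in \cref{def:reachable_vertices}. Combined with the hypothesis $\Reach_\strat(v') \subseteq \Reach_\strat(v)$, this immediately gives $v' \in \Reach_\strat(v)$. This direction is essentially a one-liner.

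For the forward direction ($\Rightarrow$), assume $v' \in \Reach_\strat(v)$. Writing $\Reach_\strat(v') = \bigcup_{n=0}^\infty X_n$ with $X_0 = \{v'\}$ and $X_{n+1} = X_n \cup \Post_\strat(X_n)$, I would show by induction on $n$ that $X_n \subseteq \Reach_\strat(v)$. The base case is the assumption $v' \in \Reach_\strat(v)$. For the inductive step, I need the auxiliary closure property $\Post_\strat(\Reach_\strat(v)) \subseteq \Reach_\strat(v)$, which follows directly from the definition: if $u \in \Reach_\strat(v)$, then $u \in X_n^v$ for some $n$ in the fixpoint construction starting from $v$, so $\Post_\strat(u) \subseteq X_{n+1}^v \subseteq \Reach_\strat(v)$. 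Combining the IH $X_n \subseteq \Reach_\strat(v)$ with this closure yields $X_{n+1} = X_n \cup \Post_\strat(X_n) \subseteq \Reach_\strat(v)$, completing the induction. Taking the union over $n$ gives $\Reach_\strat(v') \subseteq \Reach_\strat(v)$.

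The only step requiring a bit of care is the closure property $\Post_\strat(\Reach_\strat(v)) \subseteq \Reach_\strat(v)$, which I would either state as a small separate observation or unfold inline. Everything else is a direct application of the inductive definition of reachability, so no substantive obstacle is expected.
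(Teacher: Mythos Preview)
Your proposal is correct. The backward direction matches the paper exactly. For the forward direction, the paper takes a slightly different route: it first reformulates reachability as the existence of a finite path, then argues by path concatenation (a path from $v$ to $v'$ concatenated with a path from $v'$ to $v''$ gives a path from $v$ to $v''$). Your argument instead stays with the inductive definition of $\Reach_\strat$ from \cref{def:reachable_vertices} and proves the inclusion by induction on the stages $X_n$, using the closure property $\Post_\strat(\Reach_\strat(v)) \subseteq \Reach_\strat(v)$. Both approaches are equally elementary; yours has the minor advantage of working directly with the definition given in the paper rather than appealing to the (unstated, though obvious) equivalence with finite paths, while the paper's path-concatenation phrasing is perhaps more immediately intuitive.
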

\begin{myproof}
We use the fact that a vertex $v'$ is reachable from $v$ iff there is a finite path from $v$ to $v'$.
We prove both directions separately.
\begin{description}
    \item[($\Rightarrow$):] 
    As $v' \in \Reach_\strat(v)$, there is a finite path $\pi_1$ from $v$ to $v'$.
    To show that $\Reach_\strat(v') \subseteq \Reach_\strat(v)$, we show that $v'' \in \Reach_\strat(v')$ implies $v'' \in \Reach_\strat(v)$.
    As $v'' \in \Reach_\strat(v')$ there is a finite path $\pi_2$ from $v'$ to $v''$.
    We concatenate $\pi_1$ and $\pi_2$ to obtain a path from $v$ to $v''$ and thus, $v'' \in \Reach_\strat(v)$, meaning that $\Reach_\strat(v') \subseteq \Reach_\strat(v)$.
    
    \item[($\Leftarrow$):] 
    By definition of $\Reach$: $v' \in \Reach_\strat(v')$. If $\Reach_\strat(v') \subseteq \Reach_\strat(v)$ then $v' \in \Reach_\strat(v)$.
\end{description}
\end{myproof}

\subsection{Sequential Composition}
Recall \cref{lem:sequential_composition}.
\lemsequentialcomposition*
\begin{myproof}
Let $\strat' \defeq \strat \cup \strat_1 \cup \dots \cup \strat_N$.
We prove the following statements which together prove the lemma.
\begin{itemize}
    \item $\strat'$ is a no-lose strategy.
    
    We have to show that for all $v$ in $\Reach_{\strat'}^{\omdpc A \seqcomp \omdpc B}(i)$ we have that  $\Reach_{\strat'}^{\omdpc A \seqcomp \omdpc B}(v) \cap (\ex^{\omdpc B} \cup \buchiverts) \ne \emptyset$.

    If $v \in \verts^{\omdpc A}$, then by playing $\strat$ we have that $\Reach_\strat^{\omdpc A}(v) \cap (\ex^{\omdpc A} \cup \buchiverts) \ne \emptyset$ as $\strat$ is a no-lose strategy in $\omdpc A$.
    Reaching an exit $x_k \in \ex^{\omdpc A}$ implies by \cref{lem:reach_monotone} and the fact that from $x_k$ onward we follow a sub-strategy of $\strat_k$ which is a no-lose strategy in $\omdpc B$, that indeed $\Reach_{\strat'}^{\omdpc A \seqcomp \omdpc B}(v) \cap (\ex^{\omdpc B} \cup \buchiverts) \ne \emptyset$.
    
    If $v \in \verts^{\omdpc B}$, by definition of $\Reach$ and $\strat'$, necessarily for some strategy $\strat_k$ we have that $v \in \strat_k(v')$ for some predecessor $v' \in \verts^{\omdpc B}$ (note that if the predecessor $v' \in \verts^{\omdpc A}$, then $v=x_k$).
    As $\strat_k$ is a no-lose strategy, we obtain that $\Reach_{\strat_k}^{\omdpc B}(v) \cap (\ex^{\omdpc B} \cup \buchiverts) \ne \emptyset$ and hence $\Reach_{\strat'}^{\omdpc B}(v) \cap (\ex^{\omdpc B} \cup \buchiverts) \ne \emptyset$.
    
    \item $\CExits{\omdpc A \seqcomp \omdpc B}{\strat'}{i} = \bigcup_{1 \le k \le N} \CExits{\omdpc B}{\strat_k}{x_k}$.
    
    Follows directly from \cref{lem:reachable_states_init} below.
    \item $\Reach_{\strat'}^{\omdpc A \seqcomp \omdpc B}(i) \cap B \ne \emptyset \iff b \vee \bigvee_{1 \le k \le N} b_k$.
    
    Follows directly from \cref{lem:reachable_states_init} below.
\end{itemize}

\end{myproof}
Recall \cref{lem:comp_of_seq}.
\lemcompofseq*\noindent
We prove both directions separately in the lemmas below.

\begin{lemma}[$\seqcomp$ solution soundness]%
Let $\omdpc A\colon m \to l$ and $\omdpc B\colon l \to n$ be roMDPs, $i \in [m]$ and $\tuple{T, b} \in \powerset{[n]} \times \{\bot, \top\}$, then
\[
\tuple{T, b} \in (\solarg{\omdpc A} \seqcomp \solarg{\omdpc B})(i) \text{ implies } \tuple{T, b} \in \solarg{\omdpc A \seqcomp \omdpc B}(i).
\]
\end{lemma}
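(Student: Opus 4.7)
The plan is to unpack the right-hand side using \cref{def:semec_composition}, realise each ingredient as the effect of a concrete no-lose strategy via \cref{def:semec}, and then glue those strategies together via the sequential decomposition lemma (\cref{lem:sequential_composition}) to obtain a single no-lose strategy in $\omdpc A \seqcomp \omdpc B$ whose effect is exactly $\tuple{T, b}$. Once such a strategy exists, membership in $\solarg{\omdpc A \seqcomp \omdpc B}(i)$ is immediate from the definition of the local solution.

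In detail, assume $\tuple{T, b} \in (\solarg{\omdpc A} \seqcomp \solarg{\omdpc B})(i)$. By \cref{def:semec_composition}, there exist $\tuple{\{x_1, \dots, x_N\}, b'} \in \solarg{\omdpc A}(i)$ and $Y_k \in \solarg{\omdpc B}(x_k)$ for $1 \le k \le N$ such that $\tuple{T, b} = \tuple{\emptyset, b'} \sqcup \bigsqcup_{1 \le k \le N} Y_k$. Applying \cref{def:semec}, I first pick a witness no-lose strategy $\strat \in \nolosestrats^{\omdpc A}[\,\enarg i^{\omdpc A}\,]$ with $\solstrat^{\omdpc A}(\enarg i^{\omdpc A}, \strat) = \tuple{\{x_1, \dots, x_N\}, b'}$, and for each $k$ a witness $\strat_k \in \nolosestrats^{\omdpc B}[x_k]$ with $\solstrat^{\omdpc B}(x_k, \strat_k) = Y_k$. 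The $i$-locality required by \cref{def:no_lose} ensures these strategies have essentially disjoint domains of activity, so their pointwise union is well-behaved.

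Next, I invoke \cref{lem:sequential_composition} on the strategy $\strat \cup \strat_1 \cup \dots \cup \strat_N$ in $\omdpc A \seqcomp \omdpc B$: the lemma directly yields that it is a no-lose strategy from $\enarg i^{\omdpc A \seqcomp \omdpc B}$ and that its effect equals $\tuple{\emptyset, b'} \sqcup \bigsqcup_{1 \le k \le N} \solstrat^{\omdpc B}(x_k, \strat_k) = \tuple{\emptyset, b'} \sqcup \bigsqcup_{1 \le k \le N} Y_k = \tuple{T, b}$. By \cref{def:semec}, this effect witnesses $\tuple{T, b} \in \solarg{\omdpc A \seqcomp \omdpc B}(\enarg i)$, completing the proof. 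The case $N = 0$ is handled identically: the sum over $k$ is empty, the join reduces to $\tuple{\emptyset, b'}$, and $\strat$ itself already serves as the required no-lose strategy in $\omdpc A \seqcomp \omdpc B$ since it never reaches $\ex^{\omdpc A}$.

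The only subtle point is ensuring the decomposition lemma is applicable, i.e.\ checking that after the entrances of $\omdpc B$ the strategies $\strat_k$ really can be glued without conflict; this is exactly where $i$-locality of no-lose strategies pays off, and is the main conceptual step. Everything else is definition chasing, so I do not expect serious obstacles beyond carefully matching the notation between \cref{def:semec_composition} and the hypotheses of \cref{lem:sequential_composition}.
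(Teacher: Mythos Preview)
Your proposal is correct and follows exactly the approach the paper takes: the paper's proof is the one-liner ``This follows from \cref{lem:sequential_composition},'' and you have simply unpacked that into the explicit definition chase (pick witnesses via \cref{def:semec}, combine via \cref{lem:sequential_composition}, conclude via \cref{def:semec}). Your extra remarks on $i$-locality and the $N=0$ case are correct elaborations but not additional ideas beyond what the decomposition lemma already handles.
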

\begin{myproof}
This follows from \cref{lem:sequential_composition}.
\end{myproof}

\begin{lemma}[$\seqcomp$ solution completeness]%
Let $\omdpc A\colon m \to l$ and $\omdpc B\colon l \to n$ be roMDPs, $i \in [m]$ and $\tuple{T, b} \in \powerset{[n]} \times \{\bot, \top\}$, then
\[
\tuple{T, b} \in \solarg{\omdpc A \seqcomp \omdpc B}(i) \text{ implies } \tuple{T, b} \in (\solarg{\omdpc A} \seqcomp \solarg{\omdpc B})(i).
\]
\end{lemma}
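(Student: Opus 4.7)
\begin{proofs}
The plan is to decompose the given no-lose strategy in $\omdpc A \seqcomp \omdpc B$ into one no-lose strategy in $\omdpc A$ plus one no-lose strategy per reached entrance in $\omdpc B$, and then recover $\tuple{T, b}$ by invoking \cref{lem:sequential_composition}. So fix a no-lose strategy $\strat$ in $\omdpc A \seqcomp \omdpc B$ from $i$ with $\solstrat^{\omdpc A \seqcomp \omdpc B}(i, \strat) = \tuple{T, b}$.

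First, I would define the $\omdpc A$-part $\strat_A$ by setting $\strat_A(v) \defeq \strat(v)$ for $v \in (\verts_1^{\omdpc A} \setminus \ex^{\omdpc A}) \cap \Reach_\strat^{\omdpc A \seqcomp \omdpc B}(i)$ and $\strat_A(v) \defeq \emptyset$ otherwise (in particular, on all of $\ex^{\omdpc A}$). This is well defined because successors of $v \in \verts_1^{\omdpc A} \setminus \ex^{\omdpc A}$ under $\strat$ lie in $\verts_P^{\omdpc A}$, which is preserved by the composition. By induction on reachability and the identification of $\exarg k^{\omdpc A}$ with $\enarg k^{\omdpc B}$ given by \cref{def:seqROMDP}, one obtains
\begin{align*}
\Reach_{\strat_A}^{\omdpc A}(i) = \bigl(\Reach_\strat^{\omdpc A \seqcomp \omdpc B}(i) \cap \verts^{\omdpc A}\bigr) \cup \Set*{ \exarg k^{\omdpc A} \mid \enarg k^{\omdpc B} \in \Reach_\strat^{\omdpc A \seqcomp \omdpc B}(i) }.
\end{align*}
Hence $\CExits{\omdpc A}{\strat_A}{i} = \Set{x_1, \dots, x_N}$ where the $x_k$ are exactly the exits corresponding to those entrances of $\omdpc B$ that are reached by $\strat$. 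I would then use \cref{lem:no_lose_equivalent_def} to check that $\strat_A$ is no-lose: any $v \in \Reach_{\strat_A}^{\omdpc A}(i)$ either is (or reaches) an exit in $T_A$, or corresponds to a vertex in $\verts^{\omdpc A}$ from which $\strat$ reaches $\ex^{\omdpc B} \cup \buchiverts$ in the composition; in the latter case, any such path must cross some $x_k$ before leaving $\verts^{\omdpc A}$, or else it stays in $\verts^{\omdpc A}$ and therefore hits $\buchiverts$ already inside $\omdpc A$.

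Second, for each $k \in [N]$ I would define the $\omdpc B$-part $\strat_{B,k}$ by $\strat_{B,k}(v) \defeq \strat(v)$ for $v \in \verts_1^{\omdpc B} \cap \Reach_\strat^{\omdpc A \seqcomp \omdpc B}(\enarg{k'}^{\omdpc B})$ and $\emptyset$ otherwise. This makes $\strat_{B,k}$ an $x_k$-local strategy, and because $\strat$ is no-lose in the composition, every vertex reachable under $\strat_{B,k}$ from $x_k$ reaches $\ex^{\omdpc B} \cup \buchiverts$, so $\strat_{B,k}$ is no-lose from $x_k$ by \cref{lem:no_lose_equivalent_def}.

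Finally, \cref{lem:sequential_composition} yields
\begin{align*}
\solstrat^{\omdpc A \seqcomp \omdpc B}(i, \strat_A \cup \strat_{B,1} \cup \dots \cup \strat_{B,N}) = \tuple{\emptyset, b_A} \sqcup \bigsqcup_{1 \le k \le N} \solstrat^{\omdpc B}(x_k, \strat_{B,k}),
\end{align*}
and the right-hand side lies in $(\solarg{\omdpc A} \seqcomp \solarg{\omdpc B})(i)$ by \cref{def:semec_composition}. It remains to show that this quantity equals $\tuple{T, b}$, which follows from the reachability decomposition above: an exit of $\omdpc B$ is reached under $\strat$ iff it is reached under some $\strat_{B,k}$, and a Büchi vertex is hit by $\strat$ iff it is hit in $\verts^{\omdpc A}$ (contributing to $b_A$) or from some reached entrance $\enarg{k'}^{\omdpc B}$ (contributing to some $b_k$). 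The main obstacle is precisely this bookkeeping across the seam $\ex^{\omdpc A} = \en^{\omdpc B}$: we must ensure that every trajectory of $\strat$ which visits $\verts^{\omdpc B}$ first passes through one of the $x_k$, so that it is faithfully captured by the corresponding $\strat_{B,k}$, and that the union of the restricted strategies agrees with $\strat$ on everything reachable from $i$.
\end{proofs}
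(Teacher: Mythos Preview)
Your proposal is correct and follows essentially the same approach as the paper's proof: both decompose the given no-lose strategy in $\omdpc A \seqcomp \omdpc B$ into its restriction to $\omdpc A$ and entrance-indexed restrictions in $\omdpc B$, verify these pieces are no-lose, and then invoke \cref{lem:sequential_composition} together with the equality $\strat = \strat_A \cup \strat_{B,1} \cup \dots \cup \strat_{B,N}$ on the reachable set. The paper's write-up singles out the case $N=0$ (where $\tuple{T_1,b_1}=\tuple{\emptyset,\top}$) explicitly, but your uniform treatment via the decomposition handles it as well.
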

\begin{myproof}
Let $\strat'$ be the strategy such that $\solstrat^{\omdpc A \seqcomp \omdpc B}(i, \strat') = \tuple{T, b}$.
Then, we can first split off $\strat\colon \verts^{\omdpc A} \to \powerset{\verts^{\omdpc A}}$ where $\strat(v) \defeq \strat'(v)$.
To be able to use \cref{lem:sequential_composition}, we will have to show that $\strat$ is a no-lose strategy in $\omdpc A$.
In other words, that for all vertices $v \in \Reach_\strat^{\omdpc A}(i)$ we have that $\Reach_\strat^{\omdpc A}(v) \cap (\ex^{\omdpc A} \cup \buchiverts) \ne \emptyset$.
As $\strat'$ is a no-lose strategy in $\omdpc A \seqcomp \omdpc B$, there is some finite path from $v$ to some vertex $v'$ in $\ex^{\omdpc B} \cup \buchiverts$.
If $v' \in \verts^{\omdpc A}$, we can also reach it in $\omdpc A$ by playing $\strat$, otherwise, if $v' \in \verts^{\omdpc B}$, then it must reach some exit in $\omdpc A$.
Therefore, $\strat$ is a no-lose strategy in $\omdpc A$.

Let $\tuple{T_1, b_1} \defeq \solstrat^{\omdpc A}(i, \strat)$.
First, if $T_1 = \emptyset$, then necessarily $b_1 = \top$, as $\strat$ is no-lose.
In this case we immediately obtain that $\tuple{T_1, b_1} = \tuple{T, b}$ and therefore $\tuple{T, b} \in (\solarg{\omdpc A} \seqcomp \solarg{\omdpc B})(i)$.

We continue with the case that $T_1 = \{ x_1, \dots, x_N \} \ne \emptyset$.
Next, we define strategies $\strat_k\colon \verts^{\omdpc B} \to \powerset{\verts^{\omdpc B}}$ for $1 \le k \le N$ such that $\strat_k(v) \defeq \strat'(v)$ if $v \in \Reach_{\strat'}^{\omdpc B}(x_k)$ and $\emptyset$ otherwise.
By definition $\strat_k$ is $x_k$-local.
As $\strat'$ is no-lose in $\omdpc A \seqcomp \omdpc B$, for all vertices $v$ reachable from $x_k$ (which are all reachable states in $\omdpc B$, see \cref{lem:reachable_states_init}), there is a path from $v$ to $v' \in \ex^{\omdpc B} \cup \buchiverts$ in $\omdpc A \seqcomp \omdpc B$.
Because vertices of $\omdpc A$ are unreachable from $\omdpc B$ the path is necessarily completely in $\omdpc B$, and therefore also present in $\omdpc B$ by playing $\strat_k$.
Therefore, $\strat_k$ is a no-lose strategy.

Finally, using \cref{lem:sequential_composition} and the fact that $\strat' = \strat \cup \strat_1 \cup \dots \cup \strat_N$, we obtain
\begin{align*}
    \solstrat^{\omdpc A \seqcomp \omdpc B}(i, \strat') &= \tuple{T, b}\\
    &= \solstrat^{\omdpc A \seqcomp \omdpc B}(i, \strat \cup \strat_1 \cup \dots \cup \strat_N)\\
    &= \tuple{\emptyset, b_1} \sqcup \bigsqcup_{1 \le k \le N} \solstrat^{\omdpc B}(x_k, \strat_k)
\end{align*}
and hence $\tuple{T, b} \in (\solarg{\omdpc A} \seqcomp \solarg{\omdpc B})(i)$ as required.
\end{myproof}

\begin{lemma}\label{lem:reachable_states_init}
Let $\strat$ be a no-lose strategy in $\omdpc A$ such that $\tuple{\{x_1, \dots, x_N\}, b} \defeq \solstrat^{\omdpc A}(i, \strat)$. For $1 \le k \le N$, let $\strat_k$ be a no-lose strategy in $\omdpc B$ and let $\tuple{Y_k, b_k} \defeq \solstrat^{\omdpc B}(x_k, \strat_k)$.
Let $\strat' \defeq \strat \cup \strat_1 \cup \dots \cup \strat_N$.
We have that
\[
\Reach_{\strat'}^{\omdpc A \seqcomp \omdpc B}(i) =
\Reach_\strat^{\omdpc A}(i) \cup \bigcup_{1 \le k \le N} \Reach_{\strat_k}^{\omdpc B}(x_k).
\]
\end{lemma}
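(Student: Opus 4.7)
The plan is to prove the two inclusions separately; $\supseteq$ is a direct monotonicity argument and $\subseteq$ is an induction on path length that dispatches the three edge clauses of \cref{def:seqROMDP}.

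For the $\supseteq$ inclusion, the key observation is that $\strat'(v)$ agrees with $\strat(v)$ for $v \in \verts_1^{\omdpc A}$ (since each $\strat_k$ is $x_k$-local in $\omdpc B$) and includes $\strat_k(v)$ for $v \in \verts_1^{\omdpc B}$. Edges of $\omdpc A$ that do not land on an exit, as well as all edges of $\omdpc B$, are preserved in $\omdpc A \seqcomp \omdpc B$ by \cref{def:seqROMDP}. A short induction on reachability in $\omdpc A$ thus gives $\Reach_\strat^{\omdpc A}(i) \setminus \ex^{\omdpc A} \subseteq \Reach_{\strat'}^{\omdpc A \seqcomp \omdpc B}(i)$. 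For each exit $x_k \in \CExits{\omdpc A}{\strat}{i}$, an incoming edge into $x_k$ in $\omdpc A$ becomes an edge into the corresponding $\omdpc B$-entrance $x_k$ by clause~(2) of \cref{def:seqROMDP}, placing that entrance in $\Reach_{\strat'}^{\omdpc A \seqcomp \omdpc B}(i)$. Since $\strat_k(v) \subseteq \strat'(v)$ for every $v$, \cref{lem:reach_monotone} then yields $\Reach_{\strat_k}^{\omdpc B}(x_k) \subseteq \Reach_{\strat'}^{\omdpc A \seqcomp \omdpc B}(i)$.

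For the $\subseteq$ inclusion, I would induct on the length of a finite path from $i$ to $v$ in $\omdpc A \seqcomp \omdpc B$ under $\strat'$. The base case $v = i \in \Reach_\strat^{\omdpc A}(i)$ is immediate. For the step, let $\tuple{v', v}$ be the last edge; by the inductive hypothesis, $v'$ lies in $\Reach_\strat^{\omdpc A}(i)$ or in some $\Reach_{\strat_k}^{\omdpc B}(x_k)$. Three subcases correspond to the clauses of \cref{def:seqROMDP}: (1)~an internal $\omdpc A$-edge places $v \in \Reach_\strat^{\omdpc A}(i)$ because $\strat'$ restricted to $\omdpc A$ equals $\strat$; (2)~a crossing edge means $v'$ is a predecessor of some exit $x_k \in \ex^{\omdpc A}$ with $x_k$ reached in $\omdpc A$ under $\strat$ from $i$, hence $x_k \in \{x_1,\dots,x_N\}$ and $v$ is the corresponding entrance, lying trivially in $\Reach_{\strat_k}^{\omdpc B}(x_k)$; (3)~an internal $\omdpc B$-edge fires from some $v' \in \Reach_{\strat_j}^{\omdpc B}(x_j)$, and the successor $v$ is inherited from some $\strat_j(v') \subseteq \strat'(v')$ (or from $\Post^{\omdpc B}(v')$ when $v'$ is probabilistic), so $v \in \Reach_{\strat_j}^{\omdpc B}(x_j)$ as well.

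The main subtlety is the notational identification between the removed exit $x_k$ of $\omdpc A$ and the entrance of $\omdpc B$ that inherits its incoming edges in the composition; one must apply \cref{def:seqROMDP} consistently to translate between these two readings. Beyond that, the possibly overlapping sets $\Reach_{\strat_k}^{\omdpc B}(x_k)$ require a small sanity check: at a shared vertex $v'$ the combined strategy $\strat'(v')$ may admit choices from several $\strat_j$'s, but each such choice is inherited from an individual $\strat_j$ with $v' \in \Reach_{\strat_j}^{\omdpc B}(x_j)$, so the resulting successor stays within that $\Reach_{\strat_j}^{\omdpc B}(x_j)$, preserving the invariant required by the induction.
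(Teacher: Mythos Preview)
Your proposal is correct and follows essentially the same strategy as the paper: prove both inclusions separately, using that $\strat'$ coincides with $\strat$ on $\verts^{\omdpc A}$ and that each $\strat_k$ is $x_k$-local to pin down the $\omdpc B$ side. The only difference is granularity: the paper argues the $\subseteq$ direction by a direct case split on whether $v\in\verts^{\omdpc A}$ or $v\in\verts^{\omdpc B}$ (looking at the last player-1 predecessor along a path and invoking locality), whereas you spell out a full induction on path length dispatching the three clauses of \cref{def:seqROMDP}; your treatment of the identification of $x_k$ with the corresponding $\omdpc B$-entrance and of overlapping $\Reach_{\strat_k}^{\omdpc B}(x_k)$ is in fact more careful than the paper's sketch.
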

\begin{myproof}
We show both inclusions.
\begin{description}
    \item[($\subseteq$):] 
    Given $v \in \Reach_{\strat'}^{\omdpc A \seqcomp \omdpc B}(i)$, we show that $v \in \Reach_\strat^{\omdpc A}(i) \cup \bigcup\limits_{1 \le k \le N} \Reach_{\strat_k}^{\omdpc B}(x_k)$.

    By definition, either $v \in \verts^{\omdpc A}$ or $v \in \verts^{\omdpc B}$.
    If $v \in \verts^{\omdpc A}$, we must have that $v \in \Reach_\strat^{\omdpc A}(i)$, as $\strat$ and $\strat'$ are equal on $\omdpc A$ (and $\omdpc A$ is unreachable from $\omdpc B$).
    
    Otherwise, if $v \in \verts^{\omdpc B}$, then there is some finite path $\pi$ in $\omdpc A \seqcomp \omdpc B$ from $i$ to $v$.
    Let $l \defeq \abs*\pi$.
    Then, by definition of $\Reach$ and $\strat'$, there is some strategy $\strat_k$ such that $v \in \strat_k(\pi_{l-1})$.
    As $\strat_k$ is $x_k$-local, we have that $v \in \Reach_{\strat_k}^{\omdpc B}(x_k)$.
    
    \item[($\supseteq$):]
    Given $v \in \Reach_\strat^{\omdpc A}(i) \cup \bigcup\limits_{1 \le k \le N} \Reach_{\strat_k}^{\omdpc B}(x_k),$ we show that $v \in \Reach_{\strat'}^{\omdpc A \seqcomp \omdpc B}(i)$.
    Again, if $v \in \verts^{\omdpc A}$ then $v \in \Reach_{\strat'}^{\omdpc A \seqcomp \omdpc B}(i)$ follows from the fact that $\strat$ and $\strat'$ are equal on $\omdpc A$.
    Otherwise, if $v \in \verts^{\omdpc B}$ it is reachable by some strategy $\strat_k$ from some entrance $x_k$ (by definition).
    As all $x_k$ are reachable from $i$ it follows from \cref{lem:reach_monotone} that $v \in \Reach_{\strat'}^{\omdpc A \seqcomp \omdpc B}(i)$.
\end{description}

\end{myproof}

\subsection{Trace}
Recall \cref{lem:trace_decomposition}.
\lemtracedecomposition*
\begin{myproof}
Let $\strat \defeq \strat_1 \cup \strat_2 \cup \strat'$.
We have to prove the following statements:
\begin{enumerate}
    \item $\strat$ is a no-lose strategy in $\tr(\omdpc A)$,
    \item $\Reach_\strat^{\tr(\omdpc A)}(i) \cap B \ne \emptyset \iff b_1 \vee b_2$,
    \item $\CExits{\tr(\omdpc A)}{\strat}{i} = (T_1 \cup T_2) \setminus \{\exarg{n+1}^{\omdpc A}\}$.
\end{enumerate}
We prove these statements in the order specified.
\begin{enumerate}
\item We need to show that for all $v$ in $\Reach_\strat^{\tr(\omdpc A)}(i)$ we have that $\Reach_{\strat}^{\tr(\omdpc A)}(v) \cap (\ex^{\tr(\omdpc A)} \cup \buchiverts) \ne \emptyset$.

We distinguish two cases: either (a) $\exarg{n+1}^{\omdpc A} \in \Reach^{\omdpc A}_\strat(v)$ or (b) $\exarg{n+1}^{\omdpc A} \not\in \Reach^{\omdpc A}_\strat(v)$.
\begin{enumerate}
    \item In this case, entrance $m+1$ is reachable through $\exarg{n+1}$, and therefore by the monotonicity of reach (\cref{lem:reach_monotone}), it suffices to show that $\Reach_{\strat_2}^{\omdpc A}(m+1) \cap (\ex^{\tr(\omdpc A)} \cup \buchiverts) \ne \emptyset$.
    By definition, we have that $\tuple{T_2, b_2} \defeq \solstrat^{\omdpc A}(m+1, \strat_2)$ such that $\tuple{T_2, b_2} \ne \tuple{\{\exarg{n+1}^{\omdpc A}\}, \bot}$.
    Thus, either $\ex^{\tr(\omdpc A)}$ or $\buchiverts$ is reachable from $m+1$ by playing $\strat_2$ and therefore from $v$ by playing $\strat$.
    
    \item Due to \cref{lem:strat_union_trace_initial} we have the following (possibly overlapping) two cases: (1) $v \in \Reach_{\strat_1}^{\omdpc A}(i)$, or (2) $v \in \Reach_{\strat_2}^{\omdpc A}(m+1)$.
    If $v \in \Reach_{\strat_1}^{\omdpc A}(i)$, then as $\strat_1$ is a no-lose strategy, we must have that $\Reach_{\strat_1}^{\omdpc A}(v) \cap (\ex^{\tr(\omdpc A)} \cup \buchiverts) \ne \emptyset$.
    
    Otherwise, if $v \in \Reach_{\strat_2}^{\omdpc A}(m+1)$, then as $\strat_2$ is a no-lose strategy such that $\tuple{T_2, b_2} \defeq \solstrat^{\omdpc A}(m+1, \strat_2)$ and $\tuple{T_2, b_2} \ne \tuple{\{\exarg{n+1}^{\omdpc A}\}, \bot}$, we have the following cases:
    \begin{itemize}
        \item $T_2 \supset \Set*{ \exarg{n+1}^{\omdpc A} }$: this implies that we must be able to reach either $\buchiverts$ or $\ex^{\tr(\omdpc A)}$.
        \item $T_2 = \Set*{ \exarg{n+1}^{\omdpc A} }$ and $b_2 = \top$, in which case we can either reach $\buchiverts$ directly or through $\exarg{n+1}^{\omdpc A}$ and then entrance $m+1$ (\cref{lem:reach_monotone}).
        \item $T_2 = \emptyset$ and $b_2 = \top$, this implies that we can reach $\buchiverts$ directly in $\omdpc A$ by playing $\strat_2$, and by extension, by playing $\strat$ in $\tr(\omdpc A)$.
    \end{itemize}
\end{enumerate}

\item The following equivalences hold.
\begin{align*}
    &\Reach^{\tr(\omdpc A)}_{\strat}(i) \cap \buchiverts \ne \emptyset \\
    {}\Leftrightarrow{}& (\Reach^{\omdpc A}_{\strat_1}(i) \cup \Reach^{\omdpc A}_{\strat_2}(m+1)) \cap \buchiverts \ne \emptyset & \text{(\cref{lem:strat_union_trace_initial})}&\\
    {}\Leftrightarrow{}& (\Reach^{\omdpc A}_{\strat_1}(i) \cap \buchiverts \ne \emptyset) \vee (\Reach^{\omdpc A}_{\strat_2}(m+1) \cap \buchiverts \ne \emptyset) \\
    {}\Leftrightarrow{}& b_1 \vee b_2\\
\end{align*}

\item $\CExits{\tr(\omdpc A)}{\strat}{i} = (T_1 \cup T_2) \setminus \Set*{ \exarg{n+1}^{\omdpc A} }$.
The following equalities hold.
\begin{align*}
    &\Reach^{\tr(\omdpc A)}_{\strat}(i) \cap \ex^{\tr(\omdpc A)}\\
    {}={}& \Big(\Reach^{\omdpc A}_{\strat_1}(i) \cup \Reach^{\omdpc A}_{\strat_2}(m+1) \Big) \cap \ex^{\tr(\omdpc A)} & \text{(\cref{lem:strat_union_trace_initial})}&\\
    {}={}& \Big(\Reach^{\omdpc A}_{\strat_1}(i) \cap \ex^{\tr(\omdpc A)}\Big) \cup \Big(\Reach^{\omdpc A}_{\strat_2}(m+1) \cap \ex^{\tr(\omdpc A)}\Big)\\
    {}={}& (T_1 \cup T_2) \setminus \Set*{ \exarg{n+1}^{\omdpc A} } & (\ex^{\tr(\omdpc A)} \defeq \ex^{\omdpc A} \setminus \Set*{ \exarg{n+1}^{\omdpc A} })&
\end{align*}
\end{enumerate}

\end{myproof}
Recall \cref{lem:comp_of_tr}.
\lemcompoftr*\noindent
We prove both directions of \cref{lem:comp_of_tr} separately.

\begin{lemma}[$\tr$ solution soundness]
Let $\omdpc A$ be an roMDP such that\par\noindent
$\typemdp{\omdpc A}\colon m+1 \to n+1$, $i \in [m]$ and $\tuple{T, b} \in \powerset{[n]} \times \{\bot, \top\}$.
Then:
\[ \tuple{T, b} \in (\tr(\solarg{\omdpc A}))(i)\quad\text{implies}\quad\tuple{T, b} \in \solarg{\tr(\omdpc A)}(i).\]
\end{lemma}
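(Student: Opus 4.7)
The plan is to unfold the definition of $\tr(\solarg{\omdpc A})(i)$ and handle its two constituent parts as separate cases, in each case producing a concrete no-lose strategy in $\tr(\omdpc A)$ whose effect is $\tuple{T, b}$.

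\textbf{Case 1:} $\tuple{T, b} \in \solarg{\omdpc A}(i)$ with $n+1 \notin T$. By the definition of the local solution there exists a no-lose strategy $\strat_1$ in $\omdpc A$ from $i$ with $\solstrat^{\omdpc A}(i, \strat_1) = \tuple{T, b}$. Since $\exarg{n+1}^{\omdpc A} \notin T$, we have $\exarg{n+1}^{\omdpc A} \notin \Reach^{\omdpc A}_{\strat_1}(i)$. As the trace operation only adds the edges $\tuple{\exarg{n+1}^{\omdpc A}, *}$ and $\tuple{*, \enarg{m+1}^{\omdpc A}}$, and $\strat_1$ is $i$-local with $\exarg{n+1}^{\omdpc A}$ unreachable, the reachable set from $i$ under $\strat_1$ in $\tr(\omdpc A)$ coincides with $\Reach^{\omdpc A}_{\strat_1}(i)$. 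Because $\ex^{\tr(\omdpc A)} = \ex^{\omdpc A} \setminus \{\exarg{n+1}^{\omdpc A}\}$ and $T$ already excludes $n+1$, the proper exit set and the reachability of $\buchiverts$ are unchanged, so $\solstrat^{\tr(\omdpc A)}(i, \strat_1) = \tuple{T, b}$ and $\strat_1$ is no-lose in $\tr(\omdpc A)$.

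\textbf{Case 2:} $\tuple{T, b} = \tuple{(T_1 \cup T_2) \setminus \{n+1\}, b_1 \vee b_2}$ where $\tuple{T_1, b_1} \in \solarg{\omdpc A}(i)$ with $n+1 \in T_1$, and $\tuple{T_2, b_2} \in \solarg{\omdpc A}(m+1)$ with $\tuple{T_2, b_2} \ne \tuple{\{n+1\}, \bot}$. By the definition of $\solarg{\omdpc A}$, pick no-lose strategies $\strat_1, \strat_2$ in $\omdpc A$ from $i$ and $m+1$ realizing $\tuple{T_1, b_1}$ and $\tuple{T_2, b_2}$, respectively. Define $\strat'$ as in the statement of \cref{lem:trace_decomposition}, i.e., sending $\exarg{n+1}^{\omdpc A}$ to $\{*\}$ and every other vertex of $\tr(\omdpc A)$ to $\emptyset$. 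Invoking \cref{lem:trace_decomposition} directly yields that $\strat_1 \cup \strat_2 \cup \strat'$ is no-lose in $\tr(\omdpc A)$ from $i$ with effect $\tuple{(T_1 \cup T_2) \setminus \{\exarg{n+1}^{\omdpc A}\}, b_1 \vee b_2} = \tuple{T, b}$, so $\tuple{T, b} \in \solarg{\tr(\omdpc A)}(i)$.

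The bulk of the argument is already encapsulated in \cref{lem:trace_decomposition}, so Case 2 is essentially a citation. The main, if minor, obstacle is Case 1: one must be careful to verify that a strategy which does not reach the traced exit in $\omdpc A$ genuinely carries over verbatim to $\tr(\omdpc A)$, both in the sense that its reachable vertex set is unaffected by the newly added loop edges and in the sense that dropping $\exarg{n+1}^{\omdpc A}$ from the exit alphabet does not shrink or alter its proper exit set. Once these two bookkeeping points are checked, both cases compose to give the required implication.
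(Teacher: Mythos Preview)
Your proof is correct and follows essentially the same approach as the paper: both split on the two constituents of the definition of $\tr(\solarg{\omdpc A})(i)$, handle Case~1 by observing that a strategy not reaching $\exarg{n+1}^{\omdpc A}$ carries over to $\tr(\omdpc A)$ unchanged, and dispatch Case~2 by a direct application of \cref{lem:trace_decomposition}. If anything, your Case~1 is slightly more explicit than the paper's about why the reachable set and proper exit set are preserved.
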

\begin{myproof}
We expand the definition of $(\tr(\solarg{\omdpc A}))(i)$ to the following two cases which prove separately.
\begin{enumerate}
    \item $\tuple{T, b} \in \solarg{\omdpc A}(i) \wedge \exarg{n+1} \not\in T$:

    This implies there exists a no-lose strategy $\strat$ such that $\solstrat^{\omdpc A}(i, \strat) = \tuple{T, b}$.
    As the trace exit $\exarg{n+1}$ is not reached, $\strat$ can also be used to show $\solstrat^{\tr(\omdpc A)}(i, \strat) = \tuple{T, b}$, meaning that $\tuple{T, b} \in \solarg{\tr(\omdpc A)}(i)$ as required.
    
    \item There exists $\tuple{T_1, b_1} \in \solarg{\omdpc A}(i)$ and $\tuple{T_2, b_2} \in \solarg{\omdpc A}(m+1)$ such that:
    \begin{align}
    n+1 &\in T_1\label{eq:tr_t_one}\\
    \tuple{T_2, b_2} &\ne \tuple{\{n+1\}, \bot}\label{eq:tr_t_two}\\
    \tuple{(T_1 \cup T_2) \setminus \{n+1\}, b_1 \vee b_2} &= \tuple{T, b}
    \end{align}
The above implies there exists strategies $\strat_1, \strat_2$ such that $\solstrat^{\omdpc A}(i, \strat_1) = \tuple{T_1, b_1}$ and $\solstrat^{\omdpc A}(m+1, \strat_2) = \tuple{T_2, b_2}$.
We construct the strategy $\strat \defeq \strat_1 \cup \strat_2 \cup \strat'$ where $\strat' \defeq \{\exarg{n+1} \mapsto \{*\} \} \cup \{ v \mapsto \emptyset \mid v \in \verts^{\omdpc A}, v \ne\exarg{n+1} \}$.
We satisfy all conditions to apply \cref{lem:trace_decomposition} and obtain
\begin{align*}
\solstrat^{\tr(\omdpc A)}(i, \strat) 
= \tuple{(T_1 \cup T_2) \setminus \{\exarg{n+1}\}, b_1 \vee b_2}
= \tuple{T, b}
\end{align*}
and therefore $\tuple{T, b}\in \solarg{\tr(\omdpc A)}(i)$ as required.
\end{enumerate}
\end{myproof}

\begin{lemma}[$\tr$ solution completeness]
Let $\omdpc A$ be an roMDP such that\par\noindent
$\typemdp{\omdpc A}\colon\allowbreak m+1 \to n+1$, $i \in [m]$ and $\tuple{T, b} \in \powerset{[n]} \times \{\bot, \top\}$.
Then:
\[ \tuple{T, b} \in \solarg{\tr(\omdpc A)}(i)\text{ implies }\tuple{T, b} \in (\tr(\solarg{\omdpc A}))(i).\]
\end{lemma}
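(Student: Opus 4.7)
Given a no-lose strategy $\strat$ in $\tr(\omdpc A)$ witnessing $\solstrat^{\tr(\omdpc A)}(i, \strat) = \tuple{T, b}$, the plan is to decompose $\strat$ into two local strategies in $\omdpc A$---one from $i$ and one from $\enarg{m+1}^{\omdpc A}$---that reassemble via \cref{lem:trace_decomposition} to yield $\tuple{T, b}$.

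First I would case split on whether $\strat$ ever uses the trace loop, i.e., whether $\exarg{n+1}^{\omdpc A} \in \Reach_\strat^{\tr(\omdpc A)}(i)$. If it does not, then $\strat$ (viewed on $\omdpc A$) is itself a no-lose strategy from $i$ with the same effect $\tuple{T, b}$, and $n+1 \notin T$; this immediately places $\tuple{T, b}$ in the first clause of the definition of $(\tr(\sol_{\omdpc A}))(i)$.

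In the remaining case, I would let $\strat_1$ be $\strat$ restricted to the vertices reachable from $i$ within $\omdpc A$ alone (not through the loop), and define $\strat_2$ analogously from $\enarg{m+1}^{\omdpc A}$. Writing $\strat'$ for the auxiliary strategy from \cref{lem:trace_decomposition}, the union $\strat_1 \cup \strat_2 \cup \strat'$ should agree with $\strat$ on every state $\strat$ actually visits in $\tr(\omdpc A)$. To see that $\strat_1$ is no-lose in $\omdpc A$, fix any vertex $v$ reachable from $i$ under $\strat_1$; the no-lose property of $\strat$ in $\tr(\omdpc A)$ provides a path from $v$ to either a Büchi vertex or an exit of $\tr(\omdpc A)$, and in $\omdpc A$ this path either ends in such a vertex or passes through $\exarg{n+1}^{\omdpc A}$---which itself lies in $\ex^{\omdpc A}$. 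A symmetric argument covers $\strat_2$.

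Setting $\tuple{T_1, b_1} \defeq \solstrat^{\omdpc A}(i, \strat_1)$ and $\tuple{T_2, b_2} \defeq \solstrat^{\omdpc A}(m+1, \strat_2)$, I would then verify the side conditions of \cref{lem:trace_decomposition}: $\exarg{n+1}^{\omdpc A} \in T_1$ holds because the loop was used, and $\tuple{T_2, b_2} \ne \tuple{\{\exarg{n+1}^{\omdpc A}\}, \bot}$ follows by contradiction---if equality held, $\strat$ would admit an infinite Büchi-free, exit-free play in $\tr(\omdpc A)$ cycling through $\enarg{m+1}^{\omdpc A} \to \dots \to \exarg{n+1}^{\omdpc A} \to * \to \enarg{m+1}^{\omdpc A}$, contradicting that $\strat$ is no-lose. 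An application of \cref{lem:trace_decomposition} then delivers $\tuple{T, b} = \tuple{(T_1 \cup T_2) \setminus \{n+1\}, b_1 \vee b_2}$, matching the second clause of $(\tr(\sol_{\omdpc A}))(i)$. The hard part will be the careful bookkeeping of reachability in $\tr(\omdpc A)$ versus $\omdpc A$: specifically, showing that iterating the trace loop adds no reachable vertices beyond those already exposed after a single traversal, which is precisely the fact flagged in the sketch of \cref{lem:comp_of_tr}.
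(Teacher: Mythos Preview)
Your proposal is correct and follows essentially the same approach as the paper's proof: the same case split on whether $\exarg{n+1}^{\omdpc A}$ is reached, the same restriction of $\strat$ to $\strat_1$ and $\strat_2$ via local reachability in $\omdpc A$, the same no-lose and side-condition arguments, and the same invocation of \cref{lem:trace_decomposition}. The ``hard part'' you flag---that iterating the trace loop adds no new reachable vertices---is exactly the content of the paper's auxiliary \cref{lem:strat_union_trace_initial}.
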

\begin{myproof}
$\tuple{T, b} \in \solarg{\tr(\omdpc A)}(i)$ implies (by definition) the existence of a no-lose strategy $\strat$ such that $\solstrat^{\tr(\omdpc A)}\pars*{i, \strat} = \tuple{T, b}$.
We distinguish two cases: (1) $\exarg{n+1}^{\omdpc A} \not\in \Reach_{\strat}^{\tr(\omdpc A)}(i)$, and (2) $\exarg{n+1}^{\omdpc A} \in \Reach_\strat^{\tr(\omdpc A)}(i)$.
\begin{enumerate}
    \item $\exarg{n+1}^{\omdpc A} \not\in \Reach_{\strat}^{\tr(\omdpc A)}(i)$:
    
    By definition of trace, all vertices and transitions are equal except for in the trace exit $\exarg{n+1}^{\omdpc A}$, hence, we have that $\solstrat^{\omdpc A}\pars*{i, \strat} = \solstrat^{\tr(\omdpc A)}\pars*{i, \strat} = \tuple{T, b}$.
    This implies that $\tuple{T, b} \in (\tr(\solarg{\omdpc A}))(i)$ as required.

    \item $\exarg{n+1}^{\omdpc A} \in \Reach_\strat^{\tr(\omdpc A)}\pars*{i}$:

We split $\strat$ into $\strat_1, \strat_2\colon \verts^{\omdpc A} \to \powerset{\verts^{\omdpc A}}$ where
{\small
\begin{align*}
\strat_1(v) \defeq \begin{cases}
    \strat(v) & \text{if } v \in \Reach_\strat^{\omdpc A}(i)\\
    \emptyset & \text{otherwise}
\end{cases},\quad
\strat_2(v) \defeq \begin{cases}
    \strat(v) & \text{if } v \in \Reach_\strat^{\omdpc A}(m+1)\\
    \emptyset & \text{otherwise}
\end{cases}.
\end{align*}
}
Note that by definition, $\strat_1$ and $\strat_2$ are $i$-local and $(m+1)$-local respectively.
We work towards satisfying the conditions of \cref{lem:trace_decomposition}.
We require:
\begin{itemize}
    \item $\omdpc A$ is an roMDP such that $\typemdp{\omdpc A}\colon m+1 \to n+1$ and $i \in [m]$.
    
    This is satisfied by definition.
    
    \item $\strat_1, \strat_2\colon \verts^{\omdpc A} \to \powerset{\verts^{\omdpc A}}$ are no-lose strategies in $\omdpc A$ for entrances $i$ and $m+1$ respectively.

    \begin{itemize}
    \item To show that $\strat_1$ is a no-lose strategy in $\omdpc A$, we need to show that for all $v$ in $\Reach_{\strat_1}^{\omdpc A}(i)$ we have that $\Reach_{\strat_1}^{\omdpc A}(v) \cap (\ex^{\omdpc A} \cup \buchiverts) \ne \emptyset$.
    Consider the shortest path $\pi$ from $v$ to $v' \in (\ex^{\tr(\omdpc A)} \cup \buchiverts)$ in $\tr(\omdpc A)$ which necessarily exists as $\strat$ is a no-lose strategy in $\tr(\omdpc A)$.
    Either $\pi$ does not contain $\exarg{n+1}$, in which case $\pi$ also exists in $\omdpc A$ by playing $\strat_1$ (which is equal to $\strat$ on the reachable states), or otherwise, we reach $\exarg{n+1}$ and therefore $\ex^{\omdpc A}$.
    
    \item To show that $\strat_2$ is a no-lose strategy in $\omdpc A$ we use the same argument as for $\strat_1$.
    \end{itemize}

    \item $\tuple{T_1, b_1} \defeq \solstrat^{\omdpc A}(i, \strat_1)$ such that $\exarg{n+1}^{\omdpc A} \in T_1$.
    
    Follows from the fact that $\exarg{n+1}^{\omdpc A} \in \Reach_\strat^{\tr(\omdpc A)}(i)$ and therefore $\exarg{n+1}^{\omdpc A} \in \Reach_{\strat_1}^{\omdpc A}(i)$.
    
    \item $\tuple{T_2, b_2} \defeq \solstrat^{\omdpc A}(m+1, \strat_2)$ and $\tuple{T_2, b_2} \ne \tuple*{\Set*{\exarg{n+1}^{\omdpc A}}, \bot}$.
    
    Assume for contradiction the opposite, namely that $\tuple{T_2, b_2} = \tuple{\{\exarg{n+1}^{\omdpc A}\}, \bot}$.
    This would imply that for entrance $m+1$ in $\tr(\omdpc A)$ by playing $\strat_2$ (and therefore $\strat$) we cannot reach a Büchi vertex or an exit in $\ex^{\tr(\omdpc A)}$.
    This implies that $\strat$ is losing in $\tr(\omdpc A)$, which is a contradiction.

    \item $\strat' \defeq \{ \exarg{n+1} \mapsto \{*\} \} \cup \{ v \mapsto \emptyset \mid v \in \verts^{\omdpc A}, v \ne\exarg{n+1} \}$:
    
    Satisfied by definition.

\end{itemize}
We have that $\strat = \strat_1 \cup \strat_2 \cup \strat'$, which 
follows by definition of $\strat_1$, $\strat_2$, $\strat'$ and \cref{lem:strat_union_trace_initial}.
We satisfy all conditions to apply \cref{lem:trace_decomposition} and obtain
\begin{align*}
\solstrat^{\tr(\omdpc A)}(i, \strat) &= \tuple{T, b}\\
&= \solstrat^{\tr(\omdpc A)}(i, \strat_1 \cup \strat_2 \cup \strat')\\
&= \tuple*{(T_1 \cup T_2) \setminus \Set*{ \exarg{n+1}^{\omdpc A} }, b_1 \vee b_2}
\end{align*}
and hence $\tuple{T, b} \in (\tr(\solarg{\omdpc A}))(i)$.
\end{enumerate}
\end{myproof}

\begin{lemma}[$\Reach$ decomposition $\tr$]\label{lem:strat_union_trace_initial}
Let
\begin{itemize}
    \item $\omdpc A$ be an roMDP such that $\typemdp{\omdpc A}\colon m+1 \to n+1$ and $i \in [m]$,
    \item $\strat_1, \strat_2\colon \verts^{\omdpc A} \to \powerset{\verts^{\omdpc A}}$ be no-lose strategies in $\omdpc A$ for $i$ and $m+1$ respectively,
    \item $\tuple{T_1, b_1} \defeq \solstrat^{\omdpc A}\pars*{i, \strat_1}$ such that $\exarg{n+1}^{\omdpc A} \in T_1$,
    \item $\tuple{T_2, b_2} \defeq \solstrat^{\omdpc A}\pars*{m+1, \strat_2}$ such that $\tuple{T_2, b_2} \ne \tuple{\{\exarg{n+1}^{\omdpc A}\}, \bot}$,
    \item $\strat' \defeq \{\exarg{n+1} \mapsto \{*\} \} \cup \{ v \mapsto \emptyset \mid v \in \verts^{\omdpc A}, v \ne\exarg{n+1} \}$,
    \item $\strat \defeq \strat_1 \cup \strat_2 \cup \strat'$.
\end{itemize}
Then,
$
    \Reach^{\tr(\omdpc A)}_{\strat_1 \cup \strat_2 \cup \strat'}(i) =
    \Reach^{\omdpc A}_{\strat_1}(i) \cup \Reach^{\omdpc A}_{\strat_2}(m+1).
$
\end{lemma}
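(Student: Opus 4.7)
The plan is to prove the two set inclusions separately, while treating the auxiliary vertex $*$ of $\tr(\omdpc A)$ as a pass-through node (so the stated equality is understood modulo $\{*\}$, or equivalently one reads the left-hand side as restricted to $\verts^{\omdpc A}$).

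For the $\supseteq$ direction, I would first show $\Reach^{\omdpc A}_{\strat_1}(i) \subseteq \Reach^{\tr(\omdpc A)}_{\strat}(i)$ by a routine induction on the depth of \cref{def:reachable_vertices}, using that $\strat \supseteq \strat_1$ on $\verts_1^{\omdpc A}$ and that $\tr$ preserves every probabilistic post-set of $\omdpc A$. Since $\exarg{n+1}^{\omdpc A} \in T_1 \subseteq \Reach^{\omdpc A}_{\strat_1}(i)$ by hypothesis, this inclusion gives $\exarg{n+1}^{\omdpc A} \in \Reach^{\tr(\omdpc A)}_{\strat}(i)$, and then the two edges $\exarg{n+1}^{\omdpc A} \to *$ (forced by $\strat'$) and $* \to \enarg{m+1}^{\omdpc A}$ (from the trace construction) give $\enarg{m+1}^{\omdpc A} \in \Reach^{\tr(\omdpc A)}_{\strat}(i)$. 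A repeat of the first induction starting from $\enarg{m+1}^{\omdpc A}$ and using $\strat_2$, combined with $\Reach$-monotonicity (\cref{lem:reach_monotone}), yields $\Reach^{\omdpc A}_{\strat_2}(m+1) \subseteq \Reach^{\tr(\omdpc A)}_{\strat}(i)$.

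For the $\subseteq$ direction, I would set $R \defeq \Reach^{\omdpc A}_{\strat_1}(i) \cup \Reach^{\omdpc A}_{\strat_2}(m+1)$ and prove by induction on $n$ that $X_n \subseteq R \cup \{*\}$, where $X_0, X_1, \dots$ is the sequence from \cref{def:reachable_vertices} with $X_0 = \{i\}$. The base case is immediate. For the step, take $v \in X_{n+1} \setminus X_n$ with $v \in \Post^{\tr(\omdpc A)}_{\strat}(v')$ for some $v' \in X_n \subseteq R \cup \{*\}$, and do a case analysis on $v'$: if $v' = *$, its unique successor $\enarg{m+1}^{\omdpc A}$ lies in $R$; if $v' = \exarg{n+1}^{\omdpc A}$, then $v = *$; if $v'$ is any other player-1 vertex, then the $i$-locality of $\strat_1$ and $(m+1)$-locality of $\strat_2$ force $\strat(v') = \strat_1(v') \cup \strat_2(v')$ to be non-empty only when $v' \in R$, and each successor remains inside the corresponding $\Reach^{\omdpc A}_{\strat_k}$-component by \cref{lem:reach_monotone}; finally, if $v' \in \verts_P^{\omdpc A}$, then $\Post^{\tr(\omdpc A)}(v') = \Post^{\omdpc A}(v')$ and $v$ stays in whichever component of $R$ contained $v'$.

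The main obstacle will be the case where a player-1 vertex lies in both $\Reach^{\omdpc A}_{\strat_1}(i)$ and $\Reach^{\omdpc A}_{\strat_2}(m+1)$: there $\strat(v')$ is a genuine union and may pick successors that only one of the sub-strategies would enable, so a careful application of $\Reach$-monotonicity to the appropriate starting state is needed in each sub-case to keep $v$ inside $R$. A secondary, purely cosmetic issue is the handling of $*$, which appears on the left-hand side through the chain $\exarg{n+1}^{\omdpc A} \to * \to \enarg{m+1}^{\omdpc A}$ and must either be absorbed by notational convention or explicitly added to the right-hand side.
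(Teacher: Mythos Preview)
Your proposal is correct and follows essentially the same route as the paper: for $\supseteq$ you use that $\strat \supseteq \strat_k$ pointwise together with the reachability of $\enarg{m+1}^{\omdpc A}$ through $\exarg{n+1}^{\omdpc A}$, and for $\subseteq$ you use the $i$-locality of $\strat_1$ and the $(m{+}1)$-locality of $\strat_2$. The paper's proof states exactly these two ideas in two short sentences; your version simply spells out the induction and the case analysis more carefully, and your remark about $*$ is a legitimate cosmetic point that the paper silently absorbs.
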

\begin{myproof}
We prove both inclusions.
\begin{description}
    \item[($\supseteq$):] 
    This follows from the fact that entrance $m+1$ is reachable through $\exarg{n+1}$ combined with the fact that $\strat(v) \supseteq \strat_1(v)$ and $\strat(v) \supseteq \strat_2(v)$ for all $v \in \verts^{\omdpc A}$.

    \item[$(\subseteq)$:]
    This follows from the definition of $\Reach$ and the fact that $\strat_1$ and $\strat_2$ are $i$ and $(m+1)$-local respectively.
    Given vertex $v \in \Reach^{\tr(\omdpc A)}_{\strat_1 \cup \strat_2 \cup \strat'}(i)$, then there is some predecessor $v'$ such that either $v \in \strat_1(v')$ or $v \in \strat_2(v')$, which immediately implies that either $v \in \Reach^{\omdpc A}_{\strat_1}(i)$ or $\Reach^{\omdpc A}_{\strat_2}(m+1)$.
\end{description}
\end{myproof}

\subsection{One-shot Bottom-up Algorithm}\label{sec:one_shot_bottom_up_algorithm}
We present our one-shot bottom-up algorithm with solution sharing in \cref{alg:bottomup}.
\begin{algorithm}[t]
\newcommand{\cache}{\mathrm{Cache}}
\newcommand{\algsol}{\textsc{Sol}}
\caption{One-shot Algorithm}
    \begin{algorithmic}[1]
        \Procedure{OneShot}{$\sd$, $\buchiverts \subseteq \verts^{\sem\sd}$, $i \in \en^{\sem\sd}$}
            \State \Return $\tuple{\emptyset, \top} \in \bigl(\algsol(\sd, B)\bigr)(i)$
        \EndProcedure
\item[]
        \State $\cache \assign \emptyset$
        \Procedure{Sol}{$\sd$, $\buchiverts$}
            \If{$\sd \not\in \cache$}
                \If{$\sd = \omdpc A$}
                    $\cache(\sd) \assign \solarg{\omdpc A}$
                \ElsIf{$\sd = \sd_1 \seqcomp \sd_2$}
                    $\cache(\sd) \assign \algsol(\sd_1, \buchiverts) \seqcomp \algsol(\sd_2, \buchiverts)$
                \ElsIf{$\sd = \sd_1 \sumcomp \sd_2$}
                    $\cache(\sd) \assign \algsol(\sd_1, \buchiverts) \sumcomp \algsol(\sd_2, \buchiverts)$
                \ElsIf{$\sd = \tr(\sd_1)$}
                    $\cache(\sd) \assign \tr(\algsol(\sd_1, \buchiverts))$
                \EndIf
            \EndIf
            \State \Return $\cache(\sd)$
        \EndProcedure
    \end{algorithmic}
    \label{alg:bottomup}
\end{algorithm}
We provide a small example on how solution sharing can increase the efficiency of the algorithm.
\begin{example}[solution sharing]
    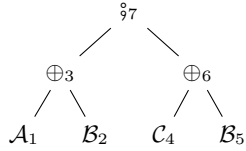
\begin{figure}[t]
        \centering
        \begin{tikzpicture}[node distance=0.5]
            \node (root) {$\seqcomp_7$};
            \node[below left=of root] (a) {$\sumcomp_3$};
            \node[below right=of root] (b) {$\sumcomp_6$};
            \node[below left=of a, xshift=0.5cm] (a1) {$\omdpc A_1$};
            \node[below right=of a, xshift=-0.5cm] (a2) {$\omdpc B_2$};
            \node[below left=of b, xshift=0.5cm] (b1) {$\omdpc C_4$};
            \node[below right=of b, xshift=-0.5cm] (b2) {$\omdpc B_5$};
        
            \draw
                (root) -- (a) -- (a1) (a) -- (a2)
                (root) -- (b) -- (b1) (b) -- (b2);
        \end{tikzpicture}
        \caption{Solution sharing example}
        \label{fig:example_sharing}
    \end{figure}
    Consider the string diagram $\sd \defeq (\omdpc A \sumcomp \omdpc B) \seqcomp (\omdpc C \sumcomp \omdpc B)$ depicted in \cref{fig:example_sharing}.
    In this example, we assume that $\solarg{\omdpc A} = \solarg{\omdpc C}$.
    We use the numbers in \cref{fig:example_sharing} to explain the execution order of our algorithm.
    \begin{enumerate}
        \item[(1,2):] We recursively traverse the string diagram and compute $\solarg{\omdpc A}$ and  $\solarg{\omdpc B}$.
        \item[(3):] We combine the results to obtain $\solarg{\omdpc A} \sumcomp \solarg{\omdpc B} = \solarg{\omdpc A \sumcomp \omdpc B}$.
        \item[(4):] We recursively traverse the other branch of the sequential composition and compute $\solarg{\omdpc C}$ and notice that it is equal to the solution computed in (1).
        \item[(5):] We compute $\solarg{\omdpc B}$ by reusing the solution computed in (1).
        \item[(6):] We combine the results of (4) and (5) which must be equal to the sum of (1) and (2), which was computed in (2), and can therefore obtain the result from the cache.
        \item[(7):] We combine the results of (3) and (6).
    \end{enumerate}
    Without solution sharing, we compute the solution of 4 leaf roMDPs and 3 compositional operations.
    By sharing, we compute the solution of 3 leafs and performed 2 composition operations instead.
    In general, sharing bigger subtrees improves the effectiveness of the cache.
\end{example}

\section{Omitted Contents of \cref{sec:refinement}}

\subsection{Proofs of \cref{sec:shortcut_graph}}\label{app:shortcut}
Recall \cref{lem:shortcut_graph}:
\lemshortcutgraph*\noindent
We prove both directions separately.

\begin{lemma}[soundness of $\graph_\sd$]
Let $i \in \componentsent(\sd)$ be a vertex.
We have that
\[
i \models^{\graph_\sd} \buchiobj\buchiverts_\sd \text{ implies }
i \models^{\sem\sd} \buchiobj\buchiverts
\]
\end{lemma}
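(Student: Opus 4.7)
The plan is to construct a winning strategy $\strat$ on $\sem\sd$ from a winning strategy $\tau$ on $\shortcut$. Since almost-sure Büchi winning admits memoryless deterministic strategies even in the graph abstraction, we may assume $\tau(j)$ is a singleton $\Set*{(j,\tuple{X_j,b_j})}$ for each entrance $j$ reachable from $i$ under $\tau$. For each such $j$, let $\omdpc A_j$ be the component containing $j$; by the definition of $\solarg{\omdpc A_j}$ there exists a no-lose strategy $\strat_j$ in $\omdpc A_j$ from $j$ with $\solstrat^{\omdpc A_j}(j,\strat_j)=\tuple{X_j,b_j}$. I would then define $\strat$ in $\sem\sd$ as the pointwise union of these $\strat_j$. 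This is well-defined because each $\strat_j$ is $j$-local by \cref{def:no_lose}, and entrances of different components act on disjoint sets of player-$1$ vertices.

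To verify that $\strat$ wins from $i$, I would invoke the characterization of almost-sure Büchi acceptance stated just before \cref{def:buchi}: it suffices to show that from every $v\in\Reach_\strat^{\sem\sd}(i)$ some Büchi vertex is reachable under $\strat$. An induction on the length of a path from $i$ to $v$ shows that $v\in\Reach_{\strat_j}^{\omdpc A_j}(j)$ for some entrance $j$ itself reachable from $i$ under $\tau$; the inductive step uses that any transition leaving a component of $\sem\sd$ via an exit $x\in X_j$ lands, by the definitions of $\mappingfunc{\sd}{\omdpc A_j}$ and of the operators $\seqcomp$, $\sumcomp$, $\tr$, at an entrance $j'\in\mappingfunc{\sd}{\omdpc A_j}(X_j)$, which is a successor of $(j,\tuple{X_j,b_j})$ in $\shortcut$. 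The no-lose property of $\strat_j$ then guarantees that from $v$ either (a)~a Büchi vertex is reachable under $\strat_j$ (done), or (b)~some exit $x\in X_j$ is reached, leading to an entrance $j'\in\mappingfunc{\sd}{\omdpc A_j}(X_j)$. In case (b), since $\tau$ is winning there is a finite path in $\shortcut$ from $j'$ to a Büchi vertex $(k,\tuple{X_k,\top})$; I would lift this path to $\sem\sd$ by concatenating the associated no-lose strategies $\strat_{j'},\strat_{k_1},\dots,\strat_k$ along the intermediate exit-entrance pairs, producing a path that reaches $k$, after which $b_k=\top$ ensures a Büchi vertex of $\sem\sd$ is reached under $\strat_k\subseteq\strat$.

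The main obstacle will be making the lifting in case (b) rigorous, namely turning each shortcut-graph edge $(j,\tuple{X_j,b_j})\to j'$ into a concrete sub-path in $\sem\sd$ from the current vertex to $j'$ under $\strat$. This amounts to unfolding the inductive definitions of $\mappingfunc{\sd}{\omdpc A}$ and of the operational semantics operators, and checking that the $j'$-locality of $\strat_{j'}$ does not obstruct reachability once the exit $x$ has been traversed into $j'$. Two corner cases merit a brief remark: when $X_j=\emptyset$, the observation following the effect definition forces $b_j=\top$, so $(j,\tuple{\emptyset,\top})$ is already a Büchi vertex of $\shortcut$ with a self-loop on $j$ and $\strat_j$ reaches a Büchi vertex of $\sem\sd$ directly, making case (a) apply; and exits of $\sem\sd$ that do not connect to any entrance must be dealt with by observing that any effect vertex all of whose exits are such dangling exits has no successors in $\shortcut$ and can only be winning if it is itself Büchi, so $\tau$ never selects a non-Büchi effect relying on such an exit along reachable plays.
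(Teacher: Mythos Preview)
Your proof is correct and takes a different route from the paper. The paper builds a strategy $\strat_G$ \emph{with memory}---recording the entrance through which the current component was reached and playing the associated local no-lose strategy until an exit is hit---argues informally that $\strat_G$ wins, and only then appeals to the sufficiency of memoryless strategies to conclude. You instead take the pointwise union of the local strategies and verify the reachability characterization of \cref{def:buchi} directly, which trades the final memoryless-determinacy appeal for the (equally standard) assumption that $\tau$ may be taken deterministic on $\shortcut$; your route is arguably more explicit. One small point: your disjointness remark only covers entrances of \emph{different} components; two entrances $j_1,j_2$ of the \emph{same} component can have overlapping reach sets, so under the union one may leave via an exit in $X_{j_2}\setminus X_{j_1}$ even after entering at $j_1$. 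The induction still goes through, since $j$-locality guarantees every player-1 step is attributable to some $\strat_k$ with $k$ reachable under $\tau$, but the invariant should be phrased as ``$v\in\Reach_{\strat_k}(k)$ for \emph{some} $k$ reachable under $\tau$'' without tying $k$ to the entrance last crossed. Your remark on dangling exits flags an edge case that the paper's own proof does not address.
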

\begin{myproof}
Let $\strat$ be a strategy such that $i, \strat \models^{\graph_\sd} \buchiobj\buchiverts_\sd$, which exists by definition.
Using $\strat$, we will construct a global strategy $\strat_G$ such that $i, \strat_G \models^{\sem\sd} \buchiobj\buchiverts$.
First, we define local strategies $\strat_t$ for each transition $t \defeq \tuple{i, \tuple{T, b}}$ picked by $\strat$ (i.e., $t \in \strat(i)$) such that $\solstrat^{\omdpc A}\pars*{i, \strat_t} = \tuple{T, b}$.

We sketch the construction of $\strat_G$.
For the initial vertex $i$, we randomly play a strategy according to the transitions of in $\strat(i)$, say $\strat_t$.
As $\strat_t$ is necessarily a no-lose strategy in the roMDP containing entrance $i$, we will either (1) satisfy the Büchi condition locally, or (2) reach an exit $j$, possibly reaching a Büchi vertex along the way.
We perform the same strategy selection for the entrance connected to $j$ as for $i$.
As $\strat$ is winning in $\shortcut$, this implies that we will eventually see a transition such that the Büchi bit is set, and hence play a local strategy in some component such that we can reach a Büchi vertex in that component.
Therefore, playing $\strat_G$ is winning for $i$ in $\sem\sd$.

The strategy $\strat_G$ requires memory to keep track of the entrance that was taken in the current roMDP.
As memoryless strategies are sufficient, there must be a memoryless strategy that is also winning. %
Hence, we claim that $i \models^{\sem\sd} \buchiobj\buchiverts$.
\end{myproof}

\begin{lemma}[completeness of $\graph_\sd$]
Let $i \in \componentsent(\sd)$ be a vertex.
We have that
\[
i \models^{\sem\sd} \buchiobj\buchiverts \text{ implies }
i \models^{\graph_\sd} \buchiobj\buchiverts_\sd.
\]
\end{lemma}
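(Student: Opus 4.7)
\begin{proofs}
Given a (memoryless) winning strategy $\strat_G$ in $\sem\sd$ from $i$, we construct a winning strategy $\strat$ in $\shortcut$. For each component entrance $j = \enarg k^{\omdpc A} \in \componentsent(\sd)$, define $\strat_j$ to be the \emph{restriction of $\strat_G$ to $\omdpc A$ starting at $j$}: for $v \in \verts_1^{\omdpc A}$, set $\strat_j(v) \defeq \strat_G(v)$ if $v$ is reachable from $j$ within $\omdpc A$ under this restriction, and $\strat_j(v) \defeq \emptyset$ otherwise. This is well-defined because for a player-1 vertex $v \in \verts_1^{\omdpc A}$ that survives in $\sem\sd$, all successors $\strat_G(v) \subseteq \verts_P^{\omdpc A}$ remain inside $\omdpc A$. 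The first key step is to show each such $\strat_j$ is no-lose from $j$ via \cref{lem:no_lose_equivalent_def}: otherwise some $v \in \Reach^{\omdpc A}_{\strat_j}(j)$ fails to reach $\ex^{\omdpc A} \cup \buchiverts$ in $\omdpc A$, but since leaving $\omdpc A$ in $\sem\sd$ is only possible via a transition that in $\omdpc A$ would land on an $\omdpc A$-exit, the play of $\strat_G$ from $v$ remains confined to $\omdpc A$ forever and never visits $\buchiverts$, contradicting $\strat_G \models^{\sem\sd} \buchiobj\buchiverts$. Letting $\tuple{T_j, b_j} \defeq \solstrat^{\omdpc A}(j, \strat_j) \in \solarg{\omdpc A}(j)$, we define $\strat(j) \defeq \{\tuple{j, \tuple{T_j, b_j}}\}$.

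A preliminary observation is that $\strat_G$ cannot reach any \emph{global} exit of $\sem\sd$, since such a vertex has no successors and is not Büchi, and would thus violate $\strat_G \models \buchiobj\buchiverts$. Hence every $e \in T_j$ is connected to some entrance, i.e., $\mappingfunc{\sd}{\omdpc A}(\{e\}) \ne \emptyset$, so the successor structure of $\tuple{j, \tuple{T_j, b_j}}$ in $\shortcut$ agrees with the next possible entrance transitioned into by $\strat_G$ after $\strat_j$ exits $\omdpc A$. By induction on path length, every entrance $j'$ reachable in $\shortcut$ from $i$ under $\strat$ is also reachable in $\sem\sd$ from $i$ under $\strat_G$: each $\shortcut$-step $j_l \to \tuple{j_l, \tuple{T_{j_l}, b_{j_l}}} \to j_{l+1}$ lifts to a segment of $\strat_G$-play from $j_l$ via $\strat_{j_l}$ that reaches an exit of the component of $j_l$ connected to $j_{l+1}$.

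To finish, it suffices (by \cref{def:buchi}) to show that from every $\shortcut$-vertex reachable from $i$ under $\strat$, some $\buchiverts_\sd$-vertex is reachable. Take any reachable entrance $j'$; by the previous observation $j'$ is reachable in $\sem\sd$ under $\strat_G$, so some $b^* \in \buchiverts$ is reachable from $j'$ globally. Decompose the witnessing path into the sequence $j' = j_0, j_1, \ldots, j_m$ of component entrances traversed, with $b^*$ visited within the component containing $j_m$ between arrival at $j_m$ and the next exit (or the end of the path). The same lifting as above yields a $\shortcut$-path from $j'$ to $j_m$ under $\strat$, and $b^* \in \Reach_{\strat_{j_m}}(j_m) \cap \buchiverts$ in that component forces $b_{j_m} = \top$, so $\tuple{j_m, \tuple{T_{j_m}, \top}} \in \buchiverts_\sd$ is reachable from $j'$ in $\shortcut$. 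The case of a reachable probabilistic vertex $\tuple{j', \tuple{T_{j'}, b_{j'}}}$ is either trivial (if $b_{j'} = \top$) or reduces to the entrance case via a successor. The principal obstacle throughout is managing the two-way correspondence between reachability in $\shortcut$ under $\strat$ and in $\sem\sd$ under $\strat_G$---essentially \emph{inverting} the decomposition underlying \cref{lem:sequential_composition,lem:trace_decomposition} by slicing the global witness into local no-lose segments and re-aligning them with the edge structure of the shortcut graph.
\end{proofs}
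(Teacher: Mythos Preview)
Your proof is correct and follows essentially the same route as the paper: restrict the global winning strategy to per-entrance local strategies, argue each is no-lose, build the shortcut-graph strategy from their effects, and lift the $\buchiverts$-reaching path in $\sem\sd$ to a $\buchiverts_\sd$-reaching path in $\shortcut$ by slicing it at component entrances. One small technical slip: your no-lose argument for $\strat_j$ derives a contradiction with $\strat_G \models^{\sem\sd} \buchiobj\buchiverts$ at some $v \in \Reach^{\omdpc A}_{\strat_j}(j)$, but that contradiction only fires when $v$ is reachable from $i$ under $\strat_G$, so the claim is only justified for entrances $j$ that are themselves reachable from $i$ (the paper accordingly defines its shortcut strategy only on those entrances); since your later reachability correspondence already restricts attention to such $j$, the fix is immediate.
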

\begin{myproof}
Let $\strat$ be a strategy such that $i, \strat \models^{\sem\sd} \buchiobj\buchiverts$, which exists by definition.
Next, we can split $\strat$ into local strategies based on each entrance:
Let
\begin{align*}
\strat_{\enarg j^{\omdpc A}}(v) \defeq \begin{cases}
    \strat(v) & \text{if } v \in \Reach_\strat^{\omdpc A}(j)\\
    \emptyset & \text{otherwise}
\end{cases}.
\end{align*}
Then every entrance $\enarg j^{\omdpc A}$ that is reachable under $\strat$ when starting in $\enarg j^{\omdpc A}$ has a strategy $\strat_{\enarg j^{\omdpc A}}$.
$\strat_{\enarg j^{\omdpc A}}$ is a no-lose strategy: if $\strat_{\enarg j^{\omdpc A}}$ were a losing strategy, it would imply that $\strat$ is losing, which is a contradiction.

Next, we construct a strategy $\strat'$ in $\graph_\sd$:
\begin{align*}
\strat'\pars*{\enarg j^{\omdpc A}} \defeq \Set*{ \tuple*{\enarg j^{\omdpc A}, \solstrat^{\omdpc A}(\enarg j^{\omdpc A}, \strat_{\enarg j^{\omdpc A}})} } \text{ for } \enarg j^{\omdpc A} \in \Reach_\strat^{\sem\sd}(i) \cap \componentsent(\sd).
\end{align*}
We claim that $i, \strat' \models^{\graph_\sd} \buchiobj\buchiverts_\sd$.
We have to show that for all $v$ in $\Reach_{\strat'}^{\graph_\sd}(i)$ we have that $\Reach_{\strat'}^{\graph_\sd}(v) \cap \buchiverts_\sd \ne \emptyset$.
By definition of $\strat'$, we have that $v$ in $\Reach_{\strat'}^{\graph_\sd}(i)$ implies that $v$ is in $\Reach_\strat^{\sem\sd}(i)$.
As $\strat$ is winning, this implies there is a shortest path $\Path \defeq \Path_1\ \Path_2\ \dots\ \Path_n \in \verts^*$ from $v$ to $b \in \buchiverts$ in $\sem\sd$.
Let $\mathit{Verts}(\Path) \defeq \Set*{ \pi_k \mid k \in [n] }$ denote the set of vertices appearing in $\Path$.
If $\mathit{Verts}(\Path) \subseteq \verts^{\omdpc A}$ then, $\Reach^{\omdpc A}_\strat(v) \cap B \ne \emptyset$ and thus $\Reach_{\strat'}^{\graph_\sd}(v) \cap \buchiverts_\sd \ne \emptyset$ as required.
Otherwise, if $\mathit{Verts}(\Path) \not\subseteq \verts^{\omdpc A}$, then let $\omdpc A_1, \dots, \omdpc A_k$ be the roMDPs `containing' $\Path$ and let $v_1, \dots, v_k \defeq \mathit{Verts}(\Path) \cap \componentsent(\sd)$ be the entrance vertices visited in $\Path$.
By definition of the shortcut graph, it follows that $v_1 = v$ and for all $1 \le l < k$ we have that $v_{l+1} \in T_l$ where $\tuple{T_l, \cdot} = \solstrat(v_l, \strat_{v_l})$ and thus $\tuple{v_l, v_{l+1}} \in \edges^{\graph_\sd}$ and hence $v_k \in \Reach_{\strat'}^{\graph_\sd}(v)$.
Finally, it is clear that $v_k$ is contained within $\verts^{\omdpc A_k}$, and hence we can apply the same logic as before to obtain that $\buchiverts_\sd$ is reached.
\end{myproof}

\subsection{Proofs of \cref{sec:refinement_algorithm}}
Recall \cref{lem:strategy_union}:
\lemstrategyunion*
\begin{proof}
Let $\strat \defeq \strat_1 \cup \strat_2$. %
We prove the following statements which together prove the lemma:
\begin{enumerate}
    \item $\strat$ is a no-lose strategy, i.e., for all vertices $v$ in $\Reach_\strat(i)$ we have that $\Reach_\strat(v) \cap (\ex \cup \buchiverts) \ne \emptyset$.

    According to \cref{lem:no_lose_strategy_union_reachable_states} below it suffices to show that for all $v$ in $\Reach_\strat(i)$ we have that $(\Reach_{\strat_1}(v) \cup \Reach_{\strat_2}(v)) \cap (\ex \cup \buchiverts) \ne \emptyset$.
    This holds as $\strat_1$ and $\strat_2$ are no-lose strategies and all vertices are either reachable by $\strat_1$ or $\strat_2$ (\cref{lem:no_lose_strategy_union_reachable_states_init}).
    
    \item $\Exits\strat i = \Exits{\strat_1}i \cup \Exits{\strat_2}i$.
    Follows immediately from \cref{lem:no_lose_strategy_union_reachable_states_init}.
    \item $\Reach_\strat(i) \cap B \ne \emptyset \iff (\Reach_{\strat_1}(i) \cup \Reach_{\strat_2}(i)) \cap B \ne \emptyset$.
    Follows immediately from \cref{lem:no_lose_strategy_union_reachable_states_init}.
\end{enumerate}
\end{proof}

\begin{lemma}%
\label{lem:no_lose_strategy_union_reachable_states}
Let $\strat_1$ and $\strat_2$ be no-lose strategies from entrance $i$. %
Then, for all $v$ in $\Reach_{\strat_1 \cup \strat_2}(i)$ we have that
$\Reach_{\strat_1 \cup \strat_2}(v) \supseteq \Reach_{\strat_1}(v) \cup \Reach_{\strat_2}(v)$.
\end{lemma}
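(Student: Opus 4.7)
The plan is to reduce the statement to a simple monotonicity property of $\Reach$ in the strategy argument. Concretely, I would first establish the following auxiliary fact: for any two strategies $\strat, \strat'$ such that $\strat(w) \subseteq \strat'(w)$ for every $w \in \verts_1$, we have $\Reach_{\strat}(v) \subseteq \Reach_{\strat'}(v)$ for every vertex $v$. This is purely syntactic, independent of any no-lose property.

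To prove the auxiliary fact, I would proceed by induction on the stage of the iterative construction of $\Reach$ given in \cref{def:reachable_vertices}. Write $X_n$ (resp. $X'_n$) for the stage-$n$ set starting from $\{v\}$ under $\strat$ (resp. $\strat'$). The base case $X_0 = \{v\} = X'_0$ is trivial. For the inductive step, note that $\Post_\strat(w) \subseteq \Post_{\strat'}(w)$ for every $w$: if $w \in \verts_P$ both sides equal $\Post(w)$, and if $w \in \verts_1$ the inclusion $\strat(w) \subseteq \strat'(w)$ is the assumption. Hence $\Post_\strat(X_n) \subseteq \Post_{\strat'}(X'_n)$ by the induction hypothesis, giving $X_{n+1} \subseteq X'_{n+1}$ and in the limit $\Reach_\strat(v) \subseteq \Reach_{\strat'}(v)$.

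Applying this auxiliary fact with $\strat' \defeq \strat_1 \cup \strat_2$ and $\strat \defeq \strat_1$ (resp. $\strat \defeq \strat_2$), using that $(\strat_1\cup\strat_2)(w) = \strat_1(w) \cup \strat_2(w) \supseteq \strat_k(w)$ for $k=1,2$, yields $\Reach_{\strat_1 \cup \strat_2}(v) \supseteq \Reach_{\strat_1}(v)$ and $\Reach_{\strat_1 \cup \strat_2}(v) \supseteq \Reach_{\strat_2}(v)$. Taking the union of these two inclusions gives the desired $\Reach_{\strat_1 \cup \strat_2}(v) \supseteq \Reach_{\strat_1}(v) \cup \Reach_{\strat_2}(v)$.

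I do not anticipate any real obstacle: the hypothesis $v \in \Reach_{\strat_1 \cup \strat_2}(i)$ in the statement is never actually used in my argument (the inclusion holds for every $v \in \verts$), and no-lose-ness of $\strat_1, \strat_2$ plays no role either. The only ingredient is the immediate monotonicity of $\Post_{(\cdot)}$ in the strategy, which carries through the fixpoint definition of $\Reach$ by routine induction.
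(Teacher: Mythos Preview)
Your proposal is correct and follows essentially the same approach as the paper: the paper's proof simply observes that $\strat_k(v)\subseteq(\strat_1\cup\strat_2)(v)$ for all $v$ and concludes $\Reach_{\strat_k}(v)\subseteq\Reach_{\strat_1\cup\strat_2}(v)$, which is exactly your monotonicity argument without spelling out the induction on stages. Your remark that neither the no-lose hypothesis nor the assumption $v\in\Reach_{\strat_1\cup\strat_2}(i)$ is used is also accurate and consistent with the paper's proof.
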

\begin{myproof}
    Let $\strat \defeq \strat_1 \cup \strat_2$.
    For all $v$ we have that $\strat_1(v) \subseteq \strat(v)$ and $\strat_2(v) \subseteq \strat(v)$ and hence $\Reach_{\strat_1}(v) \subseteq \Reach_\strat(v)$ and $\Reach_{\strat_2}(v) \subseteq \Reach_\strat(v)$. %
\end{myproof}

\begin{lemma}%
\label{lem:no_lose_strategy_union_reachable_states_init}
Let $\strat_1$ and $\strat_2$ be no-lose strategies for entrance $i$. %
Then:
\[
\Reach_{\strat_1 \cup \strat_2}(i) = \Reach_{\strat_1}(i) \cup \Reach_{\strat_2}(i).
\]
\end{lemma}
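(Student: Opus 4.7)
The plan is to prove the two inclusions separately, leveraging the $i$-locality clause from Def.~\ref{def:no_lose} as the essential ingredient.

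For the $\supseteq$ direction, there is nothing new to do: instantiating Lem.~\ref{lem:no_lose_strategy_union_reachable_states} with $v = i$ gives $\Reach_{\strat_1 \cup \strat_2}(i) \supseteq \Reach_{\strat_1}(i) \cup \Reach_{\strat_2}(i)$ immediately, since $i$ is trivially in $\Reach_{\strat_1 \cup \strat_2}(i)$. Alternatively, the inclusion follows from pointwise monotonicity of $\Reach$ in the strategy.

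For the $\subseteq$ direction, I would proceed by induction on the index $n$ in the fixpoint definition $\Reach_{\strat_1 \cup \strat_2}(i) = \bigcup_n X_n$ from Def.~\ref{def:reachable_vertices}, proving $X_n \subseteq \Reach_{\strat_1}(i) \cup \Reach_{\strat_2}(i)$. The base case $X_0 = \{i\}$ is immediate. For the step, take $v \in X_n$ and $v' \in \Post_{\strat_1 \cup \strat_2}(v)$. If $v \in \verts_P$, then $\Post_{\strat_1 \cup \strat_2}(v) = \Post(v) = \Post_{\strat_k}(v)$ for both $k$, so $v'$ lies in whichever $\Reach_{\strat_k}(i)$ contained $v$. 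If $v \in \verts_1$, then $(\strat_1 \cup \strat_2)(v) = \strat_1(v) \cup \strat_2(v)$; here the $i$-locality of no-lose strategies is crucial: if $v \in \Reach_{\strat_1}(i) \setminus \Reach_{\strat_2}(i)$, locality forces $\strat_2(v) = \emptyset$, so $\Post_{\strat_1 \cup \strat_2}(v) = \strat_1(v) \subseteq \Reach_{\strat_1}(i)$. The symmetric case and the case $v \in \Reach_{\strat_1}(i) \cap \Reach_{\strat_2}(i)$ are analogous.

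The main obstacle is correctly identifying where $i$-locality must be invoked: without it, the equality would fail, because $\strat_2$ could prescribe ``stray'' successors at a vertex reachable only under $\strat_1$, opening up genuinely new paths in the union that are neither in $\Reach_{\strat_1}(i)$ nor $\Reach_{\strat_2}(i)$. The rest of the argument is a routine induction that mirrors the proof of Lem.~\ref{lem:no_lose_strategy_union_reachable_states}, but now must carefully distinguish between the two individual reachability sets rather than only comparing to their union.
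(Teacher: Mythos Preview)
Your proposal is correct and follows essentially the same approach as the paper: both inclusions are proved separately, the $\supseteq$ direction via Lem.~\ref{lem:no_lose_strategy_union_reachable_states}, and the $\subseteq$ direction by exploiting $i$-locality of no-lose strategies. The paper's own proof of the $\subseteq$ direction is a one-liner that simply asserts the consequence of $i$-locality, whereas you spell out the underlying induction on the stages of the $\Reach$ fixpoint; your version is a faithful elaboration of exactly the argument the paper leaves implicit.
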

\begin{myproof}
Let $\strat \defeq \strat_1 \cup \strat_2$.
We prove both inclusions.

$\Reach_\strat(i) \supseteq \Reach_{\strat_1}(i) \cup \Reach_{\strat_2}(i)$ follows immediately from \cref{lem:no_lose_strategy_union_reachable_states}.

$\Reach_\strat(i) \subseteq \Reach_{\strat_1}(i) \cup \Reach_{\strat_2}(i)$ follows from the fact that $\strat_1$ and $\strat_2$ are $i$-local and hence all vertices reachable under $\strat$ are reachable under either $\strat_1$ or $\strat_2$.
\end{myproof}

\clearpage
\subsection{Strategy Refinement Algorithm}\label{sec:app_refinement_alg}
We give pseudocode for the refinement algorithm (without caching) in \cref{alg:refinement}.
The algorithm uses \emph{configurations} $\conf C$ and $\conf C'$, which map component entrances $\enarg i^{\omdpc A}$ to a (single) effect.
Initially, such configurations map each component entrance $\enarg i^{\omdpc A}$ to the biggest effect $\solarg{\omdpc A}\pars*{i, \ex^{\omdpc A}}$.
\begin{algorithm}[H]
\caption{Strategy Refinement Algorithm}
    \begin{algorithmic}[1]
        \Procedure{\refinementalgtext}{$\sd$, $B \subseteq \verts^{\sem\sd}$, $i \in \en^{\sem\sd}$}
            \For{$\enarg i^{\omdpc A} \in \componentsent(\sd)$}
                \State $\conf C'\pars*{\enarg i^{\omdpc A}} \assign \solarg{\omdpc A}\pars*{i, \ex^{\omdpc A}}$
            \EndFor
            \item[]
            
            \Repeat
                \State $\conf C \assign \conf C'$
                \State $\graph_\conf C \assign \graph_\sd \cap \conf C$ \Comment{subgraph of $\graph_\sd$ containing only vertices of $\conf C$}
                \State $\buchiverts_{\conf C} \assign \buchiverts_\sd \cap \conf C$ \Comment{subset of $\buchiverts_\sd$ containing only vertices of $\conf C$}
                \State $Y \assign \canreach(\graph_{\conf C}, \buchiverts_{\conf C})$

                \For{$\enarg i^{\omdpc A} \in Y$}
                    \State $\conf C'\pars*{\enarg i^{\omdpc A}} \assign \solarg{\omdpc A}\pars*{i, \ex^{\omdpc A} \cap Y}$
                \EndFor
            \Until{$\conf C = \conf C'$}

            \State \Return $i \in Y$
        \EndProcedure
\item[]
        \Procedure{\canreachtext}{$\graph$, $\buchiverts$}
            \State $X' \assign \emptyset$
            \Repeat
                \State $X \assign X'$
                \State $X' \assign X \cup \Pre^\graph(B \cup X)$
            \Until{$X = X'$}
            
            \State \Return $X$
        \EndProcedure
    \end{algorithmic}
    \label{alg:refinement}
\end{algorithm}

\subsubsection{Proofs of Caching.}
Recall \cref{lem:biggest_solution_monotone}:
\lembiggestsolutionmonotone*
\begin{myproof}
    In general, for join semilattices, we have that if $E' \subseteq E$ then $\max E' \le \max E$.
    The lemma follows from the fact that the effects of a given entrance form a join semilattice (\cref{lem:lattice}).
\end{myproof}
Recall \cref{lem:solution_reuse}:
\lemsolutionreuse*
\begin{myproof}
We prove the more general statement which proves the lemma:
If $\max X = n$, and $Y \subseteq X$ and $n \in Y$, then $\max Y = n$.
As $Y \subseteq X$, we have that $\max Y \le \max X$.
Thus, $\max Y \le n$, therefore, as $n \in Y$: $\max Y = n$.
\end{myproof}

\fi

\ifwithnotes
\input{notes}
\fi

\end{document}